\newcommand{\ceil}[1]{\lceil #1 \rceil}
\newcommand{\Triples}{\mathsf{zip}}
\newcommand{\R}{\mathbf{R}}
\newcommand{\floor}[1]{\lfloor #1 \rfloor}
\newcommand{\defproblem}[3]{
  \vspace{2mm}
  \begin{center}
  \noindent\fbox{
  \begin{minipage}{0.9\textwidth}
  \vspace{2mm}
  \textbf{#1}

  \smallskip
  \noindent
  {\bf{Input:}} #2
  
  \smallskip
  \noindent
  {\bf{Output:}} #3
  \vspace{2mm}
  \end{minipage}
  }
  \end{center}
  \vspace{2mm}
}
\newcommand{\q}{\hat{q}}
\def\dd{\mathinner{.\,.}} 
\newcommand{\cO}{\mathcal{O}}
\newcommand{\Oh}{\mathcal{O}}
\newcommand{\ctO}{\mathcal{\tilde{O}}}
\newcommand{\cOtilde}{\mathcal{\tilde{O}}}
\newcommand{\ed}{\delta_E}
\newcommand{\rot}{\mathsf{rot}}
\newcommand{\cycoc}{\mathsf{CircOcc}}
\newcommand{\Ext}{\mathsf{Ext}}
\newcommand{\lcut}{\mbox{\sf{L-cut}}}
\newcommand{\rcut}{\mbox{\sf{R-cut}}}
\newcommand{\LCP}{\mathsf{LCP}}
\newcommand{\D}{\mathsf{D}}
\newcommand{\DD}{\mathbf{NonOv}}
\newcommand{\Frag}{\mathit{Frag}}
\newcommand{\First}{\mathit{first}}
\newcommand{\shift}{\mathit{shift}}
\newcommand{\locked}{\mathit{locked}}
\newcommand{\ZZ}{\mathcal{Z}}
\def\EPSM{{\sc PeriodicSubMatch}\xspace}
\newcommand{\Occ}{\textsf{Occ}}
\newcommand{\scope}{\mathsf{scope}}
\newcommand{\CritPos}{\mathsf{CritPos}}
\newcommand{\anchored}{\mathtt{Anchored}}
\newcommand{\anyanchored}{\mathtt{AnyAnchored}}
\newcommand{\implied}{\mathsf{Implied}}
\newcommand{\Chain}{\mathrm{Chain}}
\newcommand{\edl}[2]{{\delta_E}(#1,{}^*\!#2^*)}
\newcommand{\eds}[2]{{\delta_E}(#1,{}^*\!#2)}
\newcommand{\edp}[2]{{\delta_E}(#1,#2^*)}
\newcommand{\Access}[0]{\mathsf{Access}}
\newcommand{\Extract}[0]{\mathsf{Extract}}
\newcommand{\Length}[0]{\mathsf{Length}}
\newcommand{\IPM}[0]{\mathsf{IPM}}
\newcommand{\rright}{\textsf{right}}
\newcommand{\lleft}{\textsf{left}}
\newtheorem{fact}[theorem]{Fact}
\theoremstyle{plain}
\def\pillar{{\tt PILLAR}\xspace}
\title{Approximate Circular Pattern Matching under~Edit~Distance}
\author{Panagiotis Charalampopoulos}{Birkbeck, University of London, UK}{p.charalampopoulos@bbk.ac.uk}{https://orcid.org/0000-0002-6024-1557}{}
\author{Solon P. Pissis}{CWI, Amsterdam, The Netherlands \and Vrije Universiteit, Amsterdam, The Netherlands}{solon.pissis@cwi.nl}{https://orcid.org/0000-0002-1445-1932}{Supported by the PANGAIA and ALPACA projects that have received funding from the European Union’s Horizon 2020 research and innovation programme under the Marie Sk\l{}odowska-Curie grant agreements No 872539 and 956229, respectively.}
\author{Jakub Radoszewski}{University of Warsaw, Poland}{jrad@mimuw.edu.pl}{https://orcid.org/0000-0002-0067-6401}{Supported by the Polish National Science Center, grant no.\ 2022/46/E/{\allowbreak}ST6/00463.}
\author{Wojciech Rytter}{University of Warsaw, Poland}{rytter@mimuw.edu.pl}{https://orcid.org/0000-0002-9162-6724}{}
\author{Tomasz Wale\'n}{University of Warsaw, Poland}{walen@mimuw.edu.pl}{https://orcid.org/0000-0002-7369-3309}{}
\author{Wiktor Zuba}{CWI, Amsterdam, The Netherlands}{wiktor.zuba@cwi.nl}{https://orcid.org/0000-0002-1988-3507}{Received funding from the European Union's Horizon 2020 research and innovation programme under the Marie Skłodowska-Curie grant agreement Grant Agreement No 101034253.}
\authorrunning{P. Charalampopoulos et al.} 
\keywords{circular pattern matching, approximate pattern matching, edit distance}
\begin{document}

\maketitle

\begin{abstract}
In the $k$-Edit Circular Pattern Matching ($k$-Edit CPM) problem, we are given a length-$n$ text~$T$, a length-$m$ pattern $P$, and a positive integer threshold~$k$, and we are to report all starting positions of the substrings of $T$ that are at edit distance at most $k$ from some cyclic rotation of~$P$.
In the decision version of the problem, we are to check if any such substring exists.
Very recently, Charalampopoulos et al.~[ESA 2022] presented $\Oh(nk^2)$-time and $\Oh(nk \log^3 k)$-time solutions for the reporting and decision versions of $k$-Edit CPM, respectively. 
Here, we show that the reporting and decision versions of $k$-Edit CPM can be solved in $\Oh(n+(n/m)\, k^6)$ time and $\Oh(n+(n/m)\, k^5 \log^3 k)$ time, respectively, thus obtaining the first algorithms with a complexity of the type $\Oh(n+(n/m)\, \mbox{poly}(k))$ for this problem.
Notably, our algorithms run in $\cO(n)$ time when $m=\Omega(k^6)$ and are superior to the previous respective solutions when 
$m=\omega(k^4)$.
We provide a meta-algorithm that yields efficient algorithms in several other interesting settings, such as when the strings are given in a compressed form (as straight-line programs), when the strings are dynamic, or when we have a quantum computer.

We obtain our solutions by exploiting the structure of approximate circular occurrences of $P$ in~$T$, when $T$ is relatively short w.r.t. $P$.
Roughly speaking, either the starting positions of approximate occurrences of rotations of $P$ form $\cO(k^4)$ intervals that can be computed efficiently, or some rotation of $P$ is almost periodic (is at a small edit distance from a string with small period).
Dealing with the almost periodic case is the most technically 
demanding part of this work; we tackle it using properties of locked fragments (originating from [Cole and Hariharan, SICOMP 2002]).
\end{abstract}

\section{Introduction}
In the classic pattern matching (PM) problem, we are given
a length-$n$ text~$T$ and a length-$m$ pattern $P$, and we are to report all starting positions (called occurrences) of the fragments of $T$ that are identical to $P$.
This problem can be solved in the optimal $\Oh(n)$ time by, e.g., the famous Knuth-Morris-Pratt algorithm~\cite{DBLP:journals/siamcomp/KnuthMP77}.
In many real-world applications, we are interested in locating not only the fragments of $T$ which are identical to~$P$, but also the fragments of $T$ which are identical to any cyclic rotation of $P$.
In this setting, the rotations of $P$ form an equivalence class, represented by a single circular string. In the circular PM (CPM) problem, we are to report all occurrences of the fragments of $T$ that are identical to some cyclic rotation of $P$. The CPM problem can also be solved in $\Oh(n)$ time~\cite{DBLP:journals/corr/abs-2208-08915}.

Applications where circular strings are considered include the comparison of DNA sequences in bioinformatics~\cite{DBLP:journals/almob/GrossiIMPPRV16,BMCgenomicsAyad2017} as well as the comparison of shapes represented through directional chain codes in image processing~\cite{DBLP:journals/pr/Palazon-GonzalezMV14,DBLP:journals/prl/Palazon-Gonzalez15}.
In both applications, it is not sufficient to look for exact (circular) matches.
In bioinformatics, we need to account for DNA sequence divergence (e.g., in the comparison of different species or individuals); and in image processing, we need to account for small differences in the comparison of images (e.g., in classifying handwritten digits). This gives rise to the notion of edit distance on circular strings~\cite{DBLP:journals/ipl/Maes90,DBLP:journals/prl/AyadBP17}.

 We say that string $U$ is a \emph{(cyclic) rotation} of string $V$ if $U=XY$ and $V=YX$ for some strings $X$, $Y$, and write $V=\rot^i(U)$, where $i=|X|$; e.g., $U=\texttt{abcde},X=\texttt{ab},Y=\texttt{cde},V=\texttt{cdeab}=\rot^2(U)$. The \emph{edit (Levenshtein) distance} $\ed(U,V)$ of two strings $U$ and $V$ is the minimal number of letter insertions, deletions and substitutions required to transform $U$ to~$V$.
For two strings $U$ and $V$ and an integer $k>0$, we write $U =_k V$ if $\ed(U,V) \le k$ and 
we write $U \approx_k V$ if there exists a rotation $U'$ of $U$ such that $U' =_k V$.

For a string $U$ composed of letters $U[0],\ldots,U[|U|-1]$, by $U[i \dd j]=U[i \dd j+1)$ we denote the \emph{fragment of $U$} corresponding to the \emph{substring} $U[i] \cdots U[j]$.
We say that $T[p \dd p']$ is a \emph{circular $k$-edit occurrence} of pattern $P$ if $P \approx_k T[p \dd p']$.
By $\cycoc_k(P,T)$ we denote the set of starting positions of circular $k$-edit occurrences of $P$ in $T$.
Let us define $k$-Edit CPM (cf.\ \cref{fig:ex1}).

\defproblem{$k$-Edit CPM}{
  A text $T$ of length $n$, a pattern $P$ of length $m$, and a positive integer $k$.}{
  A representation of the set $\cycoc_k(P,T)$. ({\bf Reporting version})\\
  \hspace*{1.6cm}   Any position $i \in \cycoc_k(P,T)$, if there is any. ({\bf Decision version})
}

\begin{figure}[h]
\vspace*{-.5cm}
\centering
\begin{tikzpicture}
  \tikzstyle{red}=[color=red!90!black]
  \tikzstyle{darkred}=[color=red!50!black]
  \tikzstyle{blue}=[color=blue!50!black]
  \tikzstyle{black}=[color=black]
  \tikzstyle{green}=[color=green!30!black]
  \definecolor{turq}{RGB}{74,223,208}
  \definecolor{pink}{RGB}{254,193,203}

\begin{scope}[xshift=2.0cm]

    \begin{scope}[yshift=-1.4cm]
  \node at (-4.9,0) [left, above] {$T=$};
  \begin{scope}[xshift=-4.6cm]
  \foreach \c/\s [count=\i from 0] in {c/black,c/black,d/black,d/red,a/black,b/black,a/black,b/black,c/black} {
    \node at (\i * 0.3 + 0.3, 0) [above, \s] {\tt \c};
    \node at (\i * 0.3 + 0.3, 0.1) [below] {\tiny \i};
  }
  \end{scope}
  \end{scope}

  \begin{scope}[yshift=-0.7cm, xshift=-4.3cm]
  \node at (-0.7,-0.1) [above] {$\rot^2(P)=$};
  \draw [fill=green!20!white] (1.04, 0.07) rectangle (1.65, 0.38);
  \draw [fill=turq!20!white] (0.15, 0.07) rectangle (1.04, 0.38);
  \foreach \c/\s [count=\i from 0] in {c/blue,d/blue,-/red,a/green,b/green} {
    \node at (\i * 0.3 + 0.3, 0) [above, \s] {\tt \c};
  }
  \foreach \ii [count=\i from 0] in {2,3, ,0,1} {
    \node at (\i * 0.3 + 0.3, 0.1) [below] {\tiny \ii};
  }
  \end{scope}

\begin{scope}[yshift=-1.4cm]
  \node at (-0.3,0) [left, above] {$T=$};
  \begin{scope}[xshift=-0.3cm]
  \end{scope}
  \foreach \c/\s [count=\i from 0] in {c/black,c/black,d/red,d/black,a/black,b/black,a/black,b/black,c/black} {
    \node at (\i * 0.3 + 0.3, 0) [above, \s] {\tt \c};
    \node at (\i * 0.3 + 0.3, 0.1) [below] {\tiny \i};
  }
  \end{scope}

  \begin{scope}[yshift=-0.7cm, xshift=0.6cm]
  \node at (-0.7,-0.1) [above] {$\rot^2(P)=$};
  \draw [fill=green!20!white] (0.74, 0.07) rectangle (1.35, 0.38);
  \draw [fill=turq!20!white] (0.15, 0.07) rectangle (0.74, 0.38);
  \foreach \c/\s [count=\i from 0] in {c/red,d/blue,a/green,b/green} {
    \node at (\i * 0.3 + 0.3, 0) [above, \s] {\tt \c};
  }
  \foreach \ii [count=\i from 0] in {2,3,0,1} {
    \node at (\i * 0.3 + 0.3, 0.1) [below] {\tiny \ii};
  }
  \end{scope}

  \begin{scope}[yshift=-1.4cm]
  \node at (4.3,0) [left, above] {$T=$};
  \begin{scope}[xshift=4.6cm]
  \foreach \c/\s [count=\i from 0] in {c/black,c/black,d/black,d/black,a/black,b/black,a/red,b/black,c/black} {
    \node at (\i * 0.3 + 0.3, 0) [above, \s] {\tt \c};
    \node at (\i * 0.3 + 0.3, 0.1) [below] {\tiny \i};
  }
  \end{scope}
  \end{scope}

  \begin{scope}[yshift=-0.7cm, xshift=5.5cm]
  \node at (-0.7,-0.1) [above] {$\rot^3(P)=$};
  \draw [fill=green!20!white] (0.44, 0.07) rectangle (1.35, 0.38);
  \draw [fill=turq!20!white] (0.15, 0.07) rectangle (0.44, 0.38);
  \foreach \c/\s [count=\i from 0] in {d/blue,a/green,b/green,c/red} {
    \node at (\i * 0.3 + 0.3, 0) [above, \s] {\tt \c};
  }
  \foreach \ii [count=\i from 0] in {3,0,1,2} {
    \node at (\i * 0.3 + 0.3, 0.1) [below] {\tiny \ii};
  }
  \end{scope}

  \begin{scope}[yshift=-3.4cm]
  \node at (-4.9,0) [left, above] {$T=$};
  \begin{scope}[xshift=-4.6cm]
  \foreach \c/\s [count=\i from 0] in {c/black,c/black,d/black,d/black,a/black,b/red,a/black,b/black,c/black} {
    \node at (\i * 0.3 + 0.3, 0) [above, \s] {\tt \c};
    \node at (\i * 0.3 + 0.3, 0.1) [below] {\tiny \i};
  }
  \end{scope}
  \end{scope}

  \begin{scope}[yshift=-2.7cm, xshift=-3.1cm]
  \node at (-0.7,-0.1) [above] {$\rot^3(P)=$};
  \draw [fill=green!20!white] (0.44, 0.07) rectangle (1.35, 0.38);
  \draw [fill=turq!20!white] (0.15, 0.07) rectangle (0.44, 0.38);
  \foreach \c/\s [count=\i from 0] in {d/red,a/green,b/green,c/green} {
    \node at (\i * 0.3 + 0.3, 0) [above, \s] {\tt \c};
  }
  \foreach \ii [count=\i from 0] in {3,0,1,2} {
    \node at (\i * 0.3 + 0.3, 0.1) [below] {\tiny \ii};
  }
  \end{scope}

  \begin{scope}[yshift=-3.4cm]
  \node at (-0.3,0) [left, above] {$T=$};
  \begin{scope}[xshift=0cm]
  \foreach \c/\s [count=\i from 0] in {c/black,c/black,d/black,d/black,a/black,b/black,a/black,b/black,c/black,-/red} {
    \node at (\i * 0.3 + 0.3, 0) [above, \s] {\tt \c};
  }
    \foreach \c/\s [count=\i from 0] in {c/black,c/black,d/black,d/black,a/black,b/black,a/black,b/black,c/black} {
   \node at (\i * 0.3 + 0.3, 0.1) [below] {\tiny \i};
  }
  \end{scope}
  \end{scope}

  \begin{scope}[yshift=-2.7cm, xshift=1.8cm]
  \node at (-0.3,0) [above] {$P=$};
  \draw [fill=green!20!white] (0.15, 0.07) rectangle (1.35, 0.38);
  \foreach \c/\s [count=\i from 0] in {a/green,b/green,c/green,d/red} {
    \node at (\i * 0.3 + 0.3, 0) [above, \s] {\tt \c};
  }
  \foreach \ii [count=\i from 0] in {0,1,2,3} {
    \node at (\i * 0.3 + 0.3, 0.1) [below] {\tiny \ii};
  }
  \end{scope}

  \end{scope}  
\end{tikzpicture}
\caption{Illustration of the 1-edit circular occurrences of pattern $P=\mathtt{abcd}$ in text $T=\mathtt{ccddababc}$.  
We have $\cycoc_1(P,T)\,=\,\{1,2,3,5,6\}$. The letters involved in an edit operation are coloured red.}\label{fig:ex1}
\end{figure}
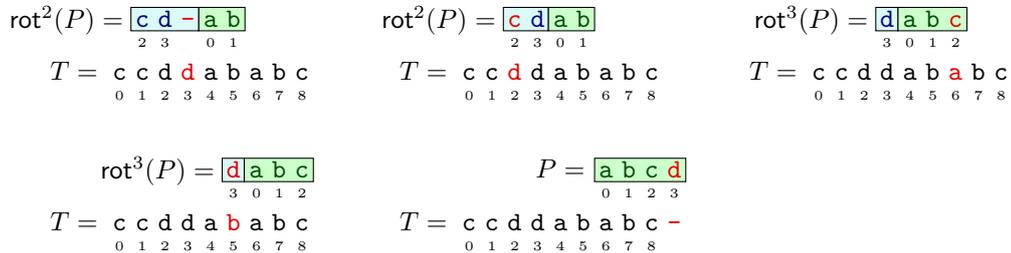
\vspace{-0.5cm}
\subparagraph*{Related work.}
The \emph{Hamming distance} of two equal-length strings $U$ and $V$ is the number of mismatches between $U$ and $V$; that is, the minimal number of letter substitutions required to transform $U$ to $V$.
Accounting for surplus or missing letters on top of substitutions poses significant challenges.
For example, the Hamming distance of two length-$n$ strings can be computed in $\cO(n)$ time with a trivial algorithm,
while it is known that their edit distance cannot be computed in $\cO(n^{2-\epsilon})$ time, for any $\epsilon>0$, under the Strong Exponential Time Hypothesis \cite{DBLP:journals/siamcomp/BackursI18}.
The situation is similar for (non-circular) approximate pattern matching.
The $k$-Mismatch PM problem is quite well-understood as the upper bound of $\cOtilde(n + k n /\sqrt{m})$
due to Gawrychowski and Uznański \cite{DBLP:conf/icalp/GawrychowskiU18},
who provided a smooth tradeoff between the algorithms of Amir et al.~\cite{DBLP:journals/jal/AmirLP04} with running time $\cOtilde(n\sqrt{k})$ and Clifford et al.~\cite{DBLP:conf/soda/CliffordFPSS16} with running time $\cOtilde(n+(n/m)k^2)$, is matched by a lower bound for so-called
``combinatorial'' algorithms.\footnote{Throughout this work, the $\cOtilde(\cdot)$ notation hides factors polylogarithmic in the length of the input strings.}
Algorithms that are faster by polylogarithmic factors have been presented in~\cite{DBLP:conf/stoc/ChanGKKP20,DBLP:conf/focs/Chan0WX23,FOCS20}. 
In contrast, the complexity of the $k$-Edit PM problem is not yet settled: the current records are the classic $\cO(nk)$-time algorithm of
Landau and Vishkin~\cite{DBLP:journals/jal/LandauV89} and the very recent $\cOtilde(n+(n/m)k^{3.5})$-time algorithm of Charalampopoulos et al.~\cite{FOCS22} improving the classic $\Oh(n+(n/m)k^4)$-time algorithm of Cole and Hariharan~\cite{DBLP:journals/siamcomp/ColeH02}.
However, there is no known lower bound for $k$-Edit PM ruling out an $\cO(n+(n/m)k^{2})$-time algorithm.

Recent results in pattern matching under both the Hamming distance and the edit distance for various settings~\cite{DBLP:journals/corr/abs-2402-07732,DBLP:conf/isaac/BathieKS23,DBLP:conf/soda/BringmannWK19,JCSS21,ESA22,FOCS20,FOCS22,DBLP:conf/cpm/CliffordGK0U22,DBLP:conf/soda/JinN23,DBLP:conf/focs/KociumakaPS21,DBLP:journals/corr/abs-2311-07415} were fuelled by a novel characterization of the structure of approximate occurrences. It is folklore knowledge that if $n\leq 3m/2$, either pattern~$P$ has a single exact occurrence in $T$ or both $P$ and the portion of~$T$ spanned by occurrences of $P$
are periodic (with the same period).
In 2019, Bringmann et al.~\cite{DBLP:conf/soda/BringmannWK19} showed that either $P$ has \emph{few} approximate occurrences (under the Hamming distance) or it is \emph{approximately periodic}.
Later, Charalampopoulos et al.~\cite{FOCS20} tightened this result and proved an analogous statement for approximate occurrences under the edit distance.

Let us now focus on approximate circular pattern matching.
The CPM problem under the Hamming distance is called the \emph{$k$-Mismatch CPM} problem. 
An $\Oh(nk)$-time algorithm and an $\cOtilde(n+(n/m)k^3)$-time algorithm were proposed for the reporting version of $k$-Mismatch CPM by Charalampopoulos et al.\ in \cite{JCSS21} and \cite{ESA22}, respectively, whereas an $\cOtilde(n+(n/m)k^2)$-time algorithm for its decision version was given in \cite{ESA22}.
Further, the authors of~\cite{DBLP:journals/almob/BartonIP14,DBLP:conf/wea/HirvolaT14a} presented efficient average-case algorithms for $k$-Mismatch CPM.
The $k$-Edit CPM problem was considered in~\cite{ESA22}, where an $\Oh(nk^2)$-time algorithm and an $\Oh(nk\log^3 k)$-time algorithm were presented for the reporting and decision version, respectively.
Until now, no algorithm with worst-case runtime $\Oh(n+(n/m)k^{\Oh(1)})$ was known for $k$-Edit CPM. Such an algorithm is superior over $\Oh(nk^{\Oh(1)})$-time algorithms when the number of allowed errors is small in comparison to the length of the pattern.
Here, we propose the first such algorithms.

\subparagraph*{Our result.}
In order to represent the output of our algorithm compactly, we need the notion of an \emph{interval chain}.
For two integer sets $A$ and $B$, let $A\oplus B = \{a + b\::\; a\in A,\, b \in B\}$.
We extend this notation for an integer $b$ to $A\oplus b = b\oplus A= A \oplus \{b\}$.
An interval chain for an interval $I$ and non-negative integers $a$ and $q$ is a set of the form \[\Chain(I,a,q)\,=\,I\,\cup\, (I\oplus q)\,\cup\, (I\oplus 2q)\,\cup \dots\cup\, (I\oplus aq).\]
Here $q$ is called the \emph{difference} of the interval chain.
For example the set of underlined intervals in \cref{June15} corresponds to $\Chain([3\dd 8],2,8)=[3\dd 8]\cup[11\dd 16]\cup[19\dd 24]$.  

Our main algorithmic result can be stated as follows (cf.\ \cref{tab:history}).

\begin{theorem}\label{thm:main}
The reporting version of the $k$-Edit CPM problem can be solved in $\Oh(n+(n/m)k^6)$ time, with the output represented as a union of $\Oh((n/m)k^6)$ interval chains.
The decision version of the $k$-Edit CPM problem can be solved in $\Oh(n+(n/m)k^5 \log^3 k)$ time.
\end{theorem}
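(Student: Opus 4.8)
The plan is to first reduce the problem to the case of a short text by a standard windowing argument, and then, for a short text, to prove and exploit the structural dichotomy announced in the abstract: either the circular occurrences form few intervals, or some rotation of~$P$ is almost periodic.

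\textbf{Reduction to $n=\Oh(m)$.} Every circular $k$-edit occurrence of~$P$ in~$T$ has length between $m-k$ and $m+k\le 2m$. Hence, if we cover $T$ by $\Oh(n/m)$ windows of length $3m$ with consecutive windows overlapping in $2m$ positions, each such occurrence is contained in one window and is ``owned'' by the window whose first third contains its starting position. It therefore suffices to solve $k$-Edit CPM for a text of length $\Oh(m)$ in $\Oh(m+k^6)$ time (reporting), producing $\Oh(k^6)$ interval chains, and in $\Oh(m+k^5\log^3 k)$ time (decision); summing over the $\Oh(n/m)$ windows and discarding the easily-identified duplicates on the overlaps then yields the claimed bounds. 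All remaining work concerns a single short text~$T$.

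\textbf{The dichotomy and the non-periodic case.} The length-$m$ rotations of~$P$ are exactly the fragments $(PP)[i\dd i+m)$, so a circular $k$-edit occurrence of~$P$ is a fragment of~$T$ that is $k$-close to a fragment of $PP$ of length close to~$m$. Building on the edit-distance structural theorem of~\cite{FOCS20}, and using the \modelname{} primitives together with the $k$-Edit PM toolbox of~\cite{DBLP:journals/siamcomp/ColeH02}, we decide in $\Oh(m+\mathrm{poly}(k))$ time whether \emph{(i)}~no rotation of~$P$ lies at edit distance $\Oh(k)$ from a string of period $\Oh(k)$, or \emph{(ii)}~some rotation does. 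In case~\emph{(i)} we reduce circular matching to a bounded number of \emph{anchored} instances (each fixing an alignment of one pattern position with one text position and one candidate rotation point); for every fixed rotation the pattern is far from periodic, so by the structural theorem it has only $\Oh(k^2)$ $k$-edit occurrences, which can be found by Landau--Vishkin-type verification~\cite{DBLP:journals/jal/LandauV89} at $\Oh(k^2)$ anchors. Tallying the anchors and rotation candidates, $\cycoc_k(P,T)$ is then a union of $\Oh(k^4)$ (singleton) interval chains, well within the budget.

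\textbf{The almost-periodic case (the main obstacle).} This is the technically hard part. Fix a rotation of~$P$ at edit distance $\Oh(k)$ from a string of period $q=\Oh(k)$. Then the circular occurrences of~$P$ in~$T$ themselves align with a length-$q$ periodic structure, which is exactly why the output is expressed through interval chains (of difference~$q$): the starting positions cluster into $\Oh(\mathrm{poly}(k))$ such chains. To compute them, we decompose each relevant rotation of~$P$, and the corresponding periodic region of~$T$, into its periodic part and $\Oh(k)$ \emph{locked fragments} --- the non-periodic anomalies, in the sense of~\cite{DBLP:journals/siamcomp/ColeH02} --- and read off the chains from how the locked fragments of~$T$ align with those of the rotation. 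The difficulties to overcome are that the rotation point may lie inside or outside the periodic region and may split a single period, that distinct rotation points induce distinct alignments of the locked fragments, and that chains produced for different rotations must be merged consistently without over- or under-counting; controlling these interactions is what costs the extra polynomial factors and yields $\Oh(k^6)$ chains and an $\Oh(m+k^6)$-time algorithm. For the decision version, every reporting subroutine --- the Landau--Vishkin verification and the locked-fragment bookkeeping --- is replaced by its decision-only counterpart, which is faster by a factor of about $k/\log^3 k$, exactly as in the non-circular setting; this gives $\Oh(m+k^5\log^3 k)$ per window, hence $\Oh(n+(n/m)k^5\log^3 k)$ overall.
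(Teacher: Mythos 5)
Your windowing reduction and the announced dichotomy are the right top-level frame, and they do match the paper's skeleton. But the proposal has a concrete error in the structural statement and then essentially stops short of the hard part.

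\textbf{The period bound is wrong.} You state the dichotomy as ``some rotation of~$P$ lies at edit distance $\Oh(k)$ from a string of period $\Oh(k)$''. The correct bound, both in the paper's~\cref{thm:comb} and in the underlying structural theorem from~\cite{FOCS20}, is period $\Oh(m/k)$: specifically the algorithm produces a primitive $Q$ with $|Q|\le m/(256k)$. This is not cosmetic. The entire almost-periodic machinery hinges on the pattern containing at least $256k$ copies of $Q$, which is what $|Q|\le m/(256k)$ buys (and what the applications of \cref{lem:locked_deomcp} and the pigeonhole argument behind the sample fragment require). With a putative period $\Oh(k)$ you'd need a different structural theorem, which you don't supply; for $k\ll\sqrt m$ the two bounds are far apart.

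\textbf{The non-periodic case reasoning is muddled.} You argue that ``for every fixed rotation the pattern is far from periodic, so... it has only $\Oh(k^2)$ $k$-edit occurrences.'' Iterating over all $m$ rotations would be far too slow. The paper instead splits $P=P_1P_2$, notes that any circular $k$-edit occurrence implies an ordinary $k$-edit occurrence of $P_1$ or $P_2$, and in the non-periodic case computes $\Oh(k^2)$ \emph{anchors} (positions where a half of $P$ approximately occurs, plus occurrences of certain repetitive regions). For each anchor the set $\anchored_k(P,T,i)$ is returned in $\Oh(k^2)$ time as $\Oh(k^2)$ intervals; this is what yields the $\Oh(k^4)$ count. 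You land on the right number but by a route that doesn't go through.

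\textbf{The almost-periodic case is where the proof actually lives, and the proposal does not prove it.} ``Read off the chains from how the locked fragments of $T$ align with those of the rotation'' is a description of the difficulty, not a solution. The paper's treatment constructs an auxiliary problem (\EPSM) with almost-$Q$-periodic $U,V$ and an approximate-congruence constraint on offsets; proves (\cref{clm:rots}) that the relevant occurrences lie inside the extracted $U$; splits offsets into \emph{overlap} and \emph{non-overlap} depending on whether any pair of locked fragments (one from $U$, one from $V$) would be forced to collide; handles the overlap case by exhibiting $\Oh(k^2+k^2t/q)$ short intervals of candidate anchors; and handles the non-overlap case by introducing a \emph{sample} fragment $Q^{k+1}$, \emph{critical positions}, and a careful \emph{locked-equivalence} argument (\cref{lem:greedy}, \cref{lem: non-overlap pointwise}, \cref{lem:left_right}) showing the edit distance is invariant under $q$-shifts, which is what justifies compressing the answer into interval chains of difference $q$. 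None of this is recoverable from your sketch, and it is the genuinely new content of the paper. Likewise, the decision speedup of $k/\log^3 k$ comes from a specific subroutine (\cref{lem:decision-anchored}) that batches an interval of $\Oh(k)$ anchors into one $\Oh(k^2\log^3 k)$-time test; ``exactly as in the non-circular setting'' understates what has to be verified.

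In short: correct outline, wrong period regime, and the core argument (the reduction to \EPSM{} and the overlap/non-overlap analysis via locked fragments, the sample, and critical positions) is asserted rather than proved.
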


The following notion of an \emph{anchor} (see also \cref{fig:ex}) is crucial for understanding the structure of (approximate) circular pattern matching.

\begin{definition}\label{def:anchored}
A circular $k$-edit occurrence $T[p\dd p']$ of $P$ is \emph{anchored} at position~$i$ (called  {\bf anchor}) if 
$\ed(T[p\dd i), Y) + \ed(T[i\dd p'], X) \leq k$, where $P=XY$ for some $X,Y$.
We denote 
\[\anchored_k(P,T,i)\,=\, \{\,p\,:\, T[p\dd p'] \ \mbox{is anchored at}\ i\ 
\mbox{for some}\ p'\,\}.\]
\end{definition}

\smallskip
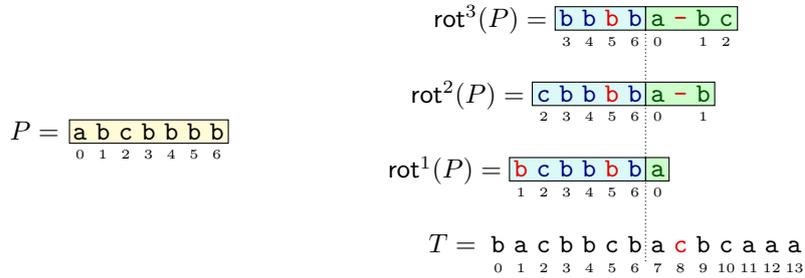
\begin{figure}[h]
\vspace*{-.4cm}
\centering
\begin{tikzpicture}
  \tikzstyle{red}=[color=red!90!black]
  \tikzstyle{darkred}=[color=red!50!black]
  \tikzstyle{blue}=[color=blue!50!black]
  \tikzstyle{black}=[color=black]
  \tikzstyle{green}=[color=green!30!black]
  \definecolor{turq}{RGB}{74,223,208}
  \definecolor{pink}{RGB}{254,193,203}

  \begin{scope}[xshift=-3.5cm,yshift=-0.5cm]
  \node at (-0.3,0) [left, above] {$P=$};
  \draw [fill=yellow!20!white] (0.16, 0.07) rectangle (2.25, 0.38);
  \foreach \c/\s [count=\i from 0] in {a/green,b/green,c/green,b/blue,b/blue,b/blue,b/blue} {
    \node at (\i * 0.3 + 0.3, 0) [above] {\tt \c};
    \node at (\i * 0.3 + 0.3, 0.1) [below] {\tiny \i};
  }
  \end{scope}

\begin{scope}[xshift=2.0cm]
  \begin{scope}[yshift=-2cm]
  \node at (-0.3,0) [left, above] {$T=$};
  \begin{scope}[xshift=-0.3cm]
  \end{scope}
  \foreach \c/\s [count=\i from 0] in {b/black,a/black,c/black,b/black,b/black,c/black,b/black,a/black,c/red,b/black,c/black,a/black,a/black,a/black} {
    \node at (\i * 0.3 + 0.3, 0) [above, \s] {\tt \c};
    \node at (\i * 0.3 + 0.3, 0.1) [below] {\tiny \i};
  }
  \end{scope}
  
  \begin{scope}[yshift=1cm, xshift=0.9cm]
  \node at (-0.7,-0.1) [above] {$\rot^3(P)=$};
  \draw [fill=green!20!white] (1.34, 0.07) rectangle (2.55, 0.38);
  \draw [fill=turq!20!white] (0.15, 0.07) rectangle (1.34, 0.38);
  \foreach \c/\s [count=\i from 0] in {b/blue,b/blue,b/red,b/blue,a/green,-/red,b/green,c/green} {
    \node at (\i * 0.3 + 0.3, 0) [above, \s] {\tt \c};
  }
  \foreach \ii [count=\i from 0] in {3,4,5,6,0,\,,1,2} {
    \node at (\i * 0.3 + 0.3, 0.1) [below] {\tiny \ii};
  }
  \draw[densely dotted] (1.34,0.07) -- (1.34,-3);
  \end{scope}
    
  \begin{scope}[yshift=0cm, xshift=0.6cm]
  \node at (-0.7,-0.1) [above] {$\rot^2(P)=$};
  \draw [fill=green!20!white] (1.64, 0.07) rectangle (2.55, 0.38);
  \draw [fill=turq!20!white] (0.15, 0.07) rectangle (1.64, 0.38);
  \foreach \c/\s [count=\i from 0] in {c/blue,b/blue,b/blue,b/red,b/blue,a/green,-/red,b/green} {
    \node at (\i * 0.3 + 0.3, 0) [above, \s] {\tt \c};
  }
  \foreach \ii [count=\i from 0] in {2,3,4,5,6,0,\,,1} {
    \node at (\i * 0.3 + 0.3, 0.1) [below] {\tiny \ii};
  }
  \end{scope}

    \begin{scope}[yshift=-1cm, xshift=0.3cm]
  \node at (-0.7,-0.1) [above] {$\rot^1(P)=$};
  \draw [fill=green!20!white] (1.94, 0.07) rectangle (2.25, 0.38);
  \draw [fill=turq!20!white] (0.15, 0.07) rectangle (1.94, 0.38);
  \foreach \c/\s [count=\i from 0] in {b/red,c/blue,b/blue,b/blue,b/red,b/blue,a/green,} {
    \node at (\i * 0.3 + 0.3, 0) [above, \s] {\tt \c};
  }
  \foreach \ii [count=\i from 0] in {1,2,3,4,5,6,0} {
    \node at (\i * 0.3 + 0.3, 0.1) [below] {\tiny \ii};
  }
  \end{scope}

  \end{scope}
  
\end{tikzpicture}
\caption{
The starting positions of circular 2-edit occurrences of pattern $P$ anchored at position 7 in text $T$ are $\anchored_2(P,T,7)=\{0,1,2,3,4\}$; the occurrences at positions $1,2,3$ are shown.
}\label{fig:ex}
\end{figure}

\begin{example}
Let $P=\mathtt{a}^{99}\,\mathtt{b}$ and $T=P^2$.
Then $|\cycoc_0(P,T)|=101$, while we have only two anchors (0 and 100).
\end{example}

Our algorithm exploits the approximate periodic structure of the two strings in scope. On the way to our main algorithmic result we prove 
(in the end of~\cref{sec:reduction}) the following structural result for $k$-Edit CPM:

\begin{restatable}{theorem}{thmcomb}\label{thm:comb}
Consider a pattern $P$ of length $m$, a positive integer threshold $k$, and a text~$T$ of length $n \leq cm+k$, for a constant $c \ge 1$.
Then, either there are only $\cO(k^2)$ anchors of circular $k$-edit occurrences of $P$ in $T$ or
some rotation of $P$ is at edit distance $\cO(k)$ from a string with period $\Oh(m/k)$.
\end{restatable}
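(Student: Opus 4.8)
\emph{Setup and reductions.}
The plan is to reduce the statement to (non-circular) approximate pattern matching under edit distance and then feed it into the structural dichotomy quoted above.
First dispose of degenerate cases: if $m=\cO(k^2)$ then $n\le cm+k=\cO(k^2)$, so there are trivially $\cO(k^2)$ anchors (there are at most $n+1$ of them); and if $m=\cO(k)$ or $k$ is below a constant, $\mathtt a^m$ is within $\cO(k)$ edits of $P$ and has period $1=\cO(m/k)$.
So assume $m=\omega(k^2)$ and $k$ above a suitable constant; in particular $k=o(m/k)$.
Put $P_1=P[0\dd\ceil{m/2})$ and $P_2=P[\ceil{m/2}\dd m)$.
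If $T[p\dd p']$ is a circular $k$-edit occurrence anchored at $i$ via $P=XY$, then $|X|\ge m/2$ or $|Y|\ge m/2$; call $i$ \emph{prefix-type} in the first case and \emph{suffix-type} in the second.
Restricting an optimal alignment of $T[i\dd p']$ to $X$ down to the prefix $P_1$ of $X$ shows that a prefix-type anchor $i$ is a starting position of a $k$-edit occurrence of $P_1$ in $T$; symmetrically, a suffix-type anchor is an ending position of a $k$-edit occurrence of $P_2$.
Reversing both $P$ and $T$ swaps the two types and preserves both alternatives, so we may assume at least half of the anchors are prefix-type.

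\emph{Applying the dichotomy.}
Cover $T$ by $\cO(c)$ windows of length $\tfrac32|P_1|$, consecutive windows overlapping in more than $|P_1|+k$ positions so every $k$-edit occurrence of $P_1$ fits in a window, and invoke the structural characterisation of approximate occurrences under edit distance~\cite{FOCS20} for $P_1$ in each window.
Each window is either in the ``few occurrences'' regime ($\cO(k^2)$ occurrences of $P_1$) or produces a primitive string $Q$ of length $\rho=\cO(m/k)$ such that both $P_1$ and the portion of $T$ it spans are within $\cO(k)$ edits of a power of $Q$.
If all windows are in the former regime, there are $\cO(k^2)$ prefix-type anchors, hence (by symmetry) $\cO(k^2)$ anchors in total --- the first alternative.
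Otherwise we are left with $\Omega(k^2)$ prefix-type anchors and a period $\rho=\cO(m/k)$ certifying that $P_1$, and a length-$\Theta(m)$ fragment of $T$, are $\cO(k)$-close to $\rho$-periodic strings.

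\emph{Upgrading to a rotation of $P$.}
For each of these $\Omega(k^2)$ prefix-type anchors write $P=X_iY_i$ with $a_i=|X_i|\ge m/2$, so the anchor certifies $\ed(\rot^{a_i}(P),T[p_i\dd p'_i])\le k$ for a length-$\Theta(m)$ fragment of $T$.
If all but $\cO(k^2)$ of the $a_i$ agree up to $\cO(k)$, say with a common value $a^*$, then the corresponding fragments of $T$ are all within $\cO(k)$ edits of the single length-$m$ string $\rot^{a^*}(P)$; re-covering $T$ by $\cO(c)$ windows of length $\tfrac32 m$ and applying~\cite{FOCS20} to $\rot^{a^*}(P)$ with error $\cO(k)$ (the $\Omega(k^2)$ occurrences beat the $\cO(k^2)$ threshold, after suitably enlarging the starting threshold) shows $\rot^{a^*}(P)$ is $\cO(k)$-close to an $\cO(m/k)$-periodic string.
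Otherwise the $a_i$ spread out, and a pigeonhole argument produces thresholds $Y<X$ with $X-Y=\Theta(\rho)$, with $\Omega(k^2)$ anchors of amount $\ge X$ and $\Omega(k^2)$ of amount $\le Y$: the former are starting positions of $k$-edit occurrences of the prefix $P[0\dd X)$, so $P[0\dd X)$ is $\cO(k)$-close to a $\rho$-periodic string; the suffix parts $Y_i$ of the latter all contain $P[Y\dd m)$, so $P[Y\dd m)$ is likewise $\cO(k)$-close to a $\rho$-periodic string.
These two periodic regions overlap in $\Theta(\rho)$ positions, which (as $m=\omega(k^2)$, so $\rho\gg k$) exceeds the sum of the two periods plus $\cO(k)$; an approximate edit-distance Fine--Wilf argument then reconciles the periods and shows $P$ (as a rotation of itself) is $\cO(k)$-close to an $\cO(m/k)$-periodic string.

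\emph{Main obstacle.}
The crux is this last step: lifting approximate periodicity of a \emph{half} of $P$ to a full rotation while keeping the edit distance at $\cO(k)$ rather than $\cO(m/k)$.
The naive route --- taking two anchors close in position and in rotation amount and reading off a self-overlap of the rotation they induce --- leaves an uncontrolled piece of length $\cO(m/k)$ at each end and costs $\cO(m/k)$ edits, which is too much for $m=\omega(k^2)$.
The argument above avoids this by arranging that the \emph{same} near-periodic string governs a long prefix and a long suffix of $P$ with a long overlap, so a periodicity-reconciliation lemma can glue them at no extra cost; making this precise requires (i) the quantitative form of~\cite{FOCS20} that also makes the spanned part of $T$ near-periodic, (ii) a careful case analysis on the multiset $\{a_i\}$ --- isolating the ``all amounts nearly equal'' case and feeding the common rotation directly into~\cite{FOCS20} --- and (iii) a quantitative edit-distance periodicity lemma for gluing periods that agree on a sufficiently long window.
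The intermediate regimes, where the amounts cluster in a window of length between $\Theta(k)$ and $\Theta(m/k)$, are the most delicate and absorb most of the technical effort.
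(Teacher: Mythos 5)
Your initial reduction coincides with the paper's: split $P = P_1 P_2$, note that each anchor is prefix-type or suffix-type (the paper phrases this via $\implied_k(P_1,\cdot)$ and symmetry), observe that a prefix-type anchor $i$ is literally a starting position of a $k$-edit occurrence of $P_1$, cover $T$ by $\cO(1)$ windows, and invoke the structural dichotomy of~\cite{FOCS20} for $P_1$. Where you then diverge, you run into the same obstacle the paper is designed to avoid, and you do not resolve it.

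The paper's proof of \cref{thm:comb} \emph{never} produces two separately-certified near-periodic pieces of $P$ that must later be glued by a Fine--Wilf argument. Instead, once $P_1$ is found to be almost $Q$-periodic (with $\edp{P_1}{Q} < 2k$), the function \textsf{ComputeV} extends the occurrence of $P_1$ inside $P_2P_1P_2$ to the left and to the right \emph{with respect to that same $Q$}, one edit at a time, using the \texttt{EditGenerator} of \cref{lem:misope}, and \emph{stops the moment a repetitive region is detected} (\cref{def:repreg_edit}): at most $\ceil{8k|R|/m}\le 8k$ edits on each side, giving $\edl{V}{Q}\le 16k$. If $|V|\ge m$, any length-$m$ substring of $V\subseteq P_2P_1P_2$ is a rotation of $P$ that is $16k$-close to a $Q$-periodic string, and we are done. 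If $|V|<m$, both extensions stopped at repetitive regions $R_{\lleft}, R_{\rright}$, each with $\cO(k^2)$ occurrences in $T$ by \cref{repreg}, and every rotation of $P$ containing $P_1$ contains one of them — giving $\cO(k^2)$ anchors. This is precisely what lets the paper keep the edit budget at $\cO(k)$ rather than $\cO(m/k)$, the very issue you flag as ``the crux''.

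Your alternative — pigeonholing on the rotation amounts $a_i$ — leaves several real gaps. (i) In the ``most $a_i$ agree with $a^*$'' case, you claim the $\Omega(k^2)$ anchors yield $\Omega(k^2)$ occurrences of $\rot^{a^*}(P)$, but they do not: while each anchor $i$ \emph{is} the starting position of a $k$-edit occurrence of $P_1$ (so the anchors-to-$P_1$-occurrences map is injective), the map from anchors to starting positions $p_i$ of the corresponding $\cO(k)$-edit occurrences of $\rot^{a^*}(P)$ collapses up to $\Theta(k)$ anchors onto the same $p$ (since $p_i = i-(m-a^*)\pm\cO(k)$ and anchors within $\cO(k)$ of each other can collide). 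So you are only guaranteed $\Omega(k)$ distinct starting positions, which does not exceed the $\Theta(k^2)$ threshold of \cref{mt}; ``suitably enlarging the starting threshold'' to compensate would weaken the first alternative to $\cO(k^3)$ anchors, which is not the claimed bound. (ii) In the ``spread out'' case, you need to reconcile a period certified for $P[0\dd X)$ with a potentially \emph{different} period certified for $P[Y\dd m)$, over an overlap $P[Y\dd X)$ of length $\Theta(m/k)$; but the two period lengths can themselves be $\Theta(m/k)$ and the overlap carries $\cO(k)$ edit errors w.r.t.\ each, so the error-free sub-windows of the overlap have length only $\Theta(m/k^2)$ on average — not enough for any exact Fine--Wilf argument, and you do not supply the quantitative edit-distance periodicity lemma your sketch calls for. (iii) You explicitly defer the regime where the $a_i$ cluster in a window of width between $\Theta(k)$ and $\Theta(m/k)$ as ``the most delicate'' without an argument; that is not a detail but a missing case. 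In short, your route would require a nontrivial gluing lemma that the paper deliberately sidesteps by extending periodicity within the doubled pattern under a repetitive-region stopping rule, so as stated the proof does not go through.
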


\begin{table}
\begin{center}\scalebox{1.02}{
\begin{tabular}{||c c c || c c c||} 
 \hline
  $k$-\textbf{Edit} \textbf{PM} & \textbf{Reference} & \textbf{Note} & $k$-\textbf{Edit} \textbf{CPM} & \textbf{Reference} & \textbf{Note} \\  
 \hline\hline
  $\cO(n^2)$ & \cite{DBLP:journals/jal/Sellers80} & for any $k$ & $\cO(nk^2)$ & \cite{ESA22} &  \\
   $\cO(nk^2)$ & \cite{DBLP:journals/jcss/LandauV88} &  & $\ctO(nk)$ & \cite{ESA22} & decision \\
  $\cO(nk)$ & \cite{DBLP:journals/jal/LandauV89} &  & $\Oh(n+k^6 \cdot n/m)$ & \textbf{This work} & \\
   $\ctO(n+k^{\frac{25}{3}}\cdot n/m^\frac{1}{3})$ & \cite{DBLP:conf/focs/SahinalpV96} &  & $\ctO(n+k^5 \cdot n/m)$ & \textbf{This work} & decision  \\ 
    $\cO(n+k^{4}\cdot n/m)$ & \cite{DBLP:journals/siamcomp/ColeH02} &  & & & \\
    $\ctO(n+k^{3.5}\cdot n/m)$ & \cite{FOCS22} & & & & \\ 
 \hline
\end{tabular}}
\end{center}

\caption{The upper-bound landscape of pattern matching (PM) and circular PM (CPM) with $k$ edits. In the decision version of $k$-Edit CPM, the algorithms only find if there exists at least one occurrence and return a witness; otherwise the algorithms report all the occurrences.}
\label{tab:history}
\end{table}

\subparagraph*{The \pillar model.}
We work in the \pillar model that was introduced in~\cite{FOCS20} with the aim of unifying approximate pattern matching algorithms across different settings.
In this model, we assume that the following primitive \pillar operations
can be performed efficiently, where
the argument strings are fragments of strings in a given collection $\mathcal{X}$:
\begin{itemize}
\item $\Extract(S, \ell, r)$: Retrieve string $S[\ell\dd r)$.
\item $\LCP(S, T),\, \LCP_R(S, T)$: Compute the length of the longest common prefix/suffix of $S$, $T$.
\item $\IPM(S, T)$: Assuming that $|T| \le 2|S|$, compute the starting positions of all exact occurrences of $S$ in $T$, expressed as an arithmetic progression.
\item $\Access(S, i)$: Retrieve the letter $S[i]$; \
  $\Length(S)$: Compute the length $|S|$ of the string $S$.
\end{itemize}
The runtime of algorithms in this model can be expressed in terms of the number of primitive \pillar operations.
The result underlying \cref{thm:main} can be stated as follows.

\begin{theorem}\label{thm:main_pillar}
If $n \le m \le 2n$, the reporting and decision versions of the $k$-Edit CPM problem can be solved in $\cO(k^6)$ time and $\cO(k^5 \log^3 k)$ time in the \pillar model, respectively.
\end{theorem}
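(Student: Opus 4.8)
Assume $k<m$ (otherwise $m\le k$ and $\cycoc_k(P,T)=\{0,\dots,n-1\}$, which is trivial). Since $n=\Theta(m)$, in particular $n\le cm+k$ for a constant $c$, the dichotomy of \cref{thm:comb} applies. The plan is to turn it into two algorithms and, as a first step, to run a routine that returns either \textbf{(i)}~the set $\mathcal A$ of \emph{all} anchors of circular $k$-edit occurrences of $P$ in $T$, with $|\mathcal A|=\cO(k^2)$, or \textbf{(ii)}~a rotation $Q=\rot^{s_0}(P)$, a period $p=\cO(m/k)$, and a string $R$ with $\per(R)=p$ and $\ed(Q,R)=\cO(k)$. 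Such a routine follows from a constructive reading of the proof of \cref{thm:comb}; alternatively, the non-periodic algorithm below can be made to self-detect the periodic case once it would encounter more than $\Theta(k^2)$ anchors. We then branch accordingly, working entirely in the \pillar model.

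\medskip\noindent\textbf{The non-periodic branch.}
Every circular $k$-edit occurrence is anchored somewhere: if $\rot^s(P)=_k T[p\dd p']$ via an optimal alignment, the position of $T$ onto which the ``seam'' of $\rot^s(P)$ (between $P[s\dd m)$ and $P[0\dd s)$) is mapped is an anchor of $T[p\dd p']$. Hence $\cycoc_k(P,T)=\bigcup_{i\in\mathcal A}\anchored_k(P,T,i)$, so it suffices to compute each set on the right. Unrolling \cref{def:anchored}, $p\in\anchored_k(P,T,i)$ iff there are a split point $s$ and a position $p'\ge i$ with
\[\ed\bigl(T[i\dd p'),\,P[0\dd s)\bigr)\ +\ \ed\bigl(T[p\dd i),\,P[s\dd m)\bigr)\ \le\ k.\]
The first summand is a $k$-edit matching of prefixes of $P$ against $T[i\dd n)$; the second, after reversing, is a $k$-edit matching of prefixes of $P^R$ against $T^R[n-i\dd n)$. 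The plan is to compute both by Landau--Vishkin-style dynamic programming, realised with $\LCP$ and $\LCP_R$ queries (``kangaroo jumps''), obtaining for each side — with $\cO(k^2)$ breakpoints — the trade-off between matched-prefix length, number of used edits, and endpoint position; and then, using that an occurrence anchored at $i$ uses the two prefixes at complementary lengths $s$ and $m-s$, to ``glue'' the two trade-off descriptions along $s$ (possibly via a single call of the $\cO(nk^2)$-time algorithm of~\cite{ESA22} on a suitably cropped instance). This yields $\anchored_k(P,T,i)$ in $\cO(k^4)$ \pillar operations as a union of $\cO(k^4)$ interval chains (in this branch they are in fact plain intervals). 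Summing over the $\cO(k^2)$ anchors gives $\cO(k^6)$ interval chains and the claimed time; for the decision version we use per anchor the faster $\cO(nk\log^3 k)$-time decision routine of~\cite{ESA22}, trading a factor $k$ for a $\log^3 k$ factor, hence $\cO(k^5\log^3 k)$ overall.

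\medskip\noindent\textbf{The periodic branch (the main obstacle).}
Here $Q=\rot^{s_0}(P)=_{\cO(k)}R$ with $\per(R)=p=\cO(m/k)$. The plan: (1)~every rotation $\rot^s(P)$ is a rotation of $Q$, hence lies within $\cO(k)$ edits of a rotation of $R$ — an ``almost $p$-periodic'' string — and a $k$-edit occurrence of such a string forces the bulk of the matched fragment of $T$ to be almost $p$-periodic too; since $n=\Theta(m)$ this confines all circular occurrences to essentially one almost-$p$-periodic run $T'$ of $T$, which we locate in $\cO(\mathrm{poly}(k))$ operations. (2)~Within $T'$ we must match every rotation of $P$ with $\le k$ edits; for this we use the \emph{locked-fragment} machinery of~\cite{DBLP:journals/siamcomp/ColeH02}: the $\cO(k)$ \emph{misperiods} of $Q$ w.r.t.\ $R$ and those of $T'$ cut both strings into maximal exactly $p$-periodic (``locked'') fragments, and in any $\le k$-edit alignment a long locked pattern fragment must match a stretch of the periodic part of $T'$ almost rigidly; so after guessing, among $\cO(k^2)$ possibilities, how one pattern-side misperiod aligns to one text-side misperiod, the whole alignment decomposes into a constant number of short ($\le k$-edit) sub-problems of size $\mathrm{poly}(k)$ that are solved directly (or by a cropped call to~\cite{ESA22}). (3)~The $p$-periodicity makes the resulting occurrence sets repeat with difference $p$ — exactly what the interval-chain representation (difference $\approx p$) encodes — while the $\cO(k)$ rotations whose seam falls outside the periodic bulk are handled one at a time. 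Balancing the $\cO(k^2)$ guesses against the $\mathrm{poly}(k)$-size sub-problems again gives $\cO(k^6)$ interval chains for reporting and, with the $\ctO$-time decision sub-routines, $\cO(k^5\log^3 k)$ operations for the decision version.

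\medskip
The technically demanding part is the periodic branch: quantifying precisely — with workable constants — the propagation ``$Q$ almost $p$-periodic $\Rightarrow$ every rotation, and every matched text fragment, almost $p$-periodic'', and then the bookkeeping of misperiods and locked fragments needed both to bound the number of sub-alignments by $\mathrm{poly}(k)$ and to emit the output as $\cO(k^6)$ interval chains. The non-periodic branch is, by comparison, a direct two-sided Landau--Vishkin computation combined with \cref{thm:comb}.
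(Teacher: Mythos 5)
Your high-level dichotomy — few anchors versus an almost-periodic rotation — is the right mental picture, and your non-periodic branch, while less tight per anchor than the paper's (the paper invokes \cref{lem:report-anchored} which gives $\anchored_k(P,T,i)$ in $\cO(k^2)$ time, not $\cO(k^4)$), still lands at the claimed bounds. But there are two substantive problems, both in the periodic branch, which is exactly where the paper spends most of its effort.

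First, \cref{thm:comb} is a \emph{corollary} of the constructive reduction (\cref{lem:red}), not a black box one can branch on. The actual algorithm needs to (a) split $P$ into halves $P_1,P_2$, cover $T$ by overlapping $\Theta(m)$-length windows, run $k$-edit PM for each half against each window, and detect the periodic case from \cref{mt}; and (b) when period $Q$ appears, \emph{extend} the approximate periodicity from $P_1$ and $\bar{T}'$ into longer strings $V\subseteq P_2P_1P_2$ and $U\subseteq T$, tracking the accumulated error budget and the approximate congruence $r$ so that the resulting \EPSM instance has the synchronization property $p\equiv_{\Theta(k)} x+r\pmod{q}$. Your sketch asserts that one can ``locate an almost-$p$-periodic run $T'$'' and ``every rotation lies within $\cO(k)$ edits of a rotation of $R$'', but neither of these is straightforward: the error bound for $V$ and $U$ is several times $k$ (the paper carries $\le 112k$ and $\le 16k$), and without the congruence synchronization the subsequent counting of offsets breaks down.

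Second, and more seriously, your periodic branch only contemplates the \emph{overlap} case: you ``guess, among $\cO(k^2)$ possibilities, how one pattern-side misperiod aligns to one text-side misperiod''. But there is an entire regime — the paper's non-overlap case (\cref{sec:case234}) — in which \emph{no} locked fragment of $V$ is aligned near any locked fragment of $U$, yet there are $\Theta(m/q)$ valid offsets and potentially $\Theta(m)$ occurrences that must be reported compactly. In that regime the ``guess a pair of misperiods'' step produces nothing, and your decomposition into a ``constant number of sub-problems of size $\mathrm{poly}(k)$'' does not apply: the alignment is between two length-$\Theta(m)$ strings, not $\mathrm{poly}(k)$-length ones. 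Handling this is the most technical part of the paper: one introduces a sample copy of $Q^{k+1}$ inside $V$, defines critical positions in $U$ (which form $\cO(k^2)$ arithmetic progressions of difference $q$), and then — via \cref{lem:greedy,lem: non-overlap pointwise,lem:left_right} — shows that the output for intermediate critical positions is determined by the extreme ones $i_1,i_2$, which is precisely what justifies emitting the answer as interval chains of difference $q$. Without this argument your reporting output cannot be compressed to $\cO(k^6)$ chains, and your decision algorithm would have to examine too many critical anchors. In short, the sketch needs the overlap/non-overlap split and the critical-position machinery; as written it silently assumes that every periodic occurrence pins a pattern misperiod against a text misperiod, which is false.
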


\cref{thm:main_pillar} implies \cref{thm:main} as well as efficient algorithms for $k$-Edit CPM in internal, dynamic, fully compressed, and quantum settings based on known implementations of the \pillar model in these settings, as discussed in Appendix~\ref{sec:results}.

\subparagraph*{Our approach.}
Every circular $k$-edit occurrence of $P$ in $T$ is anchored at some position $i$ of~$T$. In the reporting and decision version of the problem, we use the following respective results.

\begin{lemma}[{\cite[Lemma 30]{DBLP:journals/corr/abs-2208-08915}}]\label{lem:report-anchored}
Given a text $T$ of length $n$, a pattern $P$ of length $m$, an integer $k>0$, and a position $i$ of $T$, we can compute in $\Oh(k^2)$ time in the \pillar model the set $\anchored_k(P,T,i)$, represented as a union of $\cO(k^2)$ intervals, possibly with duplicates.
\end{lemma}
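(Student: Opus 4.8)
\subparagraph*{Proof proposal.}
The plan is to split an anchored occurrence at the anchor, solve each side by a computation in the style of Landau and Vishkin~\cite{DBLP:journals/jal/LandauV89} (which costs only $\Oh(k^2)$ \pillar operations, independently of the lengths of the strings involved), and then glue the two outputs together. Fix the anchor $i$. By \cref{def:anchored}, a position $p$ lies in $\anchored_k(P,T,i)$ if and only if there are a split $P=XY$ with $X=P[0\dd s)$ and $Y=P[s\dd m)$, a position $p'$, and error budgets $e_1+e_2\le k$ such that the \emph{left half} $\ed(T[p\dd i),Y)\le e_1$ and the \emph{right half} $\ed(T[i\dd p'],X)\le e_2$ both hold; thus the left half aligns the suffix $P[s\dd m)$ of $P$ so that it ends at position $i$, and the right half aligns the prefix $P[0\dd s)$ of $P$ so that it starts at position $i$. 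Since $|X|,|Y|\le m$ and only $k$ errors are available, it suffices to look at the windows $T[\max(0,i-m-k)\dd i)$ and $T[i\dd\min(n,i+m+k))$, and the degenerate situations ($p\ge i$, $p'<i$, or $X$ or $Y$ empty) are automatically subsumed by this setup.

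Because $p'$ is existentially quantified and does not belong to the output, the right half contributes only the single value $e_2^\ast(s):=\min\{\ed(P[0\dd s),Q):Q\text{ a prefix of }T[i\dd n)\}$, which I would compute for all $s$ (capped at $k$) by a Landau--Vishkin run between $P$ and the right window; a short alignment-surgery argument (remove $P[s-1]$ together with the letter it is aligned to, if any) shows $e_2^\ast(s-1)\le e_2^\ast(s)$, so $e_2^\ast$ is a non-decreasing step function with $\Oh(k)$ breakpoints. For the left half I would run Landau--Vishkin ``backwards'' between $P$ and the left window to obtain the farthest-reaching values $F_e(\delta)$ for all $e\in[0\dd k]$ and all diagonals $\delta\in[-k\dd k]$, using $\Oh(k^2)$ calls to $\LCP_R$ (together with $\Access$, $\Length$, $\Extract$). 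Writing $a=m-s$ for the length of the chosen suffix of $P$ and $b=i-p$, the left-half distance is $D[a][b]:=\ed(P[s\dd m),T[p\dd i))$, and $D[a][b]\le e$ holds exactly when $b\le F_e(a-b)-(a-b)$ on the diagonal $a-b$, subject to the trivial validity bounds on $a$ and $b$.

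To combine the two sides, set $g(a):=k-e_2^\ast(m-a)$, the budget left for the left half once $P$ is split to have a length-$a$ suffix; increasing $a$ decreases $m-a$ and hence weakly decreases $e_2^\ast(m-a)$, so $g$ is non-decreasing with $\Oh(k)$ breakpoints. Then $p=i-b$ lies in $\anchored_k(P,T,i)$ if and only if some diagonal $\delta\in[-k\dd k]$ satisfies $e^\ast(\delta,b)\le g(b+\delta)$, where $e^\ast(\delta,b):=\min\{e:F_e(\delta)\ge b+\delta\}$ (and $\infty$ if no such $e$ exists). For a fixed $\delta$, both $b\mapsto e^\ast(\delta,b)$ and $b\mapsto g(b+\delta)$ are non-decreasing step functions of $b$ with $\Oh(k)$ breakpoints, so the sign of their difference flips $\Oh(k)$ times and $\{\,b:e^\ast(\delta,b)\le g(b+\delta)\,\}$ is a union of $\Oh(k)$ intervals; intersecting with the validity constraints and substituting $p=i-b$ yields $\Oh(k)$ intervals of starting positions per diagonal, and their union over the $\Oh(k)$ diagonals is $\anchored_k(P,T,i)$, represented by $\Oh(k^2)$ intervals (with duplicates, since I would not merge across diagonals). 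All string access sits in the two Landau--Vishkin runs, i.e.\ $\Oh(k^2)$ \pillar operations, and the remaining manipulation of step functions takes $\Oh(k^2)$ time.

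I expect the main obstacle to be the gluing step: collapsing the right half to the single monotone threshold $e_2^\ast$ (hence $g$), and then proving that pairing it with the left-half frontier produces only $\Oh(k^2)$ intervals rather than, say, $\Oh(k^3)$. This rests on the monotonicity of $e_2^\ast$ and $g$ (the little alignment-surgery lemma) and on the step-function structure of each $e^\ast(\delta,\cdot)$; one must also check that the chosen windows and the backward Landau--Vishkin run faithfully capture the degenerate anchored occurrences.
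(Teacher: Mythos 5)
Your reconstruction is correct, and it is the natural one for this lemma: split the anchored occurrence at $i$, run Landau--Vishkin backwards on the left side to obtain the frontier $F_e(\delta)$, run Landau--Vishkin forwards on the right side (and, since only $p$ appears in the output, minimize over $p'$) to obtain the single monotone threshold $e_2^\ast$, and then pair the two per diagonal. The one step that actually carries weight is the monotonicity of $e_2^\ast$, and your alignment-surgery argument for $e_2^\ast(s-1)\le e_2^\ast(s)$ is right; it is also worth spelling out that you never materialize $e_2^\ast$ over all $s\in[0\dd m]$ but only its $\Oh(k)$ breakpoints, which you can read off from $\max_{\delta'}G_e(\delta')$ for $e\in[0\dd k]$. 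The conclusion that, per diagonal $\delta$, the set $\{b:e^\ast(\delta,b)\le g(b+\delta)\}$ is a union of $\Oh(k)$ intervals follows because the difference of two non-decreasing step functions with $\Oh(k)$ breakpoints has $\Oh(k)$ breakpoints, and a sublevel set of such a step function is a union of $\Oh(k)$ intervals; summing over the $\Oh(k)$ diagonals gives the claimed $\Oh(k^2)$ intervals (allowing duplicates across diagonals, exactly as the statement permits). The paper itself cites this lemma from earlier work and does not reproduce a proof, but your argument is sound and in the same spirit as the standard anchor-splitting technique used there.
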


For an interval $I$ denote by $\anyanchored_k(P,T,I)$ an arbitrarily chosen  position in the set $\bigcup_{\,i\in I}\, \anchored_k(P,T,i)$; if this set is empty then the result is \emph{none}.
\begin{lemma}[{\cite[Section 4]{ESA22}}]\label{lem:decision-anchored}
Given a text $T$ of length $n$, a pattern $P$ of length $m$, an integer $k>0$, and an interval $I$ containing up to $k$ positions of $T$, we can compute  $\anyanchored_k(P,T,I)$ in $\Oh(k^2 \log^3 k)$ time in the \pillar model.
\end{lemma}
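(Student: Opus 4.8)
The plan is to reduce the interval-anchored decision problem to a batched two-sided approximate-matching computation. By \cref{def:anchored}, writing $s=|X|$ so that $X=P[0\dd s)$ is a prefix and $Y=P[s\dd m)$ a suffix of $P$, a position $i$ is an anchor iff for some split point $s$ and some partition $e_1+e_2\le k$ of the budget the suffix $P[s\dd m)$ has an $e_1$-edit occurrence of $T$ ending exactly at $i$ while the prefix $P[0\dd s)$ has an $e_2$-edit occurrence starting exactly at $i$; reversing $T$ turns the first condition into a prefix match of the reverse pattern $P^R$ against $T^R$ read from position $n-i$. For a single $i$, the data ``for each error level $e\le k$, the longest prefix of $P$ that can be aligned with $e$ edits to a prefix of $T[i\dd n)$'' is an $\cO(k)$-entry non-decreasing staircase, and one run of Landau--Vishkin produces it by filling an $\cO(k)\times\cO(k)$ band of the truncated edit-distance table with one $\LCP$ query per cell, i.e.\ in $\cO(k^2)$ \pillar\ operations; the reversed side is symmetric. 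Merging the two staircases and sweeping over $s$ then decides in $\cO(k)$ extra time whether $\min_s(\text{left}(i,s)+\text{right}(i,s))\le k$, and a trace-back through the two tables recovers a starting position $p$. Doing this for each of the $\le k$ positions of $I$ already solves the problem, but in $\cO(k^3)$ time, so the goal is to process $I$ as a whole.

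To reach $\cO(k^2\log^3 k)$, I would run a single \emph{multi-source} Landau--Vishkin sweep for all of $I$ at once. The key point is that the alignments of prefixes of $P$ against $T$ that start at a position of $I$ and use at most $k$ edits occupy only $\cO(k)$ diagonals of the edit-distance grid: the starting diagonal equals the start position, which lies in $I$, and $k$ errors shift the diagonal by at most $k$. Hence an $\cO(k)\times\cO(k)$ band (again one $\LCP$ query per cell) records, for every error level and every diagonal, the farthest prefix length of $P$ reachable over \emph{all} admissible starts in $I$; the mirror sweep on $T^R$ handles the suffixes of $P$. What remains is combinatorial bookkeeping: from the two bands one must, for every $i\in I$, reconstruct its pair of staircases and run the $\cO(k)$-time pairing test. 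I expect to organise this as a balanced divide-and-conquer over $I$ --- an interval of length $\ell$ resolved in $\cO(\ell\cdot k)$ time plus two recursive calls on its halves, telescoping to $\cO(k^2)$ per recursion level and $\cO(k^2\log k)$ overall --- with the remaining two logarithmic factors absorbing the auxiliary range-minimum/successor structures over the Landau--Vishkin frontiers and the periodicity subroutines invoked inside a node. The first $i$ at which the pairing test succeeds yields the traced-back witness $p$; if none does, the answer is \emph{none}.

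I expect the reconciliation step to be the main obstacle: an anchor must be the \emph{same} position $i$ for the suffix match and the prefix match, whereas the batched frontiers only record reachability ``from \emph{some} start in $I$''. Restoring the link seems to require attaching to each frontier entry the sub-interval of $I$ of starts that realise it and, for each relevant split point $s$, intersecting the sub-interval coming from the left band with the one coming from the right band --- which is exactly why a plain single sweep does not suffice and a divide-and-conquer over $I$ with logarithmic overhead is needed. A second technical difficulty is the almost-periodic case (cf.\ \cref{thm:comb}): when a rotation of $P$ lies at small edit distance from a string of small period, a window of $k$ candidate anchors can contain $\Theta(k)$ genuine anchors arranged in an arithmetic-progression-like pattern, so one cannot afford to enumerate them and must instead exploit the periodic structure together with the $\IPM$ primitive to locate a witness directly.
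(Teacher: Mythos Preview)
The paper does not prove this lemma at all: it is quoted verbatim from prior work (Section~4 of~\cite{ESA22}) and used as a black box. There is therefore no ``paper's own proof'' to compare your proposal against.

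That said, your sketch is a plan rather than a proof, and you essentially say so yourself. The ingredients are reasonable --- the observation that alignments from all anchors in a length-$k$ interval share an $\cO(k)$-diagonal band, so that a single Landau--Vishkin pass of $\cO(k^2)$ \pillar\ operations captures the relevant frontier information for the whole interval, is the right starting point and is indeed what underlies the cited result. But the step you flag as ``the main obstacle'' (reconciling the left and right batched frontiers so that the \emph{same} anchor $i$ realises both halves of the split) is exactly the content of the lemma, and you leave it at the level of ``I expect to organise this as a balanced divide-and-conquer \ldots\ with the remaining two logarithmic factors absorbing auxiliary structures''. That is not a proof: you have not specified what invariant the recursion maintains, why the per-level work is $\cO(k^2)$ once you attach sub-intervals of $I$ to frontier entries, nor what the ``periodicity subroutines'' are or why they cost only polylogarithmic time. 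Your closing remark about the almost-periodic case is also misplaced here: that dichotomy belongs to the global structure of the algorithm, not to the proof of this particular subroutine, which must work uniformly on an arbitrary length-$k$ interval of anchors.

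In short: nothing you wrote is wrong, but the proposal stops precisely where the actual work begins. If you want to reconstruct the argument, you will need to make the merge step concrete --- the $\log^3 k$ in the bound is a strong hint that the cited proof uses a nontrivial recursion or range structure, not just a single sweep.
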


It will be convenient and sufficient  to deal separately with fragments of $T$ of length $\cO(m)$, so we can assume w.l.o.g.~that $n=\cO(m)$.
Let $P=P_1P_2$ be a decomposition of the pattern with $|P_1|=\floor{m/2}$.
By using \cref{lem:report-anchored} to compute $k$-edit circular occurrences that are anchored at one of $\cO(k^2)$ carefully chosen anchors, we reduce our problem to searching for $k$-edit (non-circular) occurrences of any length-$m$ substring of 
a certain fragment $V$ of $P_2P_1P_2$ in a suitable fragment $U$ of $T$, where both $V$ and $U$ are approximately periodic (there is also a symmetric case where $V$ is a substring of $P_1P_2P_1$).

We achieve this as follows.
Let us denote the set of standard (non-circular) \emph{$k$-edit occurrences} of a string $X$ in a string $Y$ by
$$\Occ_k(X,Y) = \{i \in [0\dd |Y|)\,:\,Y[i \dd i'] =_k X \ \text{for some}\ i' \ge i\}.$$
We compute the set $\Occ_k(P_1,T)$ using an algorithm for pattern matching with $k$ edits~\cite{FOCS20}.
If this set is small, it yields a small set of \emph{anchors} for $k$-edit occurrences of rotations of $P$ that contain $P_1$. 
We also do the same for $P_2$.
Then, we can apply \cref{lem:report-anchored} to each anchor. 

The challenging case is when $\Occ_k(P_1,T)$ is large.
The structural result for $k$-Edit PM then implies that $P_1$ and the portions of $T$ spanned by approximate occurrences of $P_1$ are \emph{almost periodic}, i.e., they are at small edit distance from a substring of string $Q^\infty$, where~$Q$ is a short string.
We extend the periodicity in each of $P_2P_1P_2$ and $T$, allowing for more edits.
The reduction is then completed by accounting for some technical considerations and, possibly, calling \cref{lem:report-anchored} $\cO(k^2)$ more times.

In order to develop some intuition for how to deal with the almost periodic case, let us briefly discuss how it is dealt with in the case where we are looking for approximate (circular) occurrences under the Hamming distance.
The mismatches of each of the two strings ($P$ and $T$ or $U$ and $V$) with a substring of $Q^\infty$ are called \emph{misperiods}.
Now, consider some candidate starting position $i$ of $P$ in $T$, assuming that both $P[0 \dd |Q|)$ and $T[i \dd i+|Q|)$ are approximate copies of $Q$: the number of mismatches of $P$ and $T[i \dd i+m)$ can be inferred by just looking at the misperiods: it is just the total number of misperiods in $P$ and $T[i \dd i+m)$ minus the misperiods that are aligned and thus ``cancel out''.

For approximate PM under the edit distance, the situation is much more complicated as deletions and insertions can be applied, and hence we cannot have an analogous statement about misperiods ``cancelling out''. Following works on (non-circular) $k$-edit PM, we employ so-called \emph{locked} fragments (see~\cite{FOCS20,DBLP:journals/siamcomp/ColeH02}).

Roughly speaking, we partition each of $U$ and $V$ into locked fragments and powers of~$Q$, such that the total length of locked fragments is small and, if a locked fragment is to be aligned with a substring of $Q^\infty$, we would rather align it with a power of $Q$.
Then, intuitively, one has to overcome technical challenges arising from the nature of the overlap of the locked fragments with a specific circular $k$-edit occurrence.

We consider different cases depending on whether the fragments of $U$ and $V$ that yield a match imply that any pair of locked fragments (one in $U$ and one in $V$) overlap.
A crucial observation is that, roughly speaking, as we slide a length-$m$ fragment of $V$ over $U$, $|Q|$ positions at a time, such that the locked fragments in the window in $U$ remain unchanged and do not overlap with locked fragments in $V$, the edit distance remains unchanged.

\section{Reduction of \texorpdfstring{$k$-Edit}{k-Edit} CPM to the PeriodicSubMatch Problem}
\label{sec:reduction}

A \emph{string} $S=S[0\dd |S|-1]$ is a sequence of \emph{letters} over some alphabet.
The string $S[i] S[i+1] \cdots S[j]$, for any indices $i,j$ such that $i \le j$, is called a \emph{substring} of $S$. By $S[i\dd j]=S[i\dd j+1)=S(i-1\dd j]$ we denote a \emph{fragment} of $S$ that can be viewed as a positioned substring $S[i] S[i+1] \cdots S[j]$ (it is represented in $\Oh(1)$ space).
We also denote $S^{(j)}=S[j \dd j+m)$.
An integer $p$ such that $0 < p \leq |S|$ is called a \emph{period} of $S$ if $S[i] = S[i + p]$, for all $i \in[0\dd |S| - p)$. We define \emph{the period} of $S$ as the smallest such $p$. 
A string $Q$ is called \emph{primitive} if $Q=W^k$ for a string $W$ and a positive integer~$k$ implies that $k=1$.
By $\rot^j(X)$ we denote the string $X[j\dd |X|)X[0\dd j)$.
We generalize the rotation operation $\rot$ to arbitrary integer exponents~$r$ as $\rot^r(X)=\rot^{r \bmod |X|}(X)$.

By $\edp{X}{Y}$, $\eds{X}{Y}$ and $\edl{X}{Y}$ we denote the minimum edit distance between string $X$
and any prefix, suffix and substring of string $Y^{|X|+|Y|}$, respectively.

We say that a string $U$ is \emph{almost $Q$-periodic} if $\edp{U}{Q}\leq 112 k$.
We write 
$a \equiv_d b \pmod{q}$ if  $a-b \equiv i \pmod{q},$
where $\min(i,q-i)\le d$ (in other words, $a$ and $b$ are $d$-approximately congruent modulo $q$). For example, $11 \equiv_{3} 21 \pmod{8}$, but $11 \equiv_{1} 21 \pmod{8}$ does not hold.

A pair of indices $(p,x)$ satisfying $p\in \Occ_k(V^{(x)},U)$ and $p\equiv_{77k} x+r \pmod{q}$ will be called an approximate match (\emph{app-match}, in short).

The following auxiliary problem, \EPSM, is illustrated in \cref{fig:setting}.
\defproblem{\bf \EPSM}{A primitive string $Q$, integers $m,r,k,\alpha,\beta$, and strings $U$, $V$ such that
\begin{itemize}
\item $m\leq |U| \leq \frac74m+3(k+1)$,\, $m\leq |V| \leq \frac32m$,\ $q=|Q|\leq \frac{m}{256\, k}$,\ $r \in [0\dd q)$,

\smallskip
\item $U$ is almost $Q$-periodic,

\smallskip
\item $V\,=\,P^2[\alpha\dd\beta]$ (hence, length-$m$ substrings of $V$ are rotations of $P$),

\smallskip
\item $V$ is almost $Q'$-periodic, where $Q':=\rot^r(Q)$.
\end{itemize}
} 
{$\{\,p\in \Occ_k(V^{(x)},U)\;:\;
p\equiv_{77k} x+r \pmod{q},\ x\le |V|-m\,\}$.
}

\begin{remark}
Due to the condition that $V$ is a fragment of $P^2$, we can apply the operation $\anchored_k$ to compute efficiently 
the output of \EPSM in the case when a position $j_1$ in $V$ is  \emph{aligned} with a position $i_1$ in $U$.
The efficiency of the whole approach is based on the efficiency of the operation $\anchored_k$.
\end{remark}

\begin{figure}[htpb]
    \centering
    \begin{tikzpicture}[xscale=0.2,yscale=0.44]
\fill[green!20!white,xshift=-1cm] (12.5,0) rectangle (35.5,1);
\draw (12,0) node[below] {$p$};
\draw[blue,thick,latex-latex] (12.5,-0.4) -- (16.5,-0.4);
\foreach \x/\c in {0/a,1/b,2/c,3/d,4/e,5/f,6/g}{
  \draw (\x,-0.1) node[above] {\small \vphantom{\texttt{g}}{\texttt{\c}}};
}
\foreach \x/\c in {7/y}{
  \draw (\x,-0.1) node[above] {\small \vphantom{\texttt{g}}{\textcolor{red}{\texttt{\c}}}};
}
\foreach \x/\c in {7/h, 8/a,9/b, 10/c,11/d,12/e,13/f,14/g,15/h, 16/a,17/b,18/c}{
  \draw[xshift=1cm] (\x,-0.1) node[above] {\small \vphantom{\texttt{g}}{\texttt{\c}}};
}

\draw[red] (19.5,0) -- (19.5,1);
\foreach \x/\c in {20/e,21/f,22/g,23/h, 24/a,25/b,26/c,27/d,28/e,29/f,30/g,31/h,32/a,33/b,34/c,35/d}{
  \draw[xshift=0cm] (\x,-0.1) node[above] {\small \vphantom{\texttt{g}}{\texttt{\c}}};
}
\draw (-0.5,0) rectangle (35.5,1);
\draw (-0.5,0.5) node[left] {$U$};
\begin{scope}
\clip (-0.5,1) rectangle (35.5,3);
\draw[xshift=-0.5cm] (0,1) .. controls (3,2) and (6,2) .. (9,1);
\draw[xshift=8.5cm] (0,1) .. controls (3,2) and (5,2) .. node[above] {$Q$} (8,1);
\draw[xshift=16.5cm] (0,1) .. controls (2,2) and (5,2) .. (7,1);
\draw[xshift=23.5cm] (0,1) .. controls (3,2) and (5,2) .. (8,1);
\draw[xshift=31.5cm] (0,1) .. controls (3,2) and (5,2) .. (8,1);
\end{scope}

\begin{scope}[yshift=-5cm,xshift=1cm]
\draw (11,0) node[below] {$x$};
\draw[blue,thick,latex-latex] (11.5,-0.4) -- (16.5,-0.4);
\fill[green!20!white,xshift=-1cm] (11.5,0) rectangle (36.5,1);
\foreach \x/\c in {5/f,6/g,7/h, 8/a,9/b,10/c,11/d,12/e,13/f
}{
  \draw (\x,-0.1) node[above] {\small \vphantom{\texttt{g}}{\texttt{\c}}};
}
\foreach \x/\c in {14/z,
}{
  \draw (\x,-0.1) node[above] {\small \vphantom{\texttt{g}}{\textcolor{red}{\texttt{\c}}}};
}
\foreach \x/\c in {14/g,15/h, 16/a,17/b,18/c,19/d,20/e,21/f,22/g,23/h, 24/a,25/b,26/c,27/d,28/e,29/f,30/g,31/h,
  32/a,33/b,34/c,35/d,36/e
}{
  \draw[xshift=1cm] (\x,-0.1) node[above] {\small \vphantom{\texttt{g}}{\texttt{\c}}};
}
\draw (4.5,0) rectangle (37.5,1);
\draw (4.5,0.5) node[left] {$V$};
\begin{scope}
\clip (4.5,1) rectangle (37.5,3);
\draw[xshift=-0.5cm] (0,1) .. controls (3,2) and (5,2) .. (8,1);
\draw[xshift=7.5cm] (0,1) .. controls (3,2) and (6,2) .. (9,1);
\draw[xshift=16.5cm] (0,1) .. controls (3,2) and (5,2) .. (8,1);
\draw[xshift=24.5cm] (0,1) .. controls (3,2) and (5,2) .. (8,1);
\draw[xshift=32.5cm] (0,1) .. controls (3,2) and (5,2) .. (8,1);
\end{scope}
\begin{scope}[yshift=-1cm]
\draw[densely dashed] (15.5,-1) rectangle (40.5,0);
\draw (27.5,-0.5) node {$P$};
\draw[xshift=-24cm,densely dashed] (39.5,-1) -- (14.5,-1) -- (14.5,0) -- (39.5,0);
\draw[xshift=-27cm] (27.5,-0.5) node {$P$};
\draw (5,-1) node[blue!50!black,below] {$\alpha$};
\draw[densely dotted] (4.5,-1) -- (4.5,1);
\draw (37,-1) node[blue!50!black,below] {$\beta$};
\draw[densely dotted] (37.5,-1) -- (37.5,1);
\end{scope}

\draw[xshift=-9cm,latex-latex] (7.5,2) -- node[above] {$r$} (12.5,2);
\draw[latex-latex] (7.5,2) -- node[above] {$r$} (12.5,2);
\draw (4.5,2.8) -- (4.5,3) -- node[above] {$Q'$} (12.5,3) -- (12.5,2.8);
\draw[densely dotted] (4.5,2.8) -- (4.5,1)  (12.5,2.8) -- (12.5,1)  (7.5,2) -- (7.5,1);

\end{scope}

\end{tikzpicture}
    \caption{We have $m=25$, $k=2$ and $r=5$. Edits with respect to the approximate periodicity are marked in red. Green rectangles show that $V^{(x)}=_2 U[p \dd p+23)$. We have $p=x+r+1$, so $p \equiv_{1} x+r \pmod{q}$. The distances (in blue) from $p$ and $x$ to the starts of next approximate periods~$Q$ are the same up to $\Theta(k)$. For the example purposes, we waive the constraint $q=|Q|\leq \frac{m}{256\, k}$.
    }\label{fig:setting}
\end{figure}
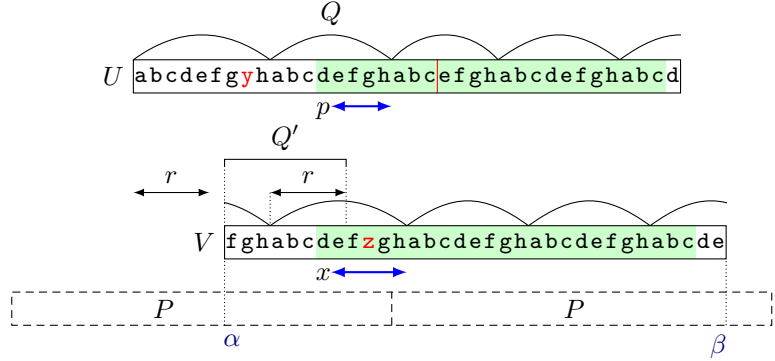

The strings $U$ and $V$ are both close to substrings of $Q^\infty$. The condition $p\equiv_{\Theta(k)} x+r \pmod{q}$ means that we are only interested in $k$-edit occurrences $U[p \dd p']$ of $V^{(x)}$ such that the two substrings are approximately synchronized with respect to the approximate period~$Q$; see \cref{fig:setting}.
(In particular, no other $k$-edit occurrences exist.)
The constants originate from \cref{mt} and some additional requirements imposed in the proof of \cref{lem:red}.

\begin{example}
A very simple \emph{double fully periodic} case, where both $U$ and $V$ are substrings of $Q^{\infty}$, is depicted in \cref{June15}. Again, we waive the constraint $q=|Q|\leq \frac{m}{256\, k}$.

\begin{figure}[htpb]
\centering
\begin{tikzpicture}[xscale=0.22,yscale=0.48]

\begin{scope}
\foreach \c [count=\i] in {d, e, f, g, h, a, b, c, d, e, f, g, h, a, b, c, d, e, f, g, h, a, b, c, d, e, f, g, h, a, b, c, d, e, f, g, h, a, b, c, d, e, f, g, h}{
  \draw (\i,-0.1) node[above] (c\i) {\small \vphantom{\texttt{g}}{\texttt{\c}}};
}
\foreach \x in {4, 12, 20} {
  \draw[blue!60!green,line width=0.1cm] ($(c\x)+(-0.5cm,-0.7cm)$) -- +(6cm,0);
}
\begin{scope}
\clip ($(c1)+(-0.6, 0.4)$) rectangle ($(c45)+(0.5, 1.4)$);
\foreach \i in {0,...,5} {
  \draw[xshift=(\i*8cm)-0.5cm,yshift=1cm] (1,0) .. controls (3,1) and (7,1) .. (9,0);
}
\end{scope}
\node at (12.5cm,2.4cm) {$Q$};
\draw ($(c1)+(-0.5, -0.5)$) rectangle ($(c45)+(0.5, 0.5)$);
\node at ($(c1)+(-2,0)$) {$U$};
\end{scope}

\draw[|<->|,yshift=-1.5cm] (0.5cm,0)--+(5cm,0) node[midway,above] {$r$};

\draw[latex-latex] (22cm,-1cm) node[right] {$x+r$}--(22cm,-2.5cm) node[right] {$x$};

\begin{scope}[yshift=-4cm]
\foreach \c [count=\i] in {a, b, c, d, e, f, g, h, a, b, c, d, e, f, g, h, a, b, c, d, e, f, g, h,}{
  \draw (\i+5,-0.1) node[above] (d\i) {\small \vphantom{\texttt{g}}{\texttt{\c}}};
}
\draw ($(d1)+(-0.5, -0.5)$) rectangle ($(d24)+(0.5, 0.5)$);
\draw[xshift=(5cm+3cm-0.5cm,yshift=1cm] (1,0) .. controls (3,1) and (7,1) .. (9,0) node[midway,above] {$Q$};

\draw[xshift=(5cm-0.5cm,yshift=-0.13cm] (1,0) .. controls (3,-1) and (7,-1) .. (9,0) node[midway,below] {$Q'$};

\node at ($(d1)+(-2,0)$) {$V$};

\end{scope}

\end{tikzpicture}
\caption{A double fully periodic case. Let $k=2$, $q=|Q|=8$, and $r=4$. For $m=23$, the set of $k$-edit occurrences of any length-$m$ fragment of $V$ (2 possibilities) in $U$ is the (underlined) interval chain. 
For $m=16$ it is a single interval. Position $x$ in $V$ is synchronized with respect to the periodicity
with any position $p$ in $U$ such that $p\equiv x+r \pmod{q} $.}\label{June15}
\end{figure}
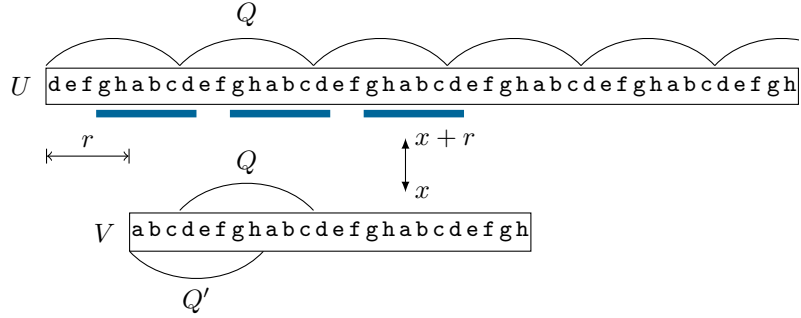
\end{example}

The following theorem follows as a combination of several results of \cite{FOCS20}, see below.

\begin{theorem}[\cite{FOCS20}]\label{mt}
  If $|T|=n < \frac32 m + k$, then in $\cO(k^4)$ time in the \pillar model we can compute 
   a representation of the set $\Occ_k (P, T )$.
  If $\floor{|\Occ_k (P, T )|/k} > 642045 \cdot (n/m) \cdot k$, the algorithm also returns:
  \begin{itemize}
  \item a primitive string $Q$ 
  satisfying 
  $|Q|\le m/(256k),\;\edl{P}{Q}=\edp{P}{Q} < 2k$,
  and 
  \item a fragment $\bar{T}$ of $T$ such that 
  $\edl{\bar{T}}{Q}\leq \edp{\bar{T}}{Q} \leq 24k$,
  $|\Occ_k (P,T)| = |\Occ_k (P,\bar{T})|$.
  \end{itemize}
  Moreover, $ i \equiv_{24k} 0 \pmod{|Q|}$
  for each $i \in \Occ_k(P,\bar{T})$.
\end{theorem}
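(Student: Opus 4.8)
The statement is a repackaging of the edit-distance components of the \pillar framework of~\cite{FOCS20}, specialised to a text that is barely longer than $\tfrac32 m$. First, recall that~\cite{FOCS20} gives a \pillar-model algorithm for $k$-Edit PM running in $\cO(n+(n/m)k^4)$ time; when restricted to a text of length $n\le \tfrac32 m$ this is a single call to their ``basic'' procedure, costing $\cO(k^4)$ \pillar operations and returning a compact representation (by $\cO(k^2)$ intervals / arithmetic progressions) of $\Occ_k(P,T)$. The mild relaxation $n<\tfrac32 m+k$ is handled by a routine adaptation (pad $P$ with $\cO(k)$ fresh symbols, or cover $T$ by two overlapping windows of length $\le\tfrac32 m$). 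This yields the first sentence of the theorem, and the extra data extracted below is already produced by that run, so the asymptotic cost does not increase.

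Second, the same run certifies which branch of the edit-distance structural dichotomy of~\cite{FOCS20} applies. \emph{Either} $\Occ_k(P,T)$ is covered by $\cO(k^2)$ intervals, so that $|\Occ_k(P,T)|=\cO((n/m)k^2)$ with a constant that we fix to be at most $642045$; in that case $\floor{|\Occ_k(P,T)|/k}\le 642045\cdot(n/m)\cdot k$, so the hypothesis of the ``also returns'' part rules this branch out. \emph{Or} $P$ is approximately periodic: \cite{FOCS20} then produces a primitive string $Q_0$ with $|Q_0|=\cO(m/k)$ and a fragment $\bar T$ of $T$ spanning (and $\cO(k)$-padding) all occurrences, together with the two facts that drive everything --- $\ed(P,\text{substring of }Q_0^\infty)=\cO(k)$ and $\ed(\bar T,\text{substring of }Q_0^\infty)=\cO(k)$, with each $i\in\Occ_k(P,\bar T)$ within $\cO(k)$ of a multiple of $|Q_0|$ --- which, after tracking their constants, become $<2k$ for $P$, $\le 24k$ for $\bar T$, and $i\equiv_{24k}0\pmod{|Q_0|}$. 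Choosing $Q:=\rot^a(Q_0)$ for the phase $a$ at which the optimal alignment of $P$ with $Q_0^\infty$ begins at a period boundary converts a substring match into a prefix match, giving $\edl{P}{Q}=\edp{P}{Q}$ with common value still $<2k$; $\edl{\bar T}{Q}\le\edp{\bar T}{Q}\le 24k$ is then immediate, and $\Occ_k(P,\bar T)=\Occ_k(P,T)$ gives the claimed cardinality equality and synchronization.

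Third, I upgrade $|Q|=\cO(m/k)$ to the explicit bound $|Q|\le m/(256k)$ by a counting argument exploiting the synchronization clause together with the hypothesis that there are many occurrences. All elements of $\Occ_k(P,\bar T)=\Occ_k(P,T)$ lie in an interval of length $<\tfrac m2+2k$ (because $n<\tfrac32 m+k$ and a $k$-edit occurrence has length $\ge m-k$), and each is within $24k$ of a multiple of $|Q|$, so $|\Occ_k(P,T)|\le \cO(1)\cdot\tfrac{m}{|Q|}\cdot k$. Combined with $|\Occ_k(P,T)|>642045\cdot(n/m)\cdot k^2=\Omega(k^2)$ this forces $|Q|\le m/(256k)$, provided the constant $642045$ is taken large enough to absorb the hidden $\cO(1)$; this is precisely the origin of that otherwise opaque constant.

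\textbf{Main obstacle.} There is no new mathematical content: the argument is entirely bookkeeping --- deciding which of the several results of~\cite{FOCS20} to invoke (the $\cO(k^4)$-time \pillar algorithm for $k$-Edit PM, the edit-distance structural dichotomy with its synchronization clause, and the rotation/normalization that turns a substring match of $P$ into a prefix match), and then chasing constants so that the thresholds ``$642045\cdot(n/m)\cdot k$'' and ``$m/(256k)$'', and the error budgets ``$<2k$'' and ``$\le 24k$'', all come out mutually consistent.
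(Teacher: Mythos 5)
Your proposal is substantially correct and follows the same high-level strategy as the paper: invoke the $k$-Edit PM machinery of~\cite{FOCS20} (their pattern analysis, their structural theorem on the text, and the rotation lemma that turns $\edl{P}{Q}$ into $\edp{P}{Q}$), and observe that when there are too many occurrences the non-periodic branch is excluded. However, for the key step of tightening $|Q|\le m/(128k)$ (which is what FOCS20's pattern analysis produces off the shelf) to $|Q|\le m/(256k)$, you take a genuinely different route. The paper re-parameterises the pattern analysis with the threshold $m/(256k)$ from the start and then re-examines the non-periodic bounds of~\cite{FOCS20} (their Lemmas~5.21 and~5.24 for breaks and repetitive regions) to check that the constant $642045$ survives --- a structural argument about re-running the analysis. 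You instead keep the looser threshold, establish all the other conclusions (in particular the cardinality identity $|\Occ_k(P,T)|=|\Occ_k(P,\bar T)|$ and the synchronisation $i\equiv_{24k}0\pmod{|Q|}$), and then deduce $|Q|\le m/(256k)$ a posteriori by a counting argument: the occurrences live in an interval of length $<m/2+2k$, each within $24k$ of a multiple of $|Q|$, so $|\Occ_k(P,T)|\le\cO(km/|Q|)$, which contradicts $|\Occ_k(P,T)|>642045(n/m)k^2$ unless $|Q|\le m/(256k)$. I checked the arithmetic and it works comfortably (indeed with large slack); this is a valid alternative, and arguably more self-contained since it uses only the stated conclusions and the hypothesis rather than re-opening the FOCS20 internals.

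Two inaccuracies worth flagging. First, your parenthetical ``this is precisely the origin of that otherwise opaque constant'' is wrong as a claim about the paper: $642045$ is not chosen to make your counting argument work but is inherited from the occurrence bounds of~\cite[Lemmas~5.21 and~5.24]{FOCS20} for the breaks and repetitive-regions cases; your argument merely shows that this constant happens to be (much) more than large enough. Second, the padding/windowing workaround for $n<\tfrac32 m+k$ is unnecessary --- the algorithm encapsulated in~\cite[Main Theorem~9]{FOCS20} already handles arbitrary $n$ in $\cO((n/m)k^4)$ \pillar time, which is $\cO(k^4)$ here --- though proposing it does no harm. Also, be careful to write $|\Occ_k(P,T)|=|\Occ_k(P,\bar T)|$ (equality of cardinalities) rather than $\Occ_k(P,\bar T)=\Occ_k(P,T)$, since the position sets are relative to different strings.
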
 

\subparagraph*{Origin of \cref{mt}.}
An algorithm that efficiently computes a representation of $\Occ_k (P, T )$ is encapsulated in \cite[Main Theorem 9]{FOCS20} \footnote{When referring to
statements of~\cite{FOCS20}, we use their numbering in the full (arxiv) version of the paper.}.
The first step of this algorithm is the analysis of the pattern specified in \cite[Lemma 6.4]{FOCS20}, which results in computing either a set of \emph{breaks}, a set of \emph{repetitive regions}, or a primitive string $Q$ that is of length at most $m/(128k)$ and satisfies $\edl{P}{Q} < 2k$.
In the presence of breaks or repetitive regions, we have $\floor{|\Occ_k (P, T )|/k} \le 642045 \cdot (n/m) \cdot k$, see \cite[Lemmas 5.21 and 5.24]{FOCS20}.
In the case where the analysis of the pattern returns an approximate period~$Q$, we can use \cite[Lemma 6.5]{FOCS20} to find a rotation $Q_1$ of $Q$ such that $\edl{P}{Q_1}=\edp{P}{Q_1}$.
Set $Q:= Q_1$.
Now, let us also compute all $k$-edit occurrences of the reversal of $P$ in the reversal of $T$.
Then, we can trim $T$, obtaining a string $\bar{T}$ so that all $k$-edit occurrences of $P$ in $T$ are preserved in $\bar{T}$, and $P$ has a $k$-edit occurrence both as a prefix and as a suffix of~$\bar{T}$.
We can then directly apply \cite[Theorem 5.2]{FOCS20} with $d=8k$ to obtain the stated properties of~$\bar{T}$; for the fact that $\edp{\bar{T}}{Q} \leq 24k$ holds see the fourth paragraph of the proof of that theorem.
The length of~$Q$ can be instead bounded by $m/(256k)$ with all other constants remaining unchanged; this is because the bottleneck for the number of occurrences in the case where $P$ is not almost periodic stems from repetitive regions and is not sensitive to the exact length of $Q$.
That is, it is only the number of occurrences in the case where the analysis of the pattern yields $2k$ breaks that can be larger (by a multiplicative factor of 2), but the bound stated above is dominant.

\begin{remark}
An $\Oh((n/m)k^{3.5}\sqrt{\log k \log m})$-time algorithm for computing a representation of the set $\Occ_k(P,T)$ using $\Oh(k^3)$ arithmetic progressions was presented in \cite{FOCS22}. The simpler result from \cite{FOCS20} is sufficient for our needs.
\end{remark}
    The proof of the following \cref{lem:red} resembles the proof of \cite[Lemma 12]{ESA22} which is an analogous fact stated for the Hamming distance. We use the fact that \cite{FOCS20} provides a unified framework for the two metrics, but still need to overcome the technical difficulties that arise from replacing Hamming distance with edit distance.

    \begin{lemma}\label{lem:red}
    If $n=\Oh(m)$, then 
    $k$-Edit CPM can be reduced in $\Oh(k^4)$ time in the \pillar model to at most two instances of the \EPSM problem.
    The output to $k$-Edit CPM is a union of the outputs of the two \EPSM instances and $\Oh(k^4)$ intervals.
    \end{lemma}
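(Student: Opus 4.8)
The plan is to fix the decomposition $P=P_1P_2$ with $|P_1|=\floor{m/2}$ and to build on the elementary observation that every rotation $\rot^j(P)$ contains $P_1$ as a contiguous substring when $j\ge|P_1|$, and $P_2$ as a contiguous substring when $j\le|P_1|$. Hence, for a circular $k$-edit occurrence $T[p\dd p']=_k\rot^j(P)$, an optimal alignment restricts to a sub-alignment of that copy of $P_1$ (\emph{case A}) or of $P_2$ (\emph{case B}) with a fragment of $T$ at edit distance at most $k$, and the boundary index of this sub-alignment is an anchor of $T[p\dd p']$ in the sense of \cref{def:anchored}: in case~A it is the starting position, and in case~B the ending position, of a $k$-edit occurrence of $P_1$ (resp.\ $P_2$) in $T$. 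I would treat the two cases symmetrically and describe only case~A. A routine windowing argument based on $n=\Oh(m)$ lets me assume that the text and the almost-periodic fragments produced below have the lengths demanded by \cref{mt} and by \EPSM, and ensures that across all windows at most one \EPSM instance is created in case~A (and symmetrically one in case~B), by tying each instance to an approximate period of $P_1$ (resp.\ $P_2$) rather than to a single window.

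First I would run \cref{mt} on $P_1$ against each window of $T$. If none of these runs declares the periodic outcome, then $|\Occ_k(P_1,T)|=\Oh((n/m)k^2)=\Oh(k^2)$; I would pass each of these positions to \cref{lem:report-anchored} and output the union of the resulting interval families. This takes $\Oh(k^4)$ \pillar operations, produces $\Oh(k^4)$ intervals, and captures every case-A occurrence; if the symmetric run on $P_2$ is also always non-periodic, the reduction is complete with no \EPSM instance.

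The hard part is the complementary case, where \cref{mt} returns a primitive string $Q$ with $q=|Q|\le m/(256k)$ and $\edp{P_1}{Q}<2k$, together with a fragment $\bar T$ of a window that is almost $Q$-periodic, has the same set of $k$-edit occurrences of $P_1$, and satisfies $i\equiv_{24k}0\pmod q$ for every $i\in\Occ_k(P_1,\bar T)$. Starting from the occurrence of $P_1$ inside $P^2=P_1P_2P_1P_2$, I would greedily extend the $Q$-periodicity to the left and to the right within $P^2$, keeping the edit distance to the relevant substring of $Q^\infty$ within $\Oh(k)$; this yields an offset $r\in[0\dd q)$, the rotation $Q'=\rot^r(Q)$, and a fragment $V=P^2[\alpha\dd\beta]$ with $m\le|V|\le\frac32 m$ that is almost $Q'$-periodic and whose length-$m$ substrings include every rotation of $P$ that contains $P_1$ contiguously and is itself almost $Q$-periodic. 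Extending $\bar T$ in the same way inside the window gives an almost-$Q$-periodic string $U$ with $m\le|U|\le\frac74 m+3(k+1)$, so $(Q,m,r,k,\alpha,\beta,U,V)$ is a legal \EPSM instance. Using the synchronisation of the $P_1$-occurrences to $Q$ given by \cref{mt} together with the triangle inequality for edit distance, I would then show that the app-matches returned by this instance, re-indexed from $U$ back into $T$, are exactly the starting positions of the case-A occurrences whose matched rotation is almost $Q$-periodic, and that the slack constants $77k$, $112k$, $\dots$ are precisely what this bookkeeping consumes. The remaining case-A occurrences — those whose matched rotation is \emph{not} almost $Q$-periodic, or which are matched to fragments of $T$ sticking out of the almost-periodic region — are forced to have their anchor within an $\Oh(k)$-neighbourhood of one of the $\Oh(1)$ relevant boundaries, so there are only $\Oh(k^2)$ of them, and I would dispatch them with $\Oh(k^2)$ further calls to \cref{lem:report-anchored}. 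Case~B adds at most one more \EPSM instance and another $\Oh(k^4)$ intervals, all in $\Oh(k^4)$ \pillar operations in total; the dichotomy read off along the way (few anchors, or an almost-periodic rotation of $P$) is exactly the statement of \cref{thm:comb}.

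I expect the genuinely delicate point to be the periodic case: extending the approximate period correctly across the $P_1/P_2$ boundary inside $P^2$ and across the boundary of $\bar T$, choosing $\alpha,\beta$ and $U$ so that both the \EPSM size constraints and the $\equiv_{77k}$ synchronisation hold simultaneously, and proving that the set of app-matches coincides with the set of genuine case-A circular occurrences with an almost-periodic matched rotation while the ``leftover'' occurrences stay $\Oh(k^2)$ in number. This is where the adaptation of \cite[Lemma~12]{ESA22} from the Hamming distance — where aligned misperiods simply cancel — to the edit distance, where insertions and deletions preclude such cancellation, requires real work.
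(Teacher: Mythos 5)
Your high-level roadmap matches the paper's: split $P=P_1P_2$, window $T$, run \cref{mt} on $P_1$ per window to detect the periodic outcome, and, in the periodic case, grow $U$ from $\bar T$ and $V$ from the copy of $P_1$ inside $P^2$, ultimately handing the remaining cases to \EPSM after verifying the approximate congruence. Where your proposal genuinely diverges — and breaks — is in how the ``leftover'' case-A occurrences are handled, i.e.\ those whose matched rotation is not contained in $V$.

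You claim these occurrences ``are forced to have their anchor within an $\Oh(k)$-neighbourhood of one of the $\Oh(1)$ relevant boundaries, so there are only $\Oh(k^2)$ of them.'' This is not correct, for two reasons. First, the arithmetic does not even give $\Oh(k^2)$: $\Oh(1)$ boundaries times $\Oh(k)$-neighbourhoods is $\Oh(k)$ anchors. Second and more fundamentally, the claim is false as a structural statement: a rotation of $P$ that contains $R_{\rright}$ (the repetitive region obtained by extending $P_1$ rightwards until $\Theta(k)$ edits to $Q^\infty$ accumulate) can have $k$-edit occurrences anywhere in $T$, not only near a boundary. The paper's argument is different and essential: because $R_{\rright}$ is a repetitive region (length $\geq 3m/8$, exactly $\lceil 8k|R|/m\rceil$ edits to the closest substring of $Q^\infty$), \cref{repreg} (i.e.\ {\cite[Lemma 5.24]{FOCS20}}) bounds the number of its $k$-edit occurrences in $T$ by $\Oh((n/m)k^2)=\Oh(k^2)$, and each such occurrence yields one call to $\anchored_k$. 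Similarly, the occurrences that would ``stick out'' of $U$ are not dispatched by a boundary-proximity argument in the paper; instead, the stopping condition in \texttt{ComputeU} (keep extending until the appended/prepended fragment $G$ satisfies $\edp{G}{(\rot^x(Q))}\geq 10k$ for all $x$ in a specified $\Theta(k)$-range) is chosen precisely so that \cref{clm:rots} can show, via the triangle inequality, that no such occurrence exists at all. Without the repetitive-region bound and the carefully calibrated extension of $U$, the $\Oh(k^2)$ anchor count and hence the $\Oh(k^4)$ budget do not follow, so as written your proof has a genuine gap at exactly the step the paper flags as the technically delicate one.

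Beyond this, a few smaller points: you do not actually construct $R_{\lleft},R_{\rright}$ or state the stopping criterion in terms of the $\lceil 8k|R|/m\rceil$ threshold of \cref{def:repreg_edit}; you do not verify the $\equiv_{77k}$ congruence (you say you would do the bookkeeping, but this is where the $24k$, $44k$, $25k$ constants chain together, via the synchronisation guarantee $i\equiv_{24k}0\pmod q$ of \cref{mt}); and you should use $P_2P_1P_2$ (equivalently a fragment of $P^2$), not ``$P^2$'' loosely, as the host for $V$, so that ``length-$m$ substring of $V$'' automatically means ``rotation of $P$ containing $P_1$.''
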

    
    Before we proceed with the proof, let us recall some notions and their properties from \cite{FOCS20}.
    One of our main tools are repetitive regions. Intuitively, repetitive regions are fragments that are approximately highly periodic and at the same time, they have a given large number of edits with respect to the periodicity.

\begin{definition}\label{def:repreg_edit}
We say that a fragment $R$ of a string $S$ of length $m$ is a \emph{repetitive region} if $|R| \geq 3m/8$ and there is a primitive string $Q$ such that $|Q| \leq m/(256k)$ and $\edl{R}{Q} = \ceil{8k|R|/m}$.
\end{definition}

\begin{lemma}[see~{\cite[Lemma 5.24]{FOCS20}}]\label{repreg}
Consider a pattern $P$ of length $m$, a text $T$ of length~$n$ and a positive integer threshold $k \le m$.
If the pattern $P$ contains a repetitive region, then $|\Occ_k(P,T)|=\Oh((n/m)k^2)$.
\end{lemma}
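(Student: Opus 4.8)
The plan is to reduce the count of $k$-edit occurrences of $P$ to the count of $k$-edit occurrences of the repetitive region $R$ itself, and then to bound the latter using its near-periodic structure. Write $R=P[a\dd b]$, and let $Q$ be the primitive string from \cref{def:repreg_edit}, so $|R|\ge 3m/8$, $|Q|\le m/(256k)$, and $d:=\edl{R}{Q}=\ceil{8k|R|/m}$; since $3m/8\le|R|\le m$ we get $3k\le d\le 8k$, and in particular $d>2k$, which is the feature that makes $R$ ``aperiodic enough''.

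First I would project occurrences. Given $p\in\Occ_k(P,T)$ with a witnessing alignment of $P$ against a fragment $T[p\dd p']$ using at most $k$ edits, restricting that alignment to the subfragment $R=P[a\dd b]$ yields a fragment $T[s\dd s']$ with $\ed(R,T[s\dd s'])\le k$; moreover $|s-(p+a)|\le k$, since the number of insertions and deletions applied before position $a$ of $P$ is at most $k$. Hence $s\in\Occ_k(R,T)$ and $p\in[s-a-k\dd s-a+k]$, so $\Occ_k(P,T)\subseteq\bigcup_{s\in\Occ_k(R,T)}[s-a-k\dd s-a+k]$ and $|\Occ_k(P,T)|\le(2k+1)\cdot|\Occ_k(R,T)|$. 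As $|R|=\Theta(m)$, it then suffices to show $|\Occ_k(R,T)|=\Oh((n/|R|)\,k)$.

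To bound $|\Occ_k(R,T)|$ I would cover $T$ by $\Oh(n/|R|)$ windows of length $2|R|$, each overlapping the next in $|R|$ positions, so that every $k$-edit occurrence of $R$ starts inside some window, and then show that at most $\Oh(k)$ of them start inside a fixed window. The mechanism is synchronization with respect to the period: every $k$-edit occurrence $T[s\dd s']$ of $R$ satisfies $\edl{T[s\dd s']}{Q}\ge d-k>k$, so it is genuinely far from $Q$-periodic while still being within $k+d=\Oh(k)$ edits of $Q^{\infty}$; if two such occurrences start at $s<s'$ with $s'-s\le|R|/2$, they overlap in a fragment of length $\ge 3m/16$ carrying two approximate $Q$-periodicities whose period boundaries differ by $(s'-s)\bmod|Q|$ up to an additive $\Oh(k)$ accounting for insertions and deletions, and if this residue were not within $\Oh(k)$ of $0$ an approximate Fine--Wilf argument would force the overlap length to be $\Oh((k+d)\,|Q|)=\Oh(m/256)<3m/16$, a contradiction. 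Thus nearby occurrences are $\Theta(k)$-synchronized modulo $|Q|$; combined with a count of how the $\Theta(k)$ misperiods of $R$ must line up with those of $T$, this gives the $\Oh(k)$-per-window bound, whence $|\Occ_k(R,T)|=\Oh((n/|R|)\,k)$ and so $|\Occ_k(P,T)|=\Oh((n/m)\,k^2)$. This is precisely the repetitive-region analysis of \cite{FOCS20}, and indeed the statement is \cite[Lemma~5.24]{FOCS20} up to the harmless replacement of $m/(128k)$ by $m/(256k)$ noted after \cref{mt}; so in practice I would simply invoke that result after the projection step above.

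The hard part is the occurrence count for $R$, and specifically pushing it through for the edit distance rather than the Hamming distance: insertions and deletions move the period boundaries, so one needs the edit-distance form of the approximate-periodicity lemma, and — more delicately — showing that the $\Theta(k)$ misperiods of a genuinely non-periodic repetitive region pin the occurrences down tightly enough to get $\Oh(k)$ (and not merely $\Oh(k^2)$) occurrences per window requires the careful misperiod-alignment machinery of \cite{FOCS20}; the projection step is then routine and contributes the remaining factor $2k+1$.
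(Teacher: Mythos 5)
The paper supplies no proof of this lemma: it is a direct citation of \cite[Lemma 5.24]{FOCS20}, which is already stated for the full pattern $P$ and gives $|\Occ_k(P,T)|=\Oh((n/m)k^2)$ (in fact FOCS20 proves the sharper $\floor{|\Occ_k(P,T)|/k}=\Oh((n/m)k)$, i.e., $\Oh((n/m)k)$ intervals of occurrences, each of length $\Oh(k)$). Since the cited lemma already speaks about $P$, ``invoking that result after the projection step'' is circular: the projection to $R$ plays no role in the citation.

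There is also a genuine gap in the sketch itself: the intermediate claim $|\Occ_k(R,T)|=\Oh((n/|R|)k)$ is false for the edit distance. Because of the $\pm k$ shift slack inherent to edit distance, each ``anchor'' at which $R$ occurs with few errors expands to an interval of $\Theta(k)$ starting positions that are all $k$-edit occurrences, and your (correct) misperiod-alignment reasoning still leaves $\Theta(k)$ such anchors, spaced $q$ apart. When $q>2k$ (which holds whenever $m\gg k^2$, since $q\le m/(256k)$), these intervals are pairwise disjoint, so a single window of length $2|R|$ can contain $\Theta(k^2)$ $k$-edit occurrences of $R$. Concretely, take $R=Q^{|R|/q}$ with positions $q,2q,\ldots,dq$ replaced by a fresh sentinel letter, so that $\edl{R}{Q}=d=\ceil{8k|R|/m}$ and $R$ is a valid repetitive region, and take $T'=Q^{2|R|/q}$ with positions $q,2q,\ldots,2dq$ replaced by the same sentinel. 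For each $j\in[d-k\dd d]$, the alignment of $R$ at $s=jq$ has exactly $d-j\le k$ mismatches, and shifting that anchor by up to $\pm(k-(d-j))$ gives further $k$-edit occurrences, for a total of $(k+1)^2=\Theta(k^2)$ in the window---well above the claimed $\Oh((n/|R|)k)=\Oh(k)$.

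Consequently, feeding the correct $|\Occ_k(R,T)|=\Theta((n/m)k^2)$ into your projection inequality $|\Occ_k(P,T)|\le(2k+1)\cdot|\Occ_k(R,T)|$ only yields $\Oh((n/m)k^3)$, which is weaker than the target. The missing insight is that the $(2k+1)$ slack in the projection and the $\Theta(k)$-wide intervals of $R$-occurrences are the \emph{same} indel slack counted twice: each of the $\Oh(k)$ anchors per window gives rise to a single interval of $\Oh(k)$ starting positions of $P$, not $\Oh(k)$ starting positions of $R$ each carrying an additional independent $\Oh(k)$ ambiguity. FOCS20 avoids this double-counting by bounding $\floor{|\Occ_k(P,T)|/k}$ for the whole pattern directly rather than routing through a raw count of $R$-occurrences, and any self-contained proof of the present lemma would have to do the same.
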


We also use the following two auxiliary lemmas from \cite{FOCS20}. Intuitively, \cref{lem:misope} adapts the Landau-Vishkin algorithm (which can be viewed as a generalization of kangaroo jumps to the edit distance).

\begin{lemma}[{\cite[Lemma 6.1]{FOCS20}}]\label{lem:misope}
    Let $S$ denote a string and let $Q$ denote a string (that is possibly given as
    a cyclic rotation $\rot^j(Q)$).
    Then, there is a generator {\tt EditGenerator($S$, $Q$)} ({\tt EditGenerator$^R$($S$, $Q$)}) that
    in the $k$-th call to
    {\tt Next}, returns in $\Oh(k)$ time in the \pillar model the length of the longest prefix (suffix) $S'$ of $S$ and the length of
    the corresponding prefix (suffix) $Q'$ of $Q^\infty$ such that $\ed(S', Q') \le k$.
\end{lemma}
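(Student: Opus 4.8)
The plan is to prove \cref{lem:misope} by running the Landau--Vishkin algorithm with kangaroo jumps, specialized to the case where the second string is the (conceptually infinite) periodic string $Q^{\infty}$ rather than a fragment of a string in the collection. First I would fix $q=|Q|$ and add $QQ$ to the collection, so that every rotation $\rot^{b'}(Q)=QQ[b'\dd b'+q)$ with $b'\in[0\dd q)$ is available as a fragment. For a ``diagonal'' $d$ (the difference between the number of consumed letters of $Q^{\infty}$ and of $S$) and an error budget $e$, let $L_e[d]$ be the largest $i\in[0\dd|S|]$ for which $S[0\dd i)$ can be edited into $Q^{\infty}[0\dd i+d)$ using at most $e$ operations, with $L_e[d]=-\infty$ when $|d|>e$. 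The generator stores the array $(L_e[d])_{|d|\le e}$, of size $\Oh(e)$, between calls; initialization computes the $e=0$ frontier, and the $k$-th call to \texttt{Next} advances $e$ from $k-1$ to $k$ and reads off the answer for budget $k$.

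The one nonstandard ingredient --- and the step I expect to be the main obstacle --- is performing a single kangaroo jump against $Q^{\infty}$ using $\Oh(1)$ \pillar operations, since an $\LCP$ query can only compare bounded fragments. I would prove that, for $a\in[0\dd|S|]$ and $b\ge 0$, the value $\mathrm{LCE}(a,b)=\max\{\,j:S[a\dd a+j)=Q^{\infty}[b\dd b+j)\,\}$ equals $g:=\LCP(S[a\dd|S|),\rot^{b\bmod q}(Q))$ if $g<q$, and $\min(q+h,\,|S|-a)$ otherwise, where $h:=\LCP(S[a+q\dd|S|),S[a\dd|S|))$. The point is that once $S[a\dd{})$ starts with a full period $\rot^{b\bmod q}(Q)=Q^{\infty}[b\dd b+q)$, its longest prefix with period $q$ --- namely $S[a\dd a+q+h)$ --- automatically agrees letter by letter with $Q^{\infty}[b\dd{})$ by periodicity, and the period-$q$ breakpoint at offset $q+h$ is exactly the first mismatch with $Q^{\infty}$ (subject to clamping at the end of $S$). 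This is elementary, but writing it carefully, including all the boundary cases, is the delicate part.

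Given the $\mathrm{LCE}$ primitive, the remainder is the textbook Landau--Vishkin update. For each $d\in[-k\dd k]$ I would set $r=\min(|S|,\max\{L_{k-1}[d-1],\,L_{k-1}[d]+1,\,L_{k-1}[d+1]+1\})$ (out-of-range entries read as $-\infty$), the three terms corresponding to an insertion into $S$, a substitution, and a deletion from $S$, and then $L_k[d]=r+\mathrm{LCE}(r,\,r+d)$ (setting $L_k[d]=-\infty$ if $r<\max(0,-d)$). Each of the at most $2k+1$ diagonals costs $\Oh(1)$ time plus one kangaroo jump, so the $k$-th call runs in $\Oh(k)$ time in the \pillar model, carrying over $\Oh(k)$ space. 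A routine induction on $e$ confirms the stated meaning of $L_e[d]$, so after the $k$-th call the longest prefix $S'$ of $S$ admitting $\ed(S',Q')\le k$ for some prefix $Q'$ of $Q^{\infty}$ has length $\max_{|d|\le k}L_k[d]$; picking any $d^{*}$ attaining this maximum (say, the one of smallest $|d^{*}|$, for definiteness) yields the corresponding prefix length $|Q'|=L_k[d^{*}]+d^{*}$, and both numbers are returned.

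Finally I would note that $\texttt{EditGenerator}^R$ is the mirror image: it uses $\LCP_R$ in place of $\LCP$ (supplied by the \pillar model) together with the analogous ``longest period-$q$ suffix'' identity --- equivalently, one may just run $\texttt{EditGenerator}$ on the reversals of $S$ and $Q$. When $Q$ is supplied as a rotation $\rot^{j}(Q)$, nothing changes: $(\rot^{j}(Q))^{\infty}$ is again $q$-periodic, its rotations are fragments of $\rot^{j}(Q)\rot^{j}(Q)$, and the whole argument goes through verbatim with this string in the role of $Q^{\infty}$.
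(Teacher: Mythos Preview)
The paper does not prove this lemma; it is quoted verbatim as \cite[Lemma 6.1]{FOCS20} and used as a black box. Your proposal is thus not competing with anything in the present paper, but it is the right proof: Landau--Vishkin diagonals with kangaroo jumps, plus a constant-query $\mathrm{LCE}$ against the periodic string $Q^{\infty}$, is precisely how this is done in \cite{FOCS20}. The frontier recurrence and the ``period trick'' (once a full copy of the rotation has been matched, the remaining match length equals the longest period-$q$ prefix of $S[a\dd{})$, computable by a single self-$\LCP$) are both correct.

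One small point to tighten: you write ``add $QQ$ to the collection'' so that every rotation of $Q$ is a fragment. In the \pillar model you cannot, in general, create new strings; you only have access to fragments of strings already in $\mathcal{X}$, and here $Q$ is handed to you as a fragment (possibly even as $\rot^j(Q)$). The fix is immediate and costs only a constant factor: compute $\LCP$ against $\rot^{b\bmod q}(Q)$ as two queries, first against $Q[b\bmod q\dd q)$ and, if that matches fully, then against $Q[0\dd b\bmod q)$. With this adjustment your argument goes through unchanged.
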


\begin{lemma}[{\cite[Lemma 6.3]{FOCS20}}]\label{clm:fixed}
Let $S$ be a string such that
\[|S| \ge (2t+1)|Q|\ \text{and} \edl{S}{Q} = \edp{S}{(Q')} = \eds{S}{(Q'')} \le t,\]
where $Q'$ and $Q''$ are rotations of a string $Q$.

If $\edl{SS'}{Q} \le t$ for a string $S'$, then $\edl{SS'}{Q}=\edp{SS'}{(Q')}$.

If $\edl{S'S}{Q} \le t$ for a string $S'$, then $\edl{S'S}{Q}=\eds{S'S}{(Q'')}$.
\end{lemma}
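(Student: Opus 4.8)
The plan is to establish the first implication; the second follows by applying it to the reversals of $S$, $S'$, $Q$, $Q'$, $Q''$, since reversal preserves $\edl{\cdot}{Q}$, exchanges the roles of prefixes and suffixes of $Q^\infty$ (and of $Q'$ and $Q''$), and sends rotations to rotations. So fix $e:=\edl{S}{Q}=\edp{S}{(Q')}=\eds{S}{(Q'')}\le t$, write $Q'=\rot^{r_1}(Q)$, assume $\delta:=\edl{SS'}{Q}\le t$, and let $\mathcal A$ be an alignment of $SS'$ onto a factor $F$ of $Q^\infty$ of cost $\delta$. The inequality $\delta\le\edp{SS'}{(Q')}$ is immediate because a prefix of $(Q')^\infty$ is a factor of $Q^\infty$, so it remains to produce an alignment of $SS'$ onto a \emph{prefix} of $(Q')^\infty$ of cost at most $\delta$. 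The idea is to take the optimal alignment $\mathcal B$ of $S$ onto a prefix of $(Q')^\infty$ guaranteed by $\edp{S}{(Q')}=e$, and to \emph{splice} its beginning onto the tail of $\mathcal A$ at a position where the two alignments agree on the periodic structure of $Q$.

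The key ingredient making the splice possible is the following rigidity of long almost-$Q$-periodic strings. Call a position $j$ of $S$ \emph{clean} in an alignment if $S[j]$ is matched exactly and is not immediately preceded by an inserted (factor-only) letter; every unit of edit cost creates at most one non-clean position, so $\mathcal A$ restricted to the prefix $S$ has at most $\delta_1\le\delta\le t$ non-clean positions and $\mathcal B$ has at most $e\le t$, at most $e+\delta_1\le 2t$ in total. Since $|S|\ge(2t+1)|Q|$, the non-clean positions leave a range $[a \dd b)$ that is clean in both alignments with $b-a\ge|Q|$; on $[a\dd b)$ each of $\mathcal A$ and $\mathcal B$ matches $S[a\dd b)$ exactly onto a length-$(b-a)$ factor of $Q^\infty$, these two factors are equal, and since $Q$ is primitive a window of $Q^\infty$ of length $\ge|Q|$ determines its starting offset modulo $|Q|$ — so the two factors start at positions of $Q^\infty$ congruent modulo $|Q|$. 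Define the spliced alignment $\mathcal G$: it follows $\mathcal B$ on $S[0\dd a)$ (hence starts at offset $r_1$, i.e.\ inside a prefix of $(Q')^\infty$) and follows $\mathcal A$ from position $a$ onward on $S[a\dd|S|)\cdot S'$. Because the offsets at position $a$ agree modulo $|Q|$, the $Q^\infty$-factor used by the first part extends seamlessly into the one used by the second, so $\mathcal G$ is a valid alignment of $SS'$ onto a prefix of $(Q')^\infty$.

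The main obstacle is to show $\mathcal G$ is no more expensive than $\mathcal A$, equivalently that the cost of $\mathcal B$ on $S[0\dd a)$ does not exceed the cost of $\mathcal A$ on $S[0\dd a)$. Here I would use the standard fact that a prefix of a dynamic-programming-optimal alignment is optimal for its sub-rectangle: the $S[0\dd a)$-part of $\mathcal B$ is an optimal alignment of $S[0\dd a)$ onto the $Q^\infty$-factor it reaches, and likewise for $\mathcal A$. Both target factors end at the offset dictated by the common periodic phase at position $a$; what must be controlled are their \emph{lengths}. Running the phase-matching argument once more to compare the \emph{starts} of $\mathcal B$ and of $\mathcal A|_S$ shows these two starting offsets are within $e+\delta_1\le 2t$ of each other modulo $|Q|$, and when $\mathcal A|_S$ already starts at offset $r_1$ the two factor lengths differ by a multiple of $|Q|$ of absolute value $\le 2t$ — hence, by a short case split on whether $|Q|>2t$, they are equal, the two prefix costs coincide, and $\mathrm{cost}(\mathcal G)\le\delta$. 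The delicate remaining case is when $\mathcal A|_S$ starts at an offset $w\ne r_1$: one argues that anchoring a long almost-$Q$-periodic prefix of $S$ at $w$ instead of at $r_1$ already forces roughly two extra edits for each unit of distance (modulo $|Q|$) between $w$ and $r_1$ — this is exactly where the full strength of the hypothesis $\edp{S}{(Q')}=\eds{S}{(Q'')}=\edl{S}{Q}$, together with primitivity of $Q$, is used (the underlying near-uniqueness of optimal alignments of long almost-periodic strings is of the type developed in~\cite{FOCS20,DBLP:journals/siamcomp/ColeH02}) — and this surplus absorbs the length discrepancy. Altogether $\edp{SS'}{(Q')}\le\mathrm{cost}(\mathcal G)\le\delta=\edl{SS'}{Q}$, which with the trivial inequality proves the first claim; the second is its mirror image under reversal.
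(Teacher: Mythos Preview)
First, note that the paper does not prove this statement: it is imported verbatim as \cite[Lemma 6.3]{FOCS20}. So there is no in-paper proof to compare against; I can only assess whether your argument stands on its own.

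Your overall architecture is sound: take an optimal alignment $\mathcal A$ of $SS'$ onto a factor of $Q^\infty$, an optimal alignment $\mathcal B$ of $S$ onto a prefix of $(Q')^\infty$, find a common clean window $[a\dd b)$ of length $\ge|Q|$ (pigeonhole on at most $2t$ non-clean positions inside $|S|\ge(2t+1)|Q|$), observe that the $Q$-phases of the two alignments agree there, and splice. This is the right skeleton, and it does produce a valid alignment $\mathcal G$ of $SS'$ onto a prefix of $(Q')^\infty$.

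The gap is in your cost comparison. With the obvious notation $b_1+b_2=e$ and $a_1+a_2+a_3=\delta$ for the costs of $\mathcal B$ and $\mathcal A$ on the pieces $S[0\dd a)$, $S[a\dd|S|)$, $S'$, you need $b_1\le a_1$. You try to get this from ``prefix of an optimal alignment is optimal for its sub-rectangle'' together with a comparison of the two target lengths, and then concede that when $\mathcal A|_S$ does not start at offset $r_1$ you are left with a ``delicate remaining case'' that you only sketch by appeal to near-uniqueness results in the literature. That case is exactly where the work lies, and your sketch does not discharge it: the two prefix targets $(Q')^\infty[0\dd c_B)$ and $Q^\infty[s\dd s+c_A)$ need not coincide as strings, so optimality for each separately says nothing about which cost is smaller.

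There is, however, a one-line fix that avoids all of this. Consider the \emph{complementary} splice $\mathcal H$: use $\mathcal A$ on $S[0\dd a)$ and $\mathcal B$ on $S[a\dd|S|)$. Because the phases at $a$ agree, $\mathcal H$ is a valid alignment of $S$ onto a factor of $Q^\infty$, hence $\mathrm{cost}(\mathcal H)=a_1+b_2\ge \edl{S}{Q}=e=b_1+b_2$, so $a_1\ge b_1$. This immediately gives $\mathrm{cost}(\mathcal G)=b_1+a_2+a_3\le a_1+a_2+a_3=\delta$, and you are done (note this uses only $\edl{S}{Q}=\edp{S}{(Q')}$; the hypothesis on $Q''$ is what makes the mirror statement for $S'S$ go through). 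Your length-comparison detour and the ``delicate remaining case'' can be deleted entirely.
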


\begin{proof}[Proof of \cref{lem:red}]
Let us partition $P$ to two (roughly) equal chunks, $P_1$ of length $\floor{m/2}$ and $P_2$ of length $\ceil{m/2}$. Each circular 
$k$-edit occurrence of $P$ in $T$ implies a standard
$k$-edit occurrence of at least one of $P_1$ and $P_2$.
We focus on the case when it implies such an occurrence of $P_1$, noting that the computations for $P_2$ are symmetric.

For a fragment $T'$ of $T$, we denote by $\implied_k(P_1,T')$ the set of circular $k$-edit occurrences of $P$ in $T$ in which a $k$-edit occurrence of $P_1$ is contained in $T'$; a formal definition follows.

\begin{definition}\label{def:implied2}
For $T'=T[t \dd t']$, we define $\implied'_k(P_1,T')$ as a set of pairs $(j,y)$ such that $j \in [0\dd n)$, $m/2 \le y < m$, and there exist positions $i,i',j' \in [0 \dd n)$ and such that:
\begin{itemize}
\item $[i \dd i'] \subseteq [t \dd t']$ and
\item $\gamma := \ed(T[j \dd i),P[y \dd m)) + \ed(T[i \dd i'],P_1) + \ed(T(i' \dd j'],P[|P_1| \dd y)) \le k$.
\end{itemize}
Then, $\implied_k(P_1,T')=\{j\,:\,(j,y) \in \implied'(P_1,T')\}$.
\end{definition}

Let us note that in the above definition $\ed(T[j \dd j'],\,\rot^y(P)) \le \gamma$.

We cover $T$ with fragments of length $\floor{\frac32 |P_1|} + k$ starting at multiples of $\floor{\frac12|P_1|}$. (The last fragments can be shorter.) For each of the fragments $T'$ of $T$, we will compute a representation of a set $A$ such that $\implied_k(P_1,T') \subseteq A \subseteq \cycoc_k(P,T)$.
If $|\Occ_k (P_1, T')|=\Oh(k^2)$, we use the following fact whose proof is based on anchors.

\begin{claim}\label{June20}
If the set $\Occ_k (P_1, T')$ for a fragment $T'$ of $T$ has size $\Oh(k^2)$ and is given,
then a set of positions $A$ such that $\implied_k(P_1,T') \subseteq A \subseteq \cycoc_k(P,T)$, represented as a union of $\Oh(k^4)$ intervals, can be computed in $\Oh(k^4)$ time in the \pillar model.
\end{claim}
\begin{proof}
We compute the set $\anchored_k(P,T,i+s)$ 
for each position $i \in \Occ_k(P_1,T')$, where~$s$ is the starting position of $T'$ in $T$ (i.e., $T'=T[s \dd s+|T'|)$).
By \cref{lem:report-anchored}, this set is represented as a union of $\Oh(k^2)$ intervals and can be computed in $\Oh(k^2)$ time in the \pillar model. Since $|\Occ_k(P_1,T')|=\Oh(k^2)$, the union $A$ of all these sets contains $\cO(k^4)$ intervals and is computed in $\Oh(k^4)$ total time.
Clearly, $A$ satisfies the required inclusions.
\end{proof}

If $|\Occ_k (P_1, T')|=\Oh(k^2)$, \cref{June20} produces such a representation consisting of $\Oh(k^4)$ intervals in $\Oh(k^4)$ time in the \pillar model.
Henceforth we assume that $\floor{|\Occ_k (P_1, T')|/k} > 642045 \cdot (|T'|/|P_1|) \cdot k$. In this case, by \cref{mt}, $P_1$ and the relevant part $\bar{T}'$ of~$T'$ are both almost $Q$-periodic. More formally, the algorithm behind the theorem returns a short primitive string $Q$ and a 
fragment $\bar{T'}$ of $T'$ that contains all occurrences of $P_1$ in $T'$ such that $\edp{P_1}{Q} < 2k$ and $\edp{\bar{T}'}{Q} \le 24k$.
We will compute strings $V$ and $U$ being fragments of $P_2P_1P_2$ and $T$, respectively, and obtain the required set $A$ as a union of the answer to \EPSM for $U$ and $V$ and $\Oh(k^4)$ intervals of positions.

\begin{figure}[htpb]
\vspace*{-0.3cm}
\centering
\begin{tikzpicture}[yscale=0.4]
    \draw (0,0) rectangle (9,1);
    \draw (3,0) -- (3,1)  (6,0) -- (6,1);
    \draw (1.5,0.5) node {\small $P_2$};
    \draw (4.5,0.5) node { ${P_1}$};
    \draw (7.5,0.5) node {\small $P_2$};
    \draw[thick,blue!70!black] (3,1.2) -- (3,1.4) -- node[above] {$R_{\rright}$} (8,1.4) -- (8,1.2);
    \draw[thick,green!30!black] (2,-0.2) -- (2,-0.4) -- node[below] {$R_{\lleft}$} (6,-0.4) -- (6,-0.2);
    \begin{scope}[yshift=-0.5cm]
    \draw[red!80!black,very thick] (2,-1) -- (2,-1.2) -- (8,-1.2) -- (8,-1);
    \draw[red!50!black] (8,-1.1) node[right] {$V$};
    \end{scope}
    \draw[densely dotted] (2,-1.5) -- (2,-0.4)  (8,-1.5) -- (8,1.2);
\end{tikzpicture}
\caption{String $V$ (shown in brown) and repetitive regions $R_{\lleft}$ and $R_{\rright}$ in $P_2P_1P_2$.
}\label{fig:rr}
\end{figure}

\proofsubparagraph{Computing $V$.}
Intuitively, string $V$ is computed by extending the approximate periodicity of the middle fragment $P_1$ in $P_2P_1P_2$ towards both directions. (Note that all rotations of $P$ that contain its first half $P_1$ are substrings of $P_2P_1P_2$.)
In each direction, we stop extending when either $c'k$ errors to a prefix (suffix) of $Q^{|V|}$ are accumulated, for a specified constant $c'$, or we reach the end of the string.
In the former case, we obtain a repetitive region $R_{\rright}$ with a prefix $P_1$ ($R_{\lleft}$ with a suffix $P_1$, respectively); see \cref{fig:rr}.

\begin{center}
\fbox{
\begin{minipage}{9cm}
\vspace*{0.1cm}
\hspace*{-0.2cm}  {\bf function} ComputeV$(P)$

\smallskip
compute $Q$ (\cref{mt})

let $Z=W=P_1$ be the occurrence of $P_1$ in $P_2P_1P_2$

\smallskip
Use \texttt{EditGenerator} to extend $W$ to the right until at least one 
of the following two conditions is satisfied: 
\vspace*{2mm}
\begin{description}
    \item{\bf (a)} 
$W$ is a repetitive region w.r.t.\ $Q$;
    \item{\bf 
(b)} we reach the end of $P_2P_1P_2$.
\end{description}

{\bf if} $W \ne P_1P_2$ {\bf then} $R_{\rright}:=W$

\smallskip
Use \texttt{EditGenerator}$^R$ to extend $Z$ to the left until any of \\
the following two conditions is satisfied:
\vspace*{2mm}
\begin{description}
\item{\bf 
(a)}  we reach the beginning of $P_2P_1P_2$; 
 \item{\bf (b)} $Z$ is a repetitive region w.r.t.\ $Q$.
\end{description}

{\bf if} $Z \ne P_2P_1$ {\bf then} $R_{\lleft}:=Z$

$V:=$\,the shortest substring of $P_2P_1P_2$ containing $Z$ and $W$

\smallskip
{\bf return} $V$, $R_{\lleft}$, $R_{\rright}$
\vspace*{0.1cm}
\end{minipage}
}
\end{center}

More precisely, in the function $ComputeV(P)$ we first extend the fragment equal to $P_1$ to the right, trying to accumulate enough errors with a prefix of $Q^\infty$ in order to reach the threshold specified in \cref{def:repreg_edit}, which is $\Theta(k)$.
Initially, $\edl{P_1}{Q}=\edp{P_1}{Q} < 2k < \ceil{8k|P_1|/m}$.
We use a technique that was developed to compute repetitive regions in the proof of \cite[Lemma 6.4]{FOCS20}.
In short, the \texttt{EditGenerator} from \cref{lem:misope} allows us to find in $\Oh(k)$ time, for each (subsequent) value $\delta$, the longest prefix $W$ of $P_1P_2$ such that $\edp{W}{Q} \le \delta$ until either the threshold from \cref{def:repreg_edit} is reached or $W=P_1P_2$ and we conclude that $\edp{P}{Q}<8k$. By \cref{clm:fixed} and the fact that $q<m/(256k)$, we then have $\edl{W}{Q}=\edp{W}{Q}=\delta$.

We perform the same process by extending the specified occurrence of $P_1$ to the left, using \texttt{EditGenerator}$^R$, obtaining a substring $Z$.
Formally, we first use the \texttt{EditGenerator} for $P_1$ and $Q$ to infer in $\Oh(k^2)$ time a prefix $Q^\infty[0 \dd y)$ of $Q^\infty$ such that $\edp{P_1}{Q}=\ed(P_1,Q^\infty[0 \dd y))$. Then, we apply \texttt{EditGenerator}$^R$ to suffixes of $P_2P_1$ and rotation $\rot^y(Q)$.

We let $V=P^2[\alpha \dd \beta]$ be the shortest substring of $P_2P_1P_2$ that spans both $Z$ and $W$.
By \cref{def:repreg_edit}, we have $\edl{V}{Q} \le 2\cdot \ceil{8km/m}=16k$.

Thus a rotation of $P$ that contains~$P_1$ either contains one of the repetitive regions or it is contained in $V$. By \cref{repreg}, a repetitive region has $\Oh(k^2)$ occurrences in a string of length $\Oh(m)$, so the former case can be solved in $\Oh(k^4)$ time with the aid of anchors as in \cref{June20}. The latter case will lead to \EPSM.

\proofsubparagraph{Computing $U$.}
Similarly, the function $ComputeU(T)$ computes $U$ as an extension of~$\bar{T}'$.
For reasons that will become apparent in the proof of \cref{clm:rots} our stopping conditions on accumulating edits are slightly different.
In the extension to left, we keep going until we have a substring $G$ such that $\eds{G}{(\rot^x(Q))} \geq 10k$ for all $x \in [-34k \dd 34k]$. The extension to the right is similar.
The total time required for this extension in the \pillar model is $\cO(k^3)$.
We will show that the resulting substring $U$ contains all the remaining elements of $\implied_k(P_1,T')$ and that the approximate congruence is satisfied.

\begin{center}
\fbox{
\begin{minipage}{10cm}
\vspace*{0.1cm}
\hspace*{-0.2cm}  {\bf function} ComputeU$(T)$

\smallskip
initially $U=\bar{T}'$;\ compute $Q$ (\cref{mt})

\smallskip
Use \texttt{EditGenerator} to extend $U$ to the right until at least one of the following three conditions is satisfied: 
\vspace*{2mm}
\begin{description}
    \item{\bf 
(a)} we reach the end of $T$; 
    \item{\bf (b)} 
 we have appended $|P_2|+k$ letters; or 
    \item{\bf 
(c)} the appended fragment $G$ of $T$ satisfies 

\hspace*{4mm} $\edp{G}{(\rot^x(Q))} \geq 10k$ for all $x \in |\bar{T}'|\oplus [-34k \dd 34k]$.
\end{description}

\smallskip
Use \texttt{EditGenerator}$^R$ to extend $U$ to the left until any of the following three conditions is satisfied: 
\vspace*{2mm}
\begin{description}
\item{\bf 
(a)}  we reach the beginning of $T$; 
 \item{\bf (b)} we have prepended $|P_2|+k$ letters; or 
  \item{\bf (c)} the prepended substring $F$ of $T$ satisfies
  
  \hspace*{4mm} $\eds{F}{(\rot^x(Q))} \geq 10k$ for all $x \in [-34k \dd 34k]$.
\end{description}

\smallskip
{\bf return} $U$
\vspace*{0.1cm}
\end{minipage}
}
\end{center}

\proofsubparagraph{Verifying approximate congruences.}
Intuitively, the approximate congruence $\bmod |Q|$ in \EPSM follows from the analogous condition in \cref{mt}.

Let $X$ be the substring of $U$ such that $U$ has a prefix $X\bar{T}'$ and $Y$ be the substring of $V$ such that $V$ has a prefix $YP_1$ (i.e., $Z=YP_1$). Both substrings $X$, $Y$ were computed using \texttt{EditGenerator}$^R$, which also produced suffixes $X'$ and $Y'$ of $Q^{2n}$ such that $\ed(X,X') \le 10k+34k=44k$ and $\ed(Y,Y') \le 8k$; see \cref{fig:xyxy}. 
Let $Q_1=\rot^{-|X'|}(Q)$ and $Q_2=\rot^{-|Y'|}(Q)$. We have $\edp{U}{Q_1} \le 2\cdot 44k + 24k=112k$ and $\edp{V}{Q_2} \le 16k$, so $U$ and $V$ are approximately $Q_1$-periodic and approximately $Q_2$-periodic, respectively. In \EPSM, we can therefore take $Q=Q_1$, $Q'=Q_2$ and $r=(|X'|-|Y'|) \bmod q$.

\begin{figure}[htpb]
\centering
\begin{tikzpicture}[xscale=0.55,yscale=0.38]
    \draw[thick,blue] (-3,0) rectangle node {$T$} (12,1);
    \draw[thick] (0,1) rectangle node {$U$} (10.5,2);
    \draw[thick] (0,2) rectangle (10,3);
    \draw[thick] (3,2) -- (3,3);
    \draw (1.5,2.5) node {$X$};
    \draw (6.5,2.5) node {$\bar{T}'$};
    \draw (-0.2,3) rectangle node {$X'$} (3,4);
    \begin{scope}
    \clip (-0.2,4) rectangle (3,5);
    \foreach \dx in {-1.5,0,1.5}{
        \draw[xshift=\dx cm] (0,4) .. controls (0.5,5) and (1,5) .. (1.5,4);
    }
    \end{scope}
    \draw (2.25,4.5) node[above] {$Q$};
    \draw (-0.2,5) rectangle node {$Q_1$} (1.3,6);
    \draw[densely dotted] (-0.2,5) -- (-0.2,4)  (1.3,5) -- (1.3,4);

    \draw[densely dotted] (0,1) -- (0,-5);

    \begin{scope}[yshift=-7cm]
    \draw[thick,brown!70!black] (-1,0) rectangle (14,1);
    \draw[thick,green!20!black] (4,0) -- (4,1)  (9,0) -- (9,1);
    \draw[blue] (1.5,0.5) node {$P_2$};
    \draw[blue] (6.5,0.5) node {$P_1$};
    \draw[blue] (11.5,0.5) node {$P_2$};
    \draw[thick] (0,1) rectangle node {$V$} (11,2);
    \draw[thick] (0,2) rectangle node {$Y$} (4,3);
    \draw[densely dotted] (4,1) -- (4,2);

    \draw (0.2,3) rectangle node {$Y'$} (4,4);
    \begin{scope}
    \clip (0.2,4) rectangle (4,5);
    \foreach \dx in {-0.5,1,2.5}{
        \draw[xshift=\dx cm] (0,4) .. controls (0.5,5) and (1,5) .. (1.5,4);
    }
    \end{scope}
    \draw (3.25,4.5) node[above] {$Q$};
    \draw (0.2,5) rectangle node {$Q_2$} (1.7,6);
    \draw[densely dotted] (0.2,5) -- (0.2,4)  (1.7,5) -- (1.7,4);
    \end{scope}
\end{tikzpicture}
\caption{Definitions of strings $X$, $X'$, $Y$, $Y'$, $Q_1$, $Q_2$.}\label{fig:xyxy}
\end{figure}
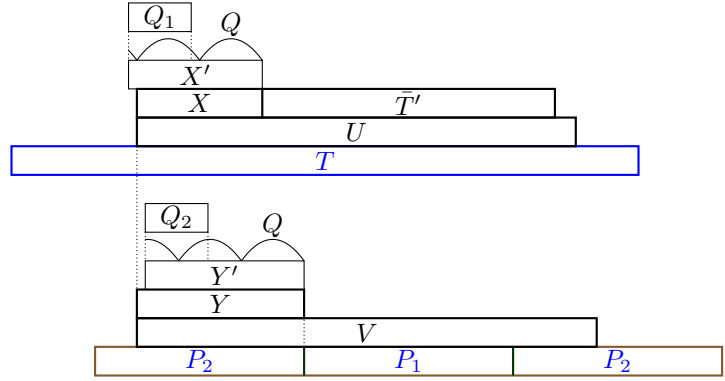

Let $(j,y) \in \implied'_k(P_1,T')$ and let us use the notations $i,i',j'$ from \cref{def:implied2}. Let $\bar{i},\bar{i}'$ be the positions in $\bar{T}'$ such that the fragments $T[i \dd i']$ and $\bar{T}'[\bar{i}\dd \bar{i}']$ correspond.
Assume that positions $p$, $p'$, $x$ satisfy 
$U[p \dd p']=T[j \dd j']\text{ and }\rot^y(P)=V^{(x)}$.
Then 
\[\ed(U[p \dd |X|+\bar{i}),V[x \dd |Y|)) \le k, \text{ so } |X|+\bar{i}-p \equiv_k |Y|-x \pmod{q}.\] 
\noindent 
By \cref{mt}, $\bar{i} \equiv_{24k} 0 \pmod{q}$. This fact and definitions of $X,Y$ lead to the following sequence of approximate congruences:
 \[p-x \equiv_k |X|+\bar{i}-|Y| \equiv_{25k} |X|-|Y| \equiv_{77k} |X'|-|Y'| \pmod{q}\]
so $(|X'|-|Y'|) \bmod q = r$, by definition.

\proofsubparagraph{Computing $A$.} We show how to compute a representation of a set $A$ such that \linebreak$\implied_k(P_1,T') \subseteq A \subseteq \cycoc_k(P,T)$.

The rotations of $P$ that contain $P_1$ are in one-to-one correspondence with the length-$m$ substrings  of $P_2P_1P_2$.
Each such substring contains $R_{\lleft}$, contains $R_{\rright}$, or is contained in~$V$.
We first show that we can efficiently compute circular $k$-edit occurrences of $P$ that imply $k$-edit occurrences of either $R_{\lleft}$ or $R_{\rright}$ (if they exist) using \cref{lem:report-anchored}.
We focus on $R_{\rright}$ as $R_{\lleft}$ can be handled symmetrically.
Due to \cref{repreg}, $R_{\rright}$ has $\cO(k^2)$ $k$-edit occurrences in $T$,
and they can be found in $\Oh(k^4)$ time in the \pillar model using \cref{mt}.
For each such occurrence at position $i$, we perform a call to $\anchored_k(P,T,i)$
that takes $\cO(k^2)$ time, for a total of $\cO(k^4)$ time in the \pillar model. We obtain $\Oh(k^4)$ intervals of positions contained in $\cycoc_k(P,T)$.

Now we focus on the remaining elements of the set $\implied_k(P_1,T')$ that correspond to $k$-edit occurrences of length-$m$ substrings of $V$ in $T$. By the following claim, it suffices to restrict the search to occurrences in $U$.

\begin{claim}\label{clm:rots}
If $j\in \implied_k(P_1,T')$, $U=T[u \dd u']$, and $T[j \dd j']$ is the substring at distance at most $k$ from a length-$m$ substring of $V$ defined as in \cref{def:implied2}, then $[j \dd j'] \subseteq [u \dd u']$.
\end{claim}
\begin{proof}
The proof is by contradiction.
Suppose $[j \dd j'] \not\subseteq [u \dd u']$.

We first assume that $j<u$.
We notice that $j<u$ is possible only if we stopped extending~$U$ to the left because we accumulated enough errors.
Assume that $(j,y) \in \implied'_k(P_1,T')$ and let $\rot^y(P)=V^{(x)}$ and positions $i,i'$ be defined as in \cref{def:implied2} (in particular, $T[i \dd i']$ is aligned with $P_1$).
As before, let $X$ be the substring of $U$ such that $U$ has a prefix $X\bar{T}'$ and $Y$ be the substring of $V$ such that $V$ has a prefix $YP_1$.
We have $V[x \dd |Y|) = P[y \dd m)$.
Further, let $F$ be a suffix of $Q^{2n}$ for which $\ed(V[x \dd |Y|),F)$ is minimal. By how~$V$ was computed towards identifying a repetitive region, we have $\ed(V[x \dd |Y|),F) \le 8k$.
Then, we have that $\ed(T[j \dd i),F)$ equals the minimum of $\ed(T[j \dd u+|X|),F_1) + \ed(T[u+|X| \dd i),F_2)$ over all partitions $F=F_1F_2$.
Now, if the second summand is less than $10k$, and since $i - (u + |X|) \equiv_{24k} 0 \pmod q$, $F_1$ is equal to a suffix of $(\rot^y(Q))^{2n}$ for some $y\in [-34k \dd 34k]$. Hence, since $j<u$, the computation of $U$ guarantees that $\ed(T[j \dd i),F) \ge 10k$.

Then, via the triangle inequality, we have
\begin{align*}
k &\geq \ed(\rot^y(P), T[j \dd j']) \\ 
& \geq \ed(P[y\dd m),T[j\dd i))
 = \ed(V[x\dd |Y|),T[j\dd i))\\
& \geq  \ed(T[j\dd i), F) - \ed(F, V[x\dd |Y|))\geq 10k - 8k>k,
\end{align*}
thus obtaining a contradiction.

Now assume that $j'>u'$.
We have $V[|Y| \dd x+m) = P[0 \dd y)$.
Let $F$ be a prefix of $Q^\infty$ for which $\ed(V[|Y| \dd x+m),F)$ is minimal. By how~$V$ was computed towards identifying a repetitive region, we have $\ed(V[|Y| \dd x+m),F) \le 8k$.
Then, we have that $\ed(T[i \dd j'],F)$ equals the minimum of $\ed(T[i \dd u'+|X|],F_1) + \ed(T(u'+|X| \dd j'],F_2)$ over all partitions $F=F_1F_2$.
Now, if the first summand is less than $10k$, and since $(u' + |X|) - i\equiv_{24k} |\bar{T}'| \pmod q$, $F_2$ is equal to a prefix of $(\rot^y(Q))^{2n}$ for some $y\in |\bar{T}'| \oplus [-34k \dd 34k]$. Hence, since $j'>u'$, the computation of $U$ guarantees that $\ed(T[i \dd j'],F) \ge 10k$.

Then, via the triangle inequality, we have
\begin{align*}
k &\geq \ed(\rot^y(P), T[j \dd j']) \\ 
& \geq \ed(P[0\dd y),T[i\dd j'])
 = \ed(V[|Y| \dd x+m),T[i\dd j'])\\
& \geq  \ed(T[i\dd j'], F) - \ed(F, V[|Y| \dd x+m))\geq 10k - 8k>k,
\end{align*}
thus obtaining a contradiction and completing the proof of the claim.
\end{proof}

Therefore, by the last claim, the remaining elements of the set $A$ are included in the output to \EPSM for $U$ and $V$. If $|U|<m$, we do not need to construct the instance of \EPSM.
This also completes the proof of the lemma.
\end{proof}

Let us now restate and prove our structural result.

\thmcomb*
\begin{proof}
\cref{thm:comb} readily follows from the proof of \cref{lem:red}. If $|V|\geq m$, then some rotation of $P$ is almost periodic. Otherwise, we only have $\cO(k^2)$ anchors for approximate circular occurrences (stemming from occurrences of some of $P_1$, $P_2$, or a repetitive region obtained by extending either of $P_1$ or $P_2$ in some direction).
\end{proof}

\section{Locked Fragments}
The notion of locked fragments originates from~\cite{DBLP:journals/siamcomp/ColeH02}. We use them as defined in \cite{FOCS20}. Let us state~\cite[Lemma 6.9]{FOCS20} \footnote{The original lemma also concluded that $L_1$ is a so-called $k$-locked prefix; however, this property is not needed here (and, in particular, a $k$-locked string is also locked).}
 with $d_S=112k$, for $k>0$; this characterization of locked fragments will be sufficient for our purposes. See \cref{July6} for an illustration.

\begin{lemma}[see {\cite[Lemmas 5.6 and 6.9]{FOCS20}}]\label{lem:locked_deomcp}
Let $S$ denote a string, $Q$ denote a primitive string, $q=|Q|$, and suppose that $\edl{S}{Q} \le 112k$
and $|S| \ge 225kq$
for some positive integer~$k$.

Then there is an algorithm which in $\Oh(k^2)$ time in the \pillar model computes disjoint \emph{locked fragments} $L_1,\ldots,L_\ell$ of $S$ satisfying:
\begin{enumerate}[(a)]
\smallskip
\item $S=L_1 Q^{\alpha_1} L_2 Q^{\alpha_2} \cdots L_{\ell-1} Q^{\alpha_{\ell-1}} L_\ell,$ where $\alpha_i \in \mathbb{Z}_{> 0}$ for all $i$,

\smallskip
\item $\edl{S}{Q} = \sum_{i=1}^\ell \edl{L_i}{Q}$ and $\edl{L_i}{Q}>0$ for all $i \in (1 \dd \ell)$,

\smallskip
\item $\ell=\Oh(k)$ and $\sum_{i=1}^\ell |L_i|  \le 676 kq$.
\end{enumerate}
\end{lemma}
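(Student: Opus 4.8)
The plan is to reprove what is, up to the specific choice $d_S=112k$ and the explicitly-unneeded ``$k$-locked prefix'' conclusion, \cite[Lemmas 5.6 and 6.9]{FOCS20}, whose algorithmic kernel in turn goes back to Cole and Hariharan~\cite{DBLP:journals/siamcomp/ColeH02}. The first step is to compute an optimal alignment of $S$ onto $Q^{\infty}$. Running the generator \texttt{EditGenerator}$(S,Q)$ of \cref{lem:misope} and calling \texttt{Next} until the reported number of edits either exceeds $112k$ (impossible, since $\edl{S}{Q}\le 112k$) or has consumed all of $S$ gives, in $\Oh(k)$ calls of total cost $\Oh(k^2)$ in the \pillar model, an explicit optimal alignment $\mathcal{A}$ (so $\mathcal{A}$ performs exactly $\edl{S}{Q}\le 112k$ edit operations), together with, for every maximal error-free run of $S$ between two consecutive edits, the residue modulo $q$ at which $\mathcal{A}$ aligns that run inside $Q^{\infty}$. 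A constant-length search over the initial phase lets us treat $\edl{\cdot}{\cdot}$ rather than merely $\edp{\cdot}{\cdot}$, and the hypothesis $|S|\ge 225kq$ guarantees there is enough exact $Q$-periodic material for all of this.

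\textbf{Reading off the decomposition.} Assume $\mathcal{A}$ has been put into \emph{normal form}: every error-free run of $S$ of length at least a threshold $\tau=\Theta(q)$ (say $\tau=6q$) is aligned at zero cost in its canonical phase --- I argue below that this is possible. Now scan the edits of $\mathcal{A}$ from left to right and put two consecutive edits into one \emph{cluster} iff the error-free run of $S$ between them is shorter than $\tau$; then extend each cluster leftwards and rightwards to the nearest position of $S$ that $\mathcal{A}$ aligns to a multiple of $q$ in $Q^{\infty}$. The extended clusters become the interior fragments $L_2,\dots,L_{\ell-1}$; between two consecutive ones what remains is an error-free run, still of length $\ge\tau-2q>0$, both of whose endpoints sit at $q$-phase $0$ and which $\mathcal{A}$ aligns at zero cost, hence a genuine power $Q^{\alpha_i}$ with $\alpha_i\in\mathbb{Z}_{>0}$; the two leftover ends of $S$ are $L_1$ and $L_\ell$. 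This is the concatenation~(a). Every cluster contains at least one edit and the clusters partition the $\le 112k$ edits of $\mathcal{A}$, so $\ell\le 112k+2=\Oh(k)$. For the length bound, if $c_i$ is the number of edits of $\mathcal{A}$ inside $L_i$, its extreme edits are less than $(c_i-1)\tau$ apart and the two boundary roundings add less than $2q$, so $|L_i|<6q\,c_i$ for interior $i$ while $|L_1|,|L_\ell|<q$; since in normal form every edit of $\mathcal{A}$ falls inside some $L_i$ and none inside a $Q^{\alpha_i}$, we have $\sum_i c_i=\edl{S}{Q}\le 112k$, whence $\sum_i|L_i|<676kq$. The bookkeeping costs $\Oh(k)$ time on top of computing $\mathcal{A}$, so $\Oh(k^2)$ overall.

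\textbf{Property~(b) and the main obstacle.} The inequality $\edl{S}{Q}\ge\sum_i\edl{L_i}{Q}$ is immediate: an optimal substring of $Q^{|S|+q}$ facing $S$ splits along the alignment into consecutive blocks facing $L_1,Q^{\alpha_1},L_2,\dots,L_\ell$, each block is a substring of $Q^{\infty}$, and its cost is at least $\edl{L_i}{Q}$ on an $L_i$-block and at least $\edl{Q^{\alpha_i}}{Q}=0$ on a $Q^{\alpha_i}$-block. For the reverse inequality, observe that in the normal-form $\mathcal{A}$ each $Q^{\alpha_i}$ is aligned at cost $0$, so $\edl{S}{Q}=\sum_i c_i'$ where $c_i'$ is the cost of $\mathcal{A}$ restricted to $L_i$; since $c_i'\ge\edl{L_i}{Q}$, this gives $\sum_i\edl{L_i}{Q}\le\edl{S}{Q}$, and together with the easy direction all inequalities become equalities, which is~(b) (and incidentally $c_i'=\edl{L_i}{Q}$ for every $i$). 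The one genuinely delicate point --- the step I expect to be the main obstacle --- is establishing the existence of a normal-form optimal alignment: one must rule out being \emph{forced} to spend an edit inside a long exactly-$Q$-periodic stretch. This rests on $Q$ being primitive (so its $q$ rotations are pairwise distinct and an out-of-phase alignment of a periodic run of length $\ge q$ shows up as a mismatch within the first $q$ symbols) and on the synchronization lemma \cref{clm:fixed}, which applies here precisely because $|S|\ge 225kq\ge(2\cdot 112k+1)q$ supplies the length needed at the edit budget $112k$. Making this argument go through while keeping the ``long enough'' demands of the normal form compatible with the $\Theta(q)$ clustering threshold --- hence with the $\Oh(kq)$ total-length bound of~(c) --- is the place where the bounds $112k$, $225kq$, $676kq$ get pinned down, and it is exactly the content that \cite{FOCS20} isolates in its Lemmas 5.6 and 6.9.
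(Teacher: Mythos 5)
The paper does not prove \cref{lem:locked_deomcp} at all --- it is imported verbatim (with $d_S = 112k$ fixed) from \cite[Lemmas 5.6 and 6.9]{FOCS20}, and the only thing the present paper does with it is to check that $U$ and $V$ satisfy the hypotheses. So there is no ``paper's own proof'' to compare yours to; you are attempting to reconstruct an external result.

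Your outline (compute an optimal alignment onto $Q^\infty$, normalize it so that long exact runs sit at canonical phase, cluster nearby edits, snap cluster boundaries to $q$-phase~$0$ positions) is indeed the Cole--Hariharan / \cite{FOCS20} strategy, and properties (a) and (c) follow cleanly from that skeleton once the normal form is in hand. However, the argument as written has a genuine gap precisely at the point you yourself flag as ``the main obstacle'': you assert the existence of a normal-form optimal alignment but never actually prove it. Gesturing at \cref{clm:fixed} is not enough --- that lemma controls what happens when you extend a single already-synchronized approximate power by one string, whereas what you need is that a global optimal alignment of all of $S$ can be locally rerouted, inside each sufficiently long exact $Q$-run, to pass through in-phase at zero cost without increasing the total cost. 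That requires an exchange argument on the alignment (not just primitivity of $Q$), and it is exactly this exchange argument that pins down the threshold $\tau$ and hence the constants; without it, property (b), the claim that no edit falls inside a $Q^{\alpha_i}$, and the bound $\sum|L_i|\le 676kq$ are all unsupported.

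Two smaller issues. First, ``a constant-length search over the initial phase lets us treat $\edl{\cdot}{\cdot}$'' is not right: there are $q$ candidate phases and $q$ is not $\Oh(1)$; what \cite{FOCS20} actually uses (their Lemma 6.5) is a nontrivial procedure to find the correct rotation of $Q$ so that $\edl{S}{Q}=\edp{S}{Q}$, and quoting it (or proving it) is necessary to stay within the $\Oh(k^2)$ budget. Second, your description makes $L_1$ and $L_\ell$ the error-free leftover ends of $S$ and bounds $|L_1|,|L_\ell|<q$, but in the lemma $L_1$ and $L_\ell$ are allowed to (and in general will) contain edits --- if a cluster abuts position $0$ there is no $q$-phase-$0$ position to its left to snap to, and that cluster becomes $L_1$. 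Your accounting still survives because those edits are already charged via $c_i$, but as stated the claim ``the two leftover ends of $S$ are $L_1$ and $L_\ell$'' with $|L_1|,|L_\ell|<q$ is incorrect and needs to be rephrased. In short: right skeleton, but the core normalization lemma and the $\edl$-versus-$\edp$ reduction are missing, so this is not yet a proof.
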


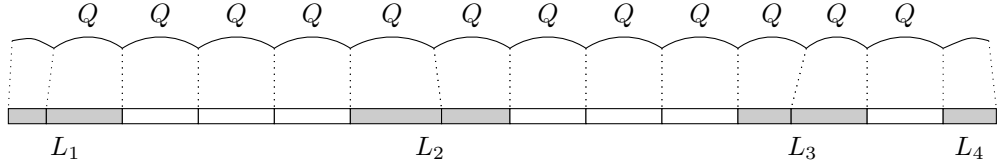
\begin{figure}[htpb]
\centering
\begin{tikzpicture}
\def\lenH{0.2}
\def\lenD{1.0}
\tikzstyle{white}=[fill=white]
\tikzstyle{gray}=[fill=white!80!black]

\pgfmathsetmacro{\prevX}{0}
\foreach \width/\style [count=\i] in {
  0.5/gray, 1/gray,
  1/white, 1/white, 1/white,
  1.2/gray,0.9/gray,
  1/white, 1/white, 1/white,
  0.7/gray,1/gray,
  1/white, 0.7/gray%
  } {
  \pgfmathsetmacro{\currX}{\prevX + \width};
  \coordinate (l\i) at (\prevX, 0);
  \coordinate (ll\i) at (\prevX, \lenH);
  \coordinate (r\i) at (\currX, \lenH);
  \coordinate (d\i) at (\currX, 0);
  \draw[\style] (l\i) rectangle (r\i);
  \global\let\prevX=\currX
}

\node[below,yshift=-0.15cm] at ($(l1)!0.5!(r2)$) {$L_1$};
\node[below,yshift=-0.15cm] at ($(l6)!0.5!(r7)$) {$L_2$};
\node[below,yshift=-0.15cm] at ($(l11)!0.5!(r12)$) {$L_3$};
\node[below,yshift=-0.15cm] at ($(l14)!0.5!(r14)$) {$L_4$};

\foreach \i in {3,4,5, 8,9,10, 13} {
  \coordinate (uL\i) at ($(l\i)+(0,\lenD)$);
  \coordinate (uR\i) at ($(d\i)+(0,\lenD)$);
  \coordinate (m1) at ($(uL\i)!0.3!(uR\i)$);
  \coordinate (m2) at ($(uL\i)!0.7!(uR\i)$);
  \draw (uL\i) .. controls ($(m1)+(0,0.2cm)$) and ($(m2)+(0,0.2cm)$) .. (uR\i)
    node[midway,above] {$Q$};
  \draw[dotted] (uL\i)--(ll\i);
  \draw[dotted] (uR\i)--(r\i);
}

\foreach \i/\dA/\dB in {2/0.1/0, 6/0/-0.1, 7/-0.1/0, 11/0/0.2, 12/0.2/0} {
  \coordinate (uL\i) at ($(l\i)+(\dA,\lenD)$);
  \coordinate (uR\i) at ($(d\i)+(\dB,\lenD)$);
  \coordinate (m1) at ($(uL\i)!0.3!(uR\i)$);
  \coordinate (m2) at ($(uL\i)!0.7!(uR\i)$);
  \draw (uL\i) .. controls ($(m1)+(0,0.2cm)$) and ($(m2)+(0,0.2cm)$) .. (uR\i)
    node[midway,above] {$Q$};
  \draw[dotted] (uL\i)--(ll\i);
  \draw[dotted] (uR\i)--(r\i);
}

\coordinate (uL1) at ($(l1)+(0.05,\lenD)+(0,0.1)$);
\coordinate (uR1) at ($(d1)+(0.1,\lenD)$);
\coordinate (m1) at ($(uL1)!0.5!(uR1)$);
\draw (uL1) .. controls ($(m1)+(0,0.1cm)$) .. (uR1);
\draw[dotted] (uL1)--(ll1);

\coordinate (uL14) at ($(l14)+(0,\lenD)$);
\coordinate (uR14) at ($(d14)+(-0.1,\lenD)+(0,0.1)$);
\coordinate (m1) at ($(uL14)!0.6!(uR14)$);
\draw (uL14) .. controls ($(m1)+(0,0.1cm)$) .. (uR14);
\draw[dotted] (uR14)--(r14);

\end{tikzpicture}
\caption{Illustration of \cref{lem:locked_deomcp}. We have a decomposition $S=L_1\cdot Q^3 \cdot L_2 \cdot Q^3 \cdot L_3 \cdot Q^1 \cdot L_4$.  $L_1$ is an approximate suffix of $Q^{|S|}$, $L_4$ is an approximate prefix of $Q^\infty$, and internal gray parts are \emph{approximate} powers of $Q$. The remaining (white) fragments are \emph{exact} powers of $Q$.}\label{July6}
\end{figure}

Let us consider the decompositions obtained by applying \cref{lem:locked_deomcp} to strings $U$ and~$V$ from \EPSM w.r.t.\ the string $Q$.
Strings $U$ and $V$ are almost $Q$-periodic and almost $Q'$-periodic,
respectively, so $\edl{U}{Q},\edl{V}{Q} \le 112k$.
Moreover, $|U|,|V| \ge m \ge 256kq > 225k q$.
Thus, $U$ and $V$ satisfy the assumptions of the lemma.
If any of the decompositions starts with a locked prefix of length smaller than $q$ (possibly empty) or ends with a locked suffix of length smaller than $q$, we extend the locked fragment by a copy of~$Q$ and possibly by a neighbouring locked fragment if this copy was the only copy separating them. The total length of the locked fragments increases by at most $2q \le 2kq$, so it is bounded by $678kq$.

\section{Overlap Case of \EPSM}\label{sec:case1}
We consider all possible \emph{offsets} $\Delta$ (integers $\Delta \in (-|V| \dd |U|)$) by which we can shift $V$, looking for a length-$m$ substring of $V$ that approximately matches a substring of $U$.

We denote  $\Ext_t(X)=\bigcup_{x \in X} \{y\,:\, |x-y|\le t\}$.
Denote also by $\locked(U)$, $\locked(V)$ the set of  positions in all locked fragments in $U$, $V$, respectively.
\begin{definition}\label{def:ov_off}
$\Delta$ is a \emph{$t$-overlap offset} if there are positions $p, x$ such that $p-x =\Delta$, and
 \[p\in X\oplus \{-m,0,m\}
 ,\ \ 
x\in Y\oplus \{-m,0,m\}
\ \ \text{where } X=\Ext_t(\locked(U))\text{, } Y=\locked(V).\]
Otherwise $\Delta$ is a \emph{$t$-non-overlap offset}.
\end{definition}

An integer $\Delta$ is called a \emph{valid offset} if $\Delta\, \equiv_{77k}\, r \pmod{q}$. (Recall the definition of $r$ in \EPSM.) For two integer sets $A$ and $B$, let $A\ominus B = \{a - b\::\; a\in A,\, b \in B\}$.

\begin{observation}\label{obs:intervals}
For any intervals $I, J$, the set $\Ext_t(I\ominus J)$
is an interval of size $|I|+|J|-1+2t$ that can be computed in $\Oh(1)$ time.
\end{observation}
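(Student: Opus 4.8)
The plan is to reduce everything to arithmetic on the endpoints of the two intervals. Write $I=[a\dd a']$ and $J=[b\dd b']$, so that $|I|=a'-a+1$ and $|J|=b'-b+1$. The first step is to show that the difference set $I\ominus J$ is again an interval of consecutive integers, namely $I\ominus J=[a-b'\dd a'-b]$. The inclusion "$\subseteq$" is immediate: if $i\in I$ and $j\in J$, then $a-b'\le i-j\le a'-b$. For the reverse inclusion, given an integer $c$ with $a-b'\le c\le a'-b$, set $j:=\max(b,\,a-c)$; one checks directly that $b\le j\le b'$ (using $c\ge a-b'$ for the upper bound) and that $i:=c+j$ satisfies $a\le i\le a'$ (using $c\le a'-b$ for the upper bound), whence $c=i-j\in I\ominus J$. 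Therefore $I\ominus J$ is the interval $[a-b'\dd a'-b]$, of size $(a'-b)-(a-b')+1=(a'-a)+(b'-b)+1=|I|+|J|-1$.

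The second step is to apply $\Ext_t$. By definition $\Ext_t(X)=\bigcup_{x\in X}[x-t\dd x+t]$ for any set $X$; when $X=[c\dd c']$ is an interval of consecutive integers this union telescopes to the single interval $[c-t\dd c'+t]$, of size $(c'-c+1)+2t$. Combining this with the first step, $\Ext_t(I\ominus J)=[a-b'-t\dd a'-b+t]$ is an interval of size $|I|+|J|-1+2t$, as claimed. Finally, computing this interval only requires evaluating its two endpoints $a-b'-t$ and $a'-b+t$ from the inputs by $\Oh(1)$ integer additions, so the whole procedure runs in $\Oh(1)$ time.

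The statement is elementary and I do not anticipate any real obstacle. The only point that genuinely uses a hypothesis is the surjectivity argument in the first step --- that $I\ominus J$ hits \emph{every} integer between its two extreme values --- which relies on $I$ and $J$ being intervals of consecutive integers; for arbitrary finite integer sets $A$ and $B$, the difference set $A\ominus B$ need not be an interval, so this is where care is needed.
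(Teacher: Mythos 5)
Your proof is correct. The paper states this as an observation without providing a proof, so there is no competing argument to compare against; the approach you take — showing $I\ominus J=[a-b'\dd a'-b]$ by direct inclusion in both directions and then noting $\Ext_t$ of an interval simply pads the endpoints — is the natural and essentially unique elementary argument, and your witness construction $j:=\max(b,a-c)$ for the surjectivity direction checks out.
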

\begin{lemma}\label{lem:count_overlap}
The set of valid $t$-overlap offsets can be represented as a union of $\Oh(k^2+k^2t/q)$ intervals of length $\Oh(k)$ each. This representation can be computed in $\Oh(k^2+k^2t/q)$ time in the \pillar model.
\end{lemma}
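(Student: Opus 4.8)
The plan is to count valid $t$-overlap offsets by enumerating which pair of locked fragments (one in $U$, one in $V$) certifies the overlap condition, and for each such pair, determining how many valid offsets this certification yields. By Definition~\ref{def:ov_off}, $\Delta$ is a $t$-overlap offset iff $\Delta \in (X\oplus\{-m,0,m\})\ominus(Y\oplus\{-m,0,m\})$ where $X=\Ext_t(\locked(U))$ and $Y=\locked(V)$. Writing $\locked(U)=\bigcup_a I_a$ and $\locked(V)=\bigcup_b J_b$ as unions of the $\Oh(k)$ locked-fragment intervals given by Lemma~\ref{lem:locked_deomcp}, and expanding the $\{-m,0,m\}$ shifts, the set of $t$-overlap offsets is a union over the $\Oh(k)\cdot\Oh(k)\cdot 9 = \Oh(k^2)$ triples (locked fragment in $U$, locked fragment in $V$, choice of the two $\pm m$ shifts) of the sets $\Ext_t(I_a\ominus J_b)$ (translated by $0$, $\pm m$, or $\pm 2m$). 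By Observation~\ref{obs:intervals}, each such set is a single interval of length $|I_a|+|J_b|-1+2t$, computable in $\Oh(1)$ time.

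First I would fix one such triple and intersect the corresponding interval $K$ with the arithmetic-type constraint of being a valid offset, i.e.\ $\Delta \equiv_{77k} r \pmod q$. The set of integers that are $77k$-approximately congruent to $r$ modulo $q$ is itself a union of intervals of length $2\cdot 77k+1 = \Oh(k)$ spaced $q$ apart; restricted to an interval $K$ of length $\Oh(|I_a|+|J_b|+t)$, it contributes $\Oh(1 + (|I_a|+|J_b|+t)/q)$ such length-$\Oh(k)$ intervals, and this restriction is computable in time proportional to the number of intervals produced. Here I would use the bound $\sum_a |I_a| = \Oh(kq)$ and $\sum_b |J_b| = \Oh(kq)$ from Lemma~\ref{lem:locked_deomcp}(c), together with the fact that $t\le \Oh(kq)$ in our applications (or simply keeping $t$ as a parameter). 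Summing over all $\Oh(k^2)$ triples: the contribution from the ``$+1$'' term is $\Oh(k^2)$ intervals; the contribution from the $(|I_a|+|J_b|)/q$ terms, when summed over the $\Oh(k)$ choices of the other fragment and the $\Oh(1)$ shift choices, telescopes to $\Oh(k)\cdot (\sum_a|I_a| + \sum_b|J_b|)/q = \Oh(k)\cdot \Oh(kq)/q = \Oh(k^2)$ intervals; and the contribution from the $t/q$ term is $\Oh(k^2)\cdot t/q = \Oh(k^2 t/q)$ intervals. This yields the claimed $\Oh(k^2 + k^2 t/q)$ intervals of length $\Oh(k)$, with a matching running time since every step is either $\Oh(1)$ per triple or proportional to the number of output intervals, and everything reduces to \pillar-model-free arithmetic on $\Oh(1)$-sized interval representations (the only \pillar operations are the $\Oh(k^2)$-time computation of the locked decompositions, already performed).

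The main obstacle I anticipate is bookkeeping the interval-arithmetic carefully enough to get the stated bound rather than something weaker like $\Oh(k^2 + k^3 t/q)$ or $\Oh(k^2 \cdot kq/q) = \Oh(k^3)$: one must be careful that each locked fragment in $U$ is paired against each locked fragment in $V$ and not naively bounded by (number of fragments)$\times$(max fragment length)/$q$, which would lose a factor of $k$. The key quantitative input that makes the telescoping work is precisely the aggregate length bounds $\sum|I_a|,\sum|J_b| = \Oh(kq)$, so the sum $\sum_{a,b}(|I_a|+|J_b|)$ is $\Oh(k)\cdot\Oh(kq) = \Oh(k^2 q)$ rather than $\Oh(k^2)\cdot\Oh(kq)$. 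A minor secondary point is to confirm that merging overlapping output intervals (or simply allowing duplicates, which the statement of Lemma~\ref{lem:report-anchored} already tolerates downstream) does not affect the asymptotic count, which is immediate. Finally, one should double-check the edge cases where a locked prefix/suffix was extended by a copy of $Q$ in Section~3, but this only changes constants in the length bounds and is harmless.
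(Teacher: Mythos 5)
Your proof is correct and takes essentially the same route as the paper: both enumerate the $\Oh(k^2)$ pairs of locked fragments (with $\pm m$ shifts), observe that each pair yields an overlap-offset interval of length $\Oh(\ell_i+\ell'_j+t)$ via \cref{obs:intervals}, intersect with the modular validity constraint to get $\Oh(1+(\ell_i+\ell'_j+t)/q)$ length-$\Oh(k)$ intervals per pair, and then sum using $n_1,n_2=\Oh(k)$ together with $\sum\ell_i,\sum\ell'_j=\Oh(kq)$, which is precisely the rearrangement $\sum_{i,j}(\ell_i+\ell'_j+t)=n_1 n_2 t + n_2\sum\ell_i+n_1\sum\ell'_j$ used in the paper. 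Your remark about avoiding the naive $\Oh(k^3)$-style bound identifies exactly the point the paper's proof relies on.
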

\begin{proof}
Let $\ell_1,\ldots,\ell_{n_1}$ and $\ell'_1,\ldots,\ell'_{n_2}$ be the lengths of locked fragments in $U$ and $V$, respectively, and
$s_1=\sum_{i=1}^{n_1} \ell_i$, $s_2=\sum_{i=1}^{n_2} \ell'_i$.
By point (c) in \cref{lem:locked_deomcp}, we have $n_1+n_2=\Oh(k)$ and $s_1+s_2=\Oh(kq)$. 
By \cref{obs:intervals}, the set of $t$-overlap offsets is a union of $\Oh(k^2)$ intervals of total length proportional to:
\[\sum_{i=1}^{n_1} \sum_{j=1}^{n_2} (\ell_i+\ell'_j+t)=n_1n_2t+n_2\sum_{i=1}^{n_1} \ell_i+n_1\sum_{j=1}^{n_2}\ell'_j\le n_1n_2t+(n_1+n_2)(s_1+s_2)=\Oh(k^2(t+q)).\]
The intervals can be computed in $\Oh(k^2)$ time.
An interval of length $\ell$ contains $\Oh(k+\ell k/q)$ valid offsets  grouped into $\Oh(1+\ell/q)$ intervals of length $\Oh(k)$ each. These maximal intervals of offsets can be computed in $\Oh(1+\ell/q)$ time via elementary modular arithmetics. Therefore, the number of intervals of $t$-overlap offsets that are valid is proportional to
\[\big(\sum_{i=1}^{n_1} \sum_{j=1}^{n_2} 1\big) + \Oh(k^2+k^2t/q)=\Oh(k^2+k^2t/q)\]
and all of them can be computed in $\Oh(k^2+k^2t/q)$ time.
\end{proof}

\noindent
An app-match $(p,x)$ is called a $t$-{\bf overlap app-match} if and only if $p-x$ is a $t$-overlap offset.  In this section, we consider 
$t$-overlap app-matches. In \cref{sec:case234}, we consider $t$-non-overlap app-matches: app-matches $(p,x)$ such that $p-x$ is a $t$-non-overlap offset, for $t=\Theta(qk)$.

It follows from the statement of \EPSM that if $(p,x)$ is an app-match, then $p-x$ is a valid offset.
The following fact, together with \cref{lem:report-anchored}, implies a fast algorithm for computing the following set for a given offset $\Delta$:
\[\{p\in \Occ_k(V^{(x)},U)\,:\; \Delta=p-x,\,\Delta \equiv_{77k} r \pmod{q}\}.\]

\begin{fact}\label{June27}
If $(p,x)$ is an app-match, $\Delta=p-x$ and $\Delta'=m-\alpha+\Delta$, then the corresponding circular $k$-edit occurrence $U[p \dd p']$ is anchored at a position in $[\Delta' -k\dd \Delta'+k]$; see \cref{fig:anchor}.
\end{fact}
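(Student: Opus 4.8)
The plan is to track where the claimed circular $k$-edit occurrence $T[p \dd p']$ of $P$ sits relative to the cut between $V$ and the copy of $P^2$ it is taken from. Recall that $V = P^2[\alpha \dd \beta]$, so position $x$ in $V$ corresponds to position $\alpha + x$ in $P^2$, i.e.\ to position $(\alpha + x) \bmod m$ in $P$; concretely, the length-$m$ window $V^{(x)}$ equals $\rot^{\alpha+x}(P)$, and it decomposes as $P[(\alpha+x)\bmod m \dd m) \cdot P[0 \dd (\alpha+x)\bmod m)$. The boundary between these two halves of $\rot^{\alpha+x}(P)$ falls inside the window $V^{(x)}$ at offset $m - ((\alpha+x) \bmod m)$ from its left end, which is $m - \alpha + x$ shifted by a multiple of $m$; since we only care about the position modulo the structure of $P$, and since $\Delta' = m-\alpha+\Delta = m-\alpha+(p-x)$ is exactly the image of that boundary under the alignment that maps $V^{(x)}$ onto $T[p \dd p']$, the split point of $P$ (in the sense of Definition~\ref{def:anchored}, $P = XY$) lands in $T$ at position $\Delta'$ when there are no edits.

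The key step is then to account for the edits. Since $(p,x)$ is an app-match, we have $p \in \Occ_k(V^{(x)},U) \subseteq \Occ_k(V^{(x)},T)$, so $\ed(T[p\dd p'], \rot^{\alpha+x}(P)) \le k$ for the appropriate $p'$. Fix an optimal alignment realizing this edit distance. This alignment maps the boundary position inside $\rot^{\alpha+x}(P)$ (separating $Y = P[(\alpha+x)\bmod m \dd m)$ from $X = P[0\dd (\alpha+x)\bmod m)$) to some position $i$ in $T[p \dd p']$. Splitting the alignment at this point gives $\ed(T[p \dd i), Y) + \ed(T[i \dd p'], X) \le k$, which is exactly the condition of Definition~\ref{def:anchored}; hence $T[p\dd p']$ is anchored at $i$. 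It remains to bound $|i - \Delta'|$: in the edit-free case $i$ would equal $\Delta'$, and each of the at most $k$ insertions/deletions occurring in the prefix part of the alignment (i.e.\ before the boundary) shifts the image of the boundary by exactly one position, while substitutions do not move it. Therefore $|i - \Delta'| \le k$, so $i \in [\Delta'-k \dd \Delta'+k]$, as claimed.

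The main obstacle I expect is purely bookkeeping: getting the indexing of the ``$m - \alpha + \Delta$'' expression exactly right, in particular reconciling the convention $V^{(x)} = V[x \dd x+m)$ with $V = P^2[\alpha\dd\beta]$ and the reduction modulo $m$ (the boundary of $P$ may wrap, and one must check that the stated interval $[\Delta'-k\dd\Delta'+k]$ is the correct representative, matching the figure \cref{fig:anchor}). The edit-counting argument itself — that only the deletions/insertions strictly on the $Y$-side of the cut can move the anchor, and each by one — is standard for the Landau--Vishkin style of reasoning and should go through cleanly once the alignment is split at the image of the boundary.
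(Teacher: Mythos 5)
Your proof takes the same approach as the paper's (implicit) justification of this Fact, which lives in the caption of \cref{fig:anchor}: locate the $Y|X$ boundary of $\rot^{\alpha+x}(P)=V^{(x)}$, observe that its edit-free image in $U$ is $\Delta'$, and bound the shift under an optimal alignment by the number of insertions and deletions, which the caption expresses as $\delta=|X_2|-|P_2|\in[-k\dd k]$. One bookkeeping slip worth fixing: the boundary sits at offset $m-((\alpha+x)\bmod m)$ within $V^{(x)}$, which reduces to $m-\alpha-x$ modulo $m$, not $m-\alpha+x$ as you wrote; equivalently, the boundary is always at the fixed absolute $V$-position $m-\alpha$, so its edit-free image in $U$ is $p+(m-\alpha-x)=m-\alpha+\Delta=\Delta'$, which is the conclusion you (correctly) land on anyway.
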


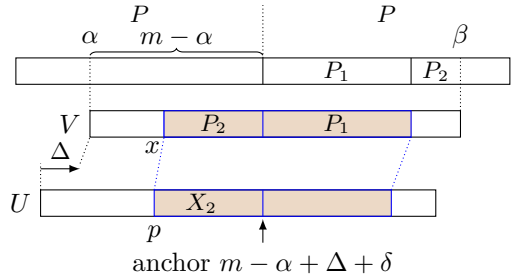
\begin{figure}[htpb]
\vspace*{-0.5cm}
\centering
\begin{tikzpicture}[yscale=0.35,xscale=0.65]
\draw (0,0) rectangle (10,1);
\draw (5,0) -- (5,1);
\draw (2.5,2) node[above] {\small $P$};
\draw (7.5,2) node[above] {\small $P$};
\draw (8,0) -- (8,1);
\draw (6.5,0.5) node {\small $P_1$};
\draw (8.5,0.5) node {\small $P_2$};

\draw[blue] (3,-2) rectangle (8,-1);
\fill[white!70!brown] (3,-2) rectangle (8,-1);
\draw (1.5,-2) rectangle (9,-1);
\draw (1.5,-1.5) node[left] {$V$};
\draw[blue] (3,-2) rectangle (8,-1);
\draw (4,-1.5) node {\small $P_2$};
\draw (6.5,-1.5) node {\small  $P_1$};
\draw[blue] (5,-2) -- (5,-1);
\draw (3,-2) node[below left=-0.1cm] {$x$};

\fill[white!70!brown] (2.8,-5) rectangle (7.6,-4);
\draw (0.5,-5) rectangle (8.5,-4);
\draw (0.5,-4.5) node[left] {$U$};
\draw (3.75,-4.5) node {\small $X_2$};
\draw[blue] (2.8,-5) rectangle (7.6,-4);
\draw (2.8,-5) node[below] {$p$};
\draw[densely dotted,blue] (2.8,-4) -- (3,-2)  (7.6,-4) -- (8,-2);
\draw[blue] (5,-5) -- (5,-4);
\draw[-latex] (5,-6) -- (5,-5.1);
\draw (5,-6) node[below] {anchor $m-\alpha+\Delta+\delta$};

\draw[-latex] (0.5,-3.2) -- node[above] {\small $\Delta$} (1.3,-3.2);
\draw[densely dotted] (0.5,-4) -- (0.5,-3)  (1.3,-3) -- (1.5,-2);

\draw[densely dotted] (1.5,-1) -- (1.5,1);
\draw (1.5,1.2) node[above] {$\alpha$};
\draw[densely dotted] (9,-1) -- (9,1);
\draw (9,1) node[above] {$\beta$};

\draw[snake=brace] (1.5,1.1) -- node[above] {$m-\alpha$} (5,1.1);
\draw[densely dotted] (5,1) -- (5,3);

\end{tikzpicture}
\caption{The anchor in $U$ is at position $m-\alpha+\Delta+\delta$, where $\delta=|X_2|-|P_2|\in[-k\dd k]$ (since $\ed(X_2,P_2) \le k$).}
\label{fig:anchor}
\end{figure}

Using \cref{lem:report-anchored,lem:decision-anchored} we obtain the following corollary.

\begin{corollary}\label{MaybeJuly6}
  Let $I$ be an interval of size $\Oh(k)$. All positions $p$ for which there exists an app-match $(p,x)$ such that $p-x\in I$,
  represented as a union of $\Oh(k^3)$ intervals, can be computed in $\Oh(k^3)$ time in the \pillar model.
  Moreover, one can check if there is any app-match $(p,x)$ with $p-x\in I$ in $\Oh(k^2 \log^3 k)$ time in the \pillar model.
\end{corollary}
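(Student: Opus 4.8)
The plan is to combine \cref{June27}, which places the anchor of any app-match $(p,x)$ with $p-x\in I$ inside an interval $J$ of size $\Oh(k)$ (namely $J = (m-\alpha+I)\oplus[-k\dd k]$, which has size $|I|+2k+1=\Oh(k)$), with the anchor-based routines of \cref{lem:report-anchored,lem:decision-anchored}. First I would note that, by \cref{June27}, every position $p$ arising from an app-match with $p-x\in I$ lies in $\bigcup_{i\in J}\anchored_k(P,T,i)$ up to a harmless bookkeeping step: the circular $k$-edit occurrence $U[p\dd p']$ is a fragment of $T$ (recall $U=T[u\dd u']$ and $V=P^2[\alpha\dd\beta]$), so an anchor of this occurrence in $T$ corresponds to an anchor in $U$, and conversely any anchor in $J$ (intersected with valid positions) that yields an occurrence contributes a position $p$ that is automatically an app-match --- because the \EPSM constraint $p\equiv_{77k}x+r\pmod q$ is guaranteed by the approximate-congruence analysis in the proof of \cref{lem:red} (equivalently, $p-x\in I$ already forces $p-x$ to be a valid offset, since $I$ was produced from valid offsets in \cref{lem:count_overlap}). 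Thus the set we want is exactly the set of starting positions in $T$ of circular $k$-edit occurrences of $P$ anchored at some position of $J$ that additionally start inside $U$; restricting to occurrences lying in $U$ and to positions $x\le|V|-m$ are cheap filters.

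For the reporting part, I would partition $J$ into $\Oh(1)$ singletons (since $|J|=\Oh(k)$, that is $\Oh(k)$ singletons), call \cref{lem:report-anchored} on each, and take the union. Each call runs in $\Oh(k^2)$ \pillar time and returns $\Oh(k^2)$ intervals, for a total of $\Oh(k^3)$ time and $\Oh(k^3)$ intervals, as claimed. For the decision part, I would instead invoke \cref{lem:decision-anchored} directly on the interval $J$: it accepts an interval of up to $k$ positions, so I split $J$ into $\Oh(1)$ subintervals each containing at most $k$ positions, run \cref{lem:decision-anchored} on each in $\Oh(k^2\log^3 k)$ time, and return the first witness found (or \emph{none}). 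This gives the $\Oh(k^2\log^3 k)$ bound. Before returning a witness $p$ (or keeping an interval endpoint) I would, if necessary, confirm it corresponds to an app-match by a single call to an edit-distance/occurrence primitive; this is $\Oh(k)$ extra time and does not affect the bounds, but in fact the observation above shows it is unnecessary since every anchored occurrence starting in $U$ with the right range of $x$ is an app-match.

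The main obstacle is the equivalence claim itself: one must argue carefully that ``$p$ appears in $\bigcup_{i\in J}\anchored_k(P,T,i)$ and starts inside $U$'' is \emph{equivalent} to ``$p$ is the start of an app-match $(p,x)$ with $p-x\in I$'', not merely implied in one direction. The forward direction is \cref{June27}. For the converse, given an anchored circular $k$-edit occurrence $T[p\dd p']$ with anchor $i\in J$ that lies in $U$, one recovers $x$ from the anchor position via $x = m-\alpha-(i - p - \delta)$ for the appropriate $\delta\in[-k\dd k]$ (reading off \cref{fig:anchor}), checks $x\le|V|-m$ and $V^{(x)}=\rot^y(P)$ with $\ed(V^{(x)},T[p\dd p'])\le k$, and then invokes the congruence established in the proof of \cref{lem:red} to conclude $p\equiv_{77k}x+r\pmod q$, i.e.\ that $(p,x)$ is a genuine app-match with $p-x\in I$. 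Handling duplicates (the same $p$ can be anchored at several positions of $J$) and positions of $U$ that fall outside $[0\dd n)$ in $T$ are routine and absorbed into the $\Oh(k^3)$ interval count.
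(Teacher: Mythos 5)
Your proposal is correct and takes essentially the same route the paper intends: anchor every app-match with $p-x\in I$ inside an $\Oh(k)$-sized interval $J$ via \cref{June27}, then make $\Oh(k)$ calls to \cref{lem:report-anchored} (each $\Oh(k^2)$ time, $\Oh(k^2)$ intervals) for reporting, and $\Oh(1)$ calls to \cref{lem:decision-anchored} on $\le k$-sized subintervals of $J$ for the decision version. The paragraph you devote to proving an \emph{equivalence} is unnecessary and, as written, not quite right: the set $\bigcup_{a\in J}\anchored_k(P,U,a)$ may genuinely over-report (an anchor $a\in J$ fixes $p-x$ only up to $\Oh(k)$, so the recovered $x$ need not satisfy $p-x\in I$), but this is fine because the corollary is only used (see \cref{lem:case1}) to produce a set that \emph{contains} all relevant app-match positions while remaining a subset of $\cycoc_k(P,U)$ --- the containment in $\cycoc_k(P,U)$ follows directly from \cref{def:anchored}, so no extra verification is needed.
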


The solution of the overlap case is presented in \cref{algo:ov}.
\cref{lem:count_overlap} together with \cref{June27,MaybeJuly6} imply the following lemma.

\begin{algorithm}[h!]

\vspace*{0.2cm}
Compute the decompositions of $U$ and $V$ into locked fragments\;

\medskip
// Compute the set $\Lambda$ of $(t+k)$-overlap offsets, being a union of $\Oh(k^2)$ intervals:\\
\ForEach{locked fragment $U[i_{\min} \dd i_{\max}]$}{
    \ForEach{locked fragment $V[j_{\min} \dd j_{\max}]$}{
        $\Lambda:=\Lambda \cup ([i_{\min}-j_{\max}-(t+k)\dd i_{\max}-j_{\min}+(t+k)] \oplus \{-m, 0, m\})$\;
    }
}

\medskip
// Compute the set $\Gamma$ of valid $(t+k)$-overlap offsets,\\
// represented as a union of $\Oh(k^2+k^2(t+k)/q)$ intervals of size $\Oh(k)$ each:\\
\ForEach{interval $I$ of offsets \KwSty{in} $\Lambda$}{
    $\Gamma:=\Gamma \cup \{\text{maximal intervals representing }\{i \in I\,:\,i \equiv_{77k} r \pmod{q}\}\}$\;
}

\medskip
\ForEach{interval $[i_{\min} \dd i_{\max}]$ of offsets \KwSty{in} $\Gamma$, with $i_{\max}-i_{\min}=\Oh(k)$}{
    $J:=[i_{\min} \dd i_{\max}]\oplus(m-\alpha)$\;
    \textbf{report} $\bigcup_{a \in J} \anchored_k(P,U,a)$\;
}
\caption{\large \bf Overlap case: reporting version}\label{algo1}\label{algo:ov}
\vspace*{0.2cm}
\end{algorithm}

\begin{lemma}\label{lem:case1}
Let $B$ be the output of \cref{algo:ov}. Then $B \subseteq \cycoc_k(P,U)$ and every $t$-overlap app-match  
occurrence $p$ is in $B$.

Moreover, if $t=\Oh(kq)$, \cref{algo:ov} works in $\Oh(k^{6})$ time in the \pillar model with the output represented as a union of $\Oh(k^{6})$ intervals.
\end{lemma}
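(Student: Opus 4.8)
The plan is to verify correctness and then the complexity bound by tracking the three phases of Algorithm~\ref{algo:ov} line by line. For correctness, I would first argue that every $t$-overlap app-match $(p,x)$ has $p-x$ a valid offset (this is immediate from the definition of app-match and of valid offset) and a $t$-overlap offset by assumption; hence $p-x$ lies in some interval of the set $\Gamma$ of valid $(t+k)$-overlap offsets computed in phase two, since every $t$-overlap offset is in particular a $(t+k)$-overlap offset and the set $\Lambda$ computed via the nested loops over locked fragments is exactly the set of $(t+k)$-overlap offsets by Definition~\ref{def:ov_off} and Observation~\ref{obs:intervals}. Then I would invoke Fact~\ref{June27}: for the offset $\Delta=p-x$, the circular $k$-edit occurrence $U[p\dd p']$ is anchored at some position in $[\Delta'-k\dd\Delta'+k]$ where $\Delta'=m-\alpha+\Delta$; since $\Delta$ lies in some interval $[i_{\min}\dd i_{\max}]$ of $\Gamma$ with $i_{\max}-i_{\min}=\Oh(k)$, the shifted interval $J=[i_{\min}\dd i_{\max}]\oplus(m-\alpha)$ together with its $\pm k$ neighbourhood contains the anchor, so $p\in\bigcup_{a\in J}\anchored_k(P,U,a)$ and thus $p\in B$. (To be careful I would take the interval in $J$ large enough — padded by $k$ on each side — so that Fact~\ref{June27}'s $[\Delta'-k\dd\Delta'+k]$ range is covered; this is why the algorithm uses $(t+k)$-overlap rather than $t$-overlap offsets.) For the reverse inclusion $B\subseteq\cycoc_k(P,U)$, note that every position reported is in some $\anchored_k(P,U,a)$, and by Definition~\ref{def:anchored} any anchored circular $k$-edit occurrence is in particular a circular $k$-edit occurrence, i.e.\ lies in $\cycoc_k(P,U)$.

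For the complexity, I would substitute $t=\Oh(kq)$, so $t+k=\Oh(kq)$, into Lemma~\ref{lem:count_overlap}: the set of valid $(t+k)$-overlap offsets is a union of $\Oh(k^2+k^2(t+k)/q)=\Oh(k^2+k^3)=\Oh(k^3)$ intervals of length $\Oh(k)$ each, computable in $\Oh(k^3)$ time in the \pillar model. This accounts for phases one and two. In phase three, for each of the $\Oh(k^3)$ intervals $[i_{\min}\dd i_{\max}]$ of $\Gamma$ we form $J$ (an interval of size $\Oh(k)$) and report $\bigcup_{a\in J}\anchored_k(P,U,a)$. By Lemma~\ref{lem:report-anchored} (or, packaged for an interval of size $\Oh(k)$, by the reporting part of Corollary~\ref{MaybeJuly6}), this costs $\Oh(k^3)$ time and yields $\Oh(k^3)$ intervals per interval of $\Gamma$. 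Multiplying by the $\Oh(k^3)$ intervals of $\Gamma$ gives $\Oh(k^6)$ total time and an output that is a union of $\Oh(k^6)$ intervals, as claimed. Computing the locked-fragment decompositions of $U$ and $V$ costs only $\Oh(k^2)$ time by Lemma~\ref{lem:locked_deomcp}, which is absorbed.

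The main obstacle I anticipate is the correctness bookkeeping around the constants and the $\pm k$ slack: one must check that the app-match condition $p-x\in(t\text{-overlap offsets})$ together with the $k$-error wiggle room in Fact~\ref{June27} is correctly absorbed by passing to $(t+k)$-overlap offsets in $\Lambda$ and by padding $J$, so that no $t$-overlap app-match is missed while $B$ stays inside $\cycoc_k(P,U)$. A secondary point is to confirm that the $\oplus\{-m,0,m\}$ appearing in both Definition~\ref{def:ov_off} and the algorithm matches: the three cases correspond to whether the length-$m$ window of $V$ wraps around so that a locked fragment of $V$ sits at offset $0$, $-m$, or $m$ relative to the window, and likewise for $U$; I would just note that the nested loop exhausts all nine combinations via the single $\oplus\{-m,0,m\}$ shift on the difference interval, which is why the count of intervals is $\Oh(k^2)$ (nine times $n_1 n_2 = \Oh(k^2)$) rather than more.
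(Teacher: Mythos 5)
Your argument for $B\subseteq\cycoc_k(P,U)$ and your complexity accounting (substituting $t=\Oh(kq)$ into Lemma~\ref{lem:count_overlap} to get $\Oh(k^3)$ intervals of valid $(t+k)$-overlap offsets, then applying Corollary~\ref{MaybeJuly6} to each) match the paper's proof exactly. The gap is in the containment direction.

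You argue: $\Delta=p-x$ is a valid $t$-overlap offset, hence lies in some interval $[i_{\min}\dd i_{\max}]$ of $\Gamma$, so $\Delta'=m-\alpha+\Delta\in J$; then by Fact~\ref{June27} the anchor $a$ lies in $[\Delta'-k\dd\Delta'+k]$. But the algorithm queries $\anchored_k(P,U,a)$ only for $a\in J$, not for $a\in J\oplus[-k\dd k]$, and you notice this and try to excuse it by saying ``this is why the algorithm uses $(t+k)$-overlap rather than $t$-overlap offsets.'' That justification does not close the gap. The $(t+k)$ slack enlarges $\Lambda$ (the raw set of overlap offsets) around $\Delta$, but $\Gamma$ is obtained from $\Lambda$ by intersecting with the valid offsets ($i\equiv_{77k}r\pmod q$); that intersection can cut the maximal interval of $\Gamma$ right at $\Delta$ even though $\Lambda$ extends $k$ further. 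So there is no guarantee that the interval $[i_{\min}\dd i_{\max}]$ you picked extends $k$ beyond $\Delta$, and hence no guarantee that $a\in J$.

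The paper's proof avoids this by reasoning about $y:=a-(m-\alpha)$ rather than about $\Delta$. From the split of the alignment at the anchor, $\ed(V[x\dd m-\alpha),U[p\dd a))\le k$, so $|y-\Delta|\le k$; since $\Delta$ is a $t$-overlap offset, $y$ itself is a $(t+k)$-overlap offset (and a valid one). Therefore $y$ lies in \emph{some} interval of $\Gamma$ --- not necessarily the same interval that contains $\Delta$ --- and $a=y+(m-\alpha)$ lies in the corresponding $J$. This is precisely the role of the $(t+k)$: it is spent on moving from $\Delta$ to $y$ before computing $\Gamma$, not on padding $J$ after the fact. To repair your argument you should replace ``$\Delta$ lies in $\Gamma$'' by ``$y$ lies in $\Gamma$'' and derive the latter from $\Delta$ being a $t$-overlap offset together with $|y-\Delta|\le k$.

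Two smaller points. First, Corollary~\ref{MaybeJuly6} is exactly the packaged form of Lemma~\ref{lem:report-anchored} for an $\Oh(k)$-length interval of anchors, so your complexity paragraph is fine; the paper invokes it directly. Second, your closing remark about the $\oplus\{-m,0,m\}$ giving ``nine combinations'' is not what happens: the algorithm applies a single $\oplus\{-m,0,m\}$ to the \emph{difference} interval, so it enumerates three shifted copies per pair of locked fragments, which already yields the $\Oh(k^2)$ bound claimed; you don't need (and the algorithm does not perform) a double shift.
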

\begin{proof}
Consider a $t$-overlap app-match $(p,x)$.
Then, there exists an anchor $a$ such that $p \in \anchored_k(P,U,a)$, and $y = a-(m-\alpha)$ is a $(t+k)$-overlap offset, since we have
\[\ed(V[x \dd m-\alpha),U[p \dd a)) + \ed(V[m-\alpha\dd x+m), U[a \dd p']) \leq k.\]
Now, $y$ is in some interval $[i_{\min} \dd i_{\max}] \in \Gamma$, as the union of the elements of $\Gamma$ comprises the set of valid $(t+k)$-overlap offsets.
Then, since $y \in [i_{\min} \dd i_{\max}]$, we have $a = y+(m-\alpha) \in [i_{\min} \dd i_{\max}]\oplus(m-\alpha)$, and hence $a$ is in one of the sets $J$ constructed in the penultimate line of \cref{algo:ov}.
In the case when $t=\cO(kq)$, using \cref{lem:count_overlap}, we compute, in $\cO(k^3)$ time, $\cO(k^3)$ intervals of anchors, of size $\cO(k)$ each.
The time complexity and the fact that the algorithm returns the output as a union of $\cO(k^6)$ intervals follows by a direct application of \cref{MaybeJuly6} to each interval of anchors.
\end{proof}

\noindent
To obtain the next corollary, we replace the last line of \cref{algo:ov} by:

\medskip
\textbf{if} $\anyanchored_k(P,U,J) \ne none$ \textbf{then return} $\anyanchored_k(P,U,J)$;

\begin{corollary}\label{corr:ov_dec}
If $t=\Oh(kq)$, one can check if $B \ne \emptyset$ and, if so, return an arbitrary element of $B$, in $\Oh(k^{5} \log^3 k)$ time in the \pillar model.
\end{corollary}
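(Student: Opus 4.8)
The plan is to run \cref{algo:ov} verbatim except that its final line is replaced by the decision-version command displayed just above the corollary, and then to argue that the modified procedure returns \emph{none} exactly when $B=\emptyset$ and otherwise returns an element of $B$. Here $B$ denotes, as in \cref{lem:case1}, the set that the reporting version of \cref{algo:ov} would output.

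First I would check that the preprocessing is entirely unchanged: computing the locked-fragment decompositions of $U$ and $V$ takes $\Oh(k^2)$ time by \cref{lem:locked_deomcp}, and assembling $\Lambda$ and then $\Gamma$ takes $\Oh(k^2+k^2(t+k)/q)$ time by the proof of \cref{lem:count_overlap}; since $t=\Oh(kq)$ this is $\Oh(k^3)$, and $\Gamma$ (hence the family of anchor intervals $J$) consists of $\Oh(k^3)$ intervals, each of length $\Oh(k)$. Next, for each such $J$ I would split it into $\Oh(1)$ subintervals containing at most $k$ positions and invoke \cref{lem:decision-anchored} on each; this yields $\anyanchored_k(P,U,J)$ — an element of $\bigcup_{a\in J}\anchored_k(P,U,a)$, or \emph{none} if that set is empty — in $\Oh(k^2\log^3 k)$ time. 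Summing over the $\Oh(k^3)$ intervals gives the claimed $\Oh(k^5\log^3 k)$ total time; the algorithm outputs the first non-\emph{none} value it encounters.

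For correctness, note that by construction $B=\bigcup_{J}\bigcup_{a\in J}\anchored_k(P,U,a)$, so the modified algorithm outputs \emph{none} iff every call returns \emph{none} iff $B=\emptyset$, and any value it outputs lies in some $\anchored_k(P,U,a)\subseteq\cycoc_k(P,U)$, hence in $B$. In particular, by the analysis in the proof of \cref{lem:case1}, if there is any $t$-overlap app-match $(p,x)$ then the corresponding anchor $a=y+(m-\alpha)$ (with $y$ a valid $(t+k)$-overlap offset) lies in one of the sets $J$, so $B\neq\emptyset$ and the algorithm returns a witness. The only mild technical point — the closest thing to an obstacle — is reconciling the ``up to $k$ positions'' hypothesis of \cref{lem:decision-anchored} with our length-$\Oh(k)$ anchor intervals, which is dealt with by the constant-factor splitting above; everything else is a direct substitution into \cref{algo:ov} followed by the same counting as in \cref{lem:case1}.
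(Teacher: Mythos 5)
Your proposal is correct and matches the paper's approach exactly: the paper also obtains this corollary by replacing the last line of \cref{algo:ov} with the $\anyanchored_k$ decision query, and the $\Oh(k^5\log^3 k)$ bound follows just as you compute it — $\Oh(k^3)$ anchor intervals of size $\Oh(k)$ each (from \cref{lem:count_overlap} with $t+k=\Oh(kq)$), times the $\Oh(k^2\log^3 k)$ cost per interval from \cref{lem:decision-anchored} (or equivalently the second part of \cref{MaybeJuly6}). Your handling of the ``up to $k$ positions'' hypothesis by constant-factor splitting is a harmless technicality that the paper glosses over via \cref{MaybeJuly6}.
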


\section{Non-Overlap Case of \EPSM}\label{sec:case234}
Recall that an app-match $(p,x)$ is called a \emph{$t$-non-overlap app-match} if and only if
$p-x$ is a $t$-non-overlap offset.
In this section we assume $t=\Theta(kq)$.
The set of $t$-non-overlap offsets is too large, but it has a short representation.

\begin{lemma}\label{lem:nov-off} 
The set of $t$-non-overlap offsets can be partitioned into $\Oh(k^2)$ maximal intervals in $\Oh(k^2 \log \log k)$ time in the \pillar model.
\end{lemma}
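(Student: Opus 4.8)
The plan is to observe that an offset $\Delta\in(-|V|\dd|U|)$ is $t$-non-overlap exactly when it does \emph{not} lie in the set $\Lambda$ of $t$-overlap offsets, so the $t$-non-overlap offsets form the complement of $\Lambda$ inside the integer interval $(-|V|\dd|U|)$. Since the complement of a union of $c$ intervals taken inside one interval is a union of at most $c+1$ maximal intervals, it suffices to show that $\Lambda$ is a union of $\Oh(k^2)$ intervals computable in $\Oh(k^2\log\log k)$ time; the final complementation is then a single linear sweep.

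First I would invoke \cref{lem:locked_deomcp} for $U$ and for $V$ with respect to $Q$; this is legitimate since, as already observed, $\edl{U}{Q},\edl{V}{Q}\le 112k$ and $|U|,|V|\ge 256kq>225kq$. In $\Oh(k^2)$ time this returns the locked fragments of $U$ and of $V$ in left-to-right order, so $\locked(U)$ is given as $n_1=\Oh(k)$ pairwise disjoint, already sorted intervals, and $\locked(V)$ as $n_2=\Oh(k)$ such intervals (the boundary-extension post-processing keeps these counts $\Oh(k)$). Extending each interval of $\locked(U)$ by $t$ on both sides and merging overlapping consecutive ones in a single left-to-right scan gives $X:=\Ext_t(\locked(U))$ as $\Oh(k)$ disjoint sorted intervals in $\Oh(k)$ time; set $Y:=\locked(V)$.

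By \cref{def:ov_off},
\[\Lambda=\bigl(X\oplus\{-m,0,m\}\bigr)\ominus\bigl(Y\oplus\{-m,0,m\}\bigr)=(X\ominus Y)\oplus\{-2m,-m,0,m,2m\}.\]
For each interval $[a,b]$ of $X$ and $[c,d]$ of $Y$ we have $[a,b]\ominus[c,d]=[a-d,\,b-c]$ (\cref{obs:intervals}), so $X\ominus Y$ is a list of $n_1 n_2=\Oh(k^2)$ intervals, and after applying the five shifts we still have $\Oh(k^2)$ intervals, all produced in $\Oh(k^2)$ time. To turn this list into its canonical form as pairwise disjoint maximal intervals I would sort the $\Oh(k^2)$ left endpoints and run a sweep. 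All endpoints lie in $(-|V|\dd|U|)$, an integer range of size $\Oh(m)$ that fits in $\Oh(1)$ machine words, so a deterministic integer-sorting algorithm sorts these $N=\Oh(k^2)$ values in $\Oh(N\log\log N)=\Oh(k^2\log\log k)$ time; the subsequent sweep merging overlapping or abutting intervals takes $\Oh(k^2)$ time. Intersecting with $(-|V|\dd|U|)$ and complementing is one more sweep over the resulting $\Oh(k^2)$ intervals, producing the $\Oh(k^2)$ maximal intervals of $t$-non-overlap offsets in $\Oh(k^2)$ time. Summing up, the running time is $\Oh(k^2)$ for the two locked decompositions plus $\Oh(k^2\log\log k)$ for the sort, i.e.\ $\Oh(k^2\log\log k)$.

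The only nontrivial ingredient — and the step one must be careful about — is this sort: a comparison sort would cost an extra $\Theta(\log k)$ factor, and there is no obvious way to sort the difference set $X\ominus Y$ faster by comparisons (it is an $X{+}Y$-type sorting instance), so it is essential to use that the offsets are integers from a bounded universe. Everything else is elementary interval bookkeeping.
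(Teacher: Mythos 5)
Your proposal is correct and follows essentially the same approach as the paper: there are $\Oh(k)$ locked fragments in each of $U$ and $V$, so the $t$-overlap offsets form $\Oh(k^2)$ intervals (via \cref{obs:intervals}), whose union complement is computed by integer-sorting the endpoints in $\Oh(k^2\log\log k)$ time. The paper's version is terser (it glosses over the $\oplus\{-m,0,m\}$ bookkeeping, which, as you note, only multiplies the interval count by a constant), but the decomposition, the key lemma, and the sorting bound are identical.
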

\begin{proof}
There are $\Oh(k)$ locked fragments in $U$ and $V$.
By \cref{obs:intervals}, every pair of locked fragments, one from $U$ and one from $V$, induces an interval of $t$-overlap offsets that can be computed in $\Oh(1)$ time.
The complement of the union of these offsets can be computed in $\Oh(k^2 \log \log k)$ time by sorting the endpoints of the intervals using integer sorting~\cite{DBLP:journals/jal/Han04}.
\end{proof}
We denote by $\DD(t)$ the set of maximal intervals yielded by the above lemma. 
For simplicity, we mostly discuss the decision version of the problem in this section; the correctness proof for the reporting version requires a few further technical arguments.

Let $\lambda_k=(112k+3)\cdot(3k+10)\cdot q + 678kq$.

\begin{lemma}\label{lem:usemesomewhere}
If $\lambda_k>\frac{m}{2}$, \EPSM can be solved in $\Oh(k^5)$ time in the \pillar model, with the output represented as a union of $\Oh(k^5)$ intervals.
\end{lemma}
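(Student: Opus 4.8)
The plan is to exploit that the hypothesis $\lambda_k>m/2$ puts us in a \emph{short-string} regime, where we may enumerate all relevant offsets directly and bypass the locked-fragment dichotomy of \cref{sec:case1,sec:case234}. Expanding, $\lambda_k=(112k+3)(3k+10)q+678kq=(336k^2+1807k+30)\,q=\Oh(k^2q)$, so the assumption gives $m<2\lambda_k=\Oh(k^2q)$; together with $|U|\le\tfrac74 m+3(k+1)$ and $|V|\le\tfrac32 m$ this yields $|U|+|V|=\Oh(m)=\Oh(k^2q)$, so the offset range $(-|V|\dd|U|)$ spans only $\Oh(m/q)=\Oh(k^2)$ multiples of $q$.

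First I would list all \emph{valid} offsets, i.e.\ all $\Delta\in(-|V|\dd|U|)$ with $\Delta\equiv_{77k}r\pmod q$. If $q>2\cdot 77k+1$, elementary modular arithmetic groups these into $\Oh(m/q)=\Oh(k^2)$ intervals, each of length $2\cdot 77k+1=\Oh(k)$; if $q\le 2\cdot 77k+1$, every offset is valid and the single resulting interval, of length $\Oh(m)=\Oh(k^2q)=\Oh(k^3)$, is cut into $\Oh(k^2)$ consecutive subintervals of length $\Oh(k)$. Either way we obtain, in $\Oh(k^2)$ time, a list $I_1,\dots,I_s$ of $s=\Oh(k^2)$ intervals of length $\Oh(k)$ whose union is the set of valid offsets. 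By the congruence in the definition of \EPSM, any app-match $(p,x)$ has $p-x\in(-|V|\dd|U|)$ and $p-x$ valid, so $p-x\in I_j$ for some $j$.

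Second, for each $I_j$ I would invoke \cref{MaybeJuly6} (which packages \cref{June27} together with \cref{lem:report-anchored}): in $\Oh(k^3)$ time it returns, as a union of $\Oh(k^3)$ intervals, a set of positions that contains every $p$ admitting an app-match $(p,x)$ with $p-x\in I_j$ and is contained in $\cycoc_k(P,U)$ (each reported position is anchored at some position of $U$). Taking the union $B$ over $j=1,\dots,s$ gives $\{\,p:\,(p,x)\text{ is an app-match for some }x\,\}\subseteq B\subseteq\cycoc_k(P,U)$, which is the output required from \EPSM, computed in $s\cdot\Oh(k^3)=\Oh(k^5)$ time and represented by $s\cdot\Oh(k^3)=\Oh(k^5)$ intervals.

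No genuinely hard step arises; this lemma is the base case that lets the non-overlap analysis of \cref{sec:case234} assume $\lambda_k\le m/2$. The only care required is bookkeeping: verifying that the constants in $\lambda_k$ really give $m/q=\Oh(k^2)$, so that both the offset enumeration and the $\Oh(k^2)$ ensuing anchor queries fit within the $\Oh(k^5)$ budget, and disposing of the degenerate regime $q\le 2\cdot 77k+1$ via the interval-splitting step above. The decision version follows verbatim, using the decision form of \cref{MaybeJuly6} in place of its reporting form (and is in fact faster), but only the reporting bound stated here is needed downstream.
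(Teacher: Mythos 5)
Your proof is correct and takes essentially the same approach as the paper's: use $\lambda_k > m/2$ to deduce $m = \Oh(k^2 q)$, group the valid offsets into $\Oh(k^2)$ intervals of length $\Oh(k)$, and apply \cref{MaybeJuly6} to each. Your explicit handling of the degenerate regime $q \le 2\cdot 77k+1$ (where every offset is valid) is a small bit of extra care that the paper's one-line bound glosses over, but it does not change the argument.
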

\begin{proof}
We have $m=\Oh(k^2q)$. As $\cO(k)$ out of every $q$ consecutive offsets are valid and they can be grouped in at most two intervals, there are $\cO(mk/q) = \Oh(k^3)$ valid offsets, which are grouped into $\Oh(k^2)$ intervals of size $\Oh(k)$ each. Let the set of such intervals be $\mathcal{J}$.
The time complexity and output size follow from an application of \cref{MaybeJuly6} to the $\cO(k)$-size interval of anchors corresponding to each $J \in \mathcal{J}$, as in the last three lines of \cref{algo:ov}.
\end{proof}

\noindent Henceforth we assume that $\lambda_k \le \frac{m}{2}$.\
Let $W$ be the longest fragment of $V$ such that each length-$m$ fragment of $V$ contains $W$, i.e., $W=V[|V|-m \dd m)$.

\begin{observation}
If $\lambda_k \le \frac{m}{2}$, then $W$ contains a fragment equal $Q^{3k+9}$ that is disjoint from locked fragments in $V$.
\end{observation}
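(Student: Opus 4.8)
The plan is to prove the \textbf{Observation} by a direct counting argument on the structure of $W=V[|V|-m\dd m)$, using the length bound $\lambda_k\le m/2$ together with the bound on the total length of locked fragments from \cref{lem:locked_deomcp}. First I would recall that $|W|=m-(|V|-m)=2m-|V|\ge 2m-\tfrac32 m=\tfrac{m}{2}$, so $W$ has length at least $m/2\ge\lambda_k$. Since $V$ is almost $Q'$-periodic (equivalently, $\edl{V}{Q}\le 112k$, after the adjustment of the locked decomposition described after \cref{lem:locked_deomcp}), the fragment $W$ of $V$ inherits the locked decomposition restricted to its range: inside $W$ the locked fragments occupy total length at most $\sum_i|L_i|\le 678kq$, and there are $\Oh(k)$ of them, so they split the remainder of $W$ into $\Oh(k)$ maximal ``gaps'' that are exact powers of $Q$.

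Next I would argue that one of these gaps must be long. The non-locked part of $W$ has total length at least $|W|-678kq\ge \lambda_k-678kq=(112k+3)(3k+10)q$, and it is partitioned into at most $(\text{number of locked fragments in }V)+1$ gaps; since \cref{lem:locked_deomcp} gives $\ell=\Oh(k)$ and, more precisely, the constant can be tracked so that the number of gaps is at most $112k+3$ (this is the quantity appearing in the definition of $\lambda_k$, matching the $\edl{V}{Q}\le 112k$ bound plus a small additive term for the prefix/suffix fragments and the $Q$-extensions). By averaging, at least one gap has length at least $(3k+10)q$. Such a gap is, by construction of the decomposition, an exact power $Q^\beta$ of $Q$ with $\beta\ge 3k+10$, hence contains a fragment equal to $Q^{3k+9}$, and this fragment is disjoint from all locked fragments in $V$ because it lies inside a gap between consecutive locked fragments. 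Finally I need to make sure the chosen copy of $Q^{3k+9}$ really lies inside $W$ and not just inside $V$: this is immediate since the gap itself lies inside $W$.

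The main obstacle I anticipate is pinning down the precise constant so that the number of gaps in $W$ is genuinely at most $112k+3$ (or whatever value makes the arithmetic $\lambda_k-678kq \ge (112k+3)(3k+10)q$ work out); this requires being careful about how the locked decomposition of $V$ restricts to $W$, in particular whether a locked fragment can straddle the boundary of $W$ (creating a partial locked fragment at either end of $W$) and how the post-processing that padded short prefix/suffix locked fragments by a copy of $Q$ affects the count. I would handle this by observing that even if $W$'s boundaries fall in the middle of two locked fragments of $V$, this only \emph{reduces} the number of genuine gaps (a straddling locked fragment still behaves as a single boundary from $W$'s perspective), so the count of gaps is at most one more than the number of locked fragments of $V$ intersecting $W$, which is at most $\ell+1 = \Oh(k)$; then the definition of $\lambda_k$ was chosen precisely so that the bound $\lambda_k \le m/2 \le |W|$ forces, after subtracting $678kq$ for locked content and dividing by the number of gaps, a gap of size at least $(3k+10)q$. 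The remaining steps — that a gap is an exact $Q$-power and that $Q^{3k+9}$ sits inside a $Q^{3k+10}$ block disjoint from locked fragments — are routine and follow directly from point (a) of \cref{lem:locked_deomcp}.
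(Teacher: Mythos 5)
Your proposal is correct and follows essentially the same argument as the paper: bound $|W|=2m-|V|\ge m/2\ge\lambda_k$, use \cref{lem:locked_deomcp} (plus the post-processing) to bound the total locked length by $678kq$ and the number of locked fragments (hence gaps) by $112k+2$ (hence $112k+3$), then apply the pigeonhole principle to find a non-locked gap of length at least $(3k+10)q$, which as a substring of $Q^\infty$ contains a copy of $Q^{3k+9}$. The only cosmetic difference is that the paper states the bound ``at most $112k+2$ locked fragments'' outright, while you reconstruct it from part (b) of the lemma, which is the intended justification.
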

\begin{proof}
We have $|W| \ge \frac{m}{2}$ since $|V| \le \frac32 m$.
By \cref{lem:locked_deomcp}, $V$ contains at most $112k+2$ locked fragments. Their total length does not exceed $678kq$.
By the pigeonhole principle, as $\lambda_k =(112k+3)\cdot((3k+10)\cdot q) + 678kq \le |W|$, string $W$ contains a substring of length at least $(3k+10)q$ that is disjoint from locked fragments.
By \cref{lem:locked_deomcp}, this substring is a substring of $Q^\infty$ and thus contains a copy of $Q^{3k+9}$.
\end{proof}

\begin{definition}[{\bf sample}]\label{def:sfr} 
We select an arbitrary fragment $V[j\dd j']$  of $W$ that equals $Q^{3k+9}$ and is disjoint from locked fragments in $V$. Then the middle fragment $V[j_1\dd j_2]$ of $V[j\dd j']$ that equals $Q^{k+1}$ becomes an additional locked fragment. The fragment $V[j_1\dd j_2]$ is called the \emph{sample}.
\end{definition}

When computing $t$-overlap offsets with the algorithm of \cref{sec:case1}, we treat the sample as a locked fragment; the total length of the locked fragments is then still $\Oh(kq)$.

Henceforth we replace $P$ by its rotation $\rot^{y}(P)$, where $y=(j_1+\alpha) \bmod m$. Let us note that after this change, the sets $\anchored_k$ can be computed equally efficiently as the sets $\anchored_k$ for the original $P$. This follows from the fact that the algorithm underlying \cref{lem:report-anchored} does not use $\IPM$ queries, and the remaining queries from the \pillar model can easily be implemented in $\Oh(1)$ time if an input string is given by its cyclic rotation.

For an interval $I=[i_1\dd i_2]$ and a string $S$, by $S[I]$ we denote $S[i_1 \dd i_2]$. 
We denote $\q=2(k+6)(q+3)$; the constants originate from the proof of \cref{lem:correctness}.

\newcommand{\C}{\mathbf C}
\begin{observation}\label{obs:special_locked_disj} Let $[d_1 \dd d_2]\in \DD(t)$. 
If $V[j_1\dd j_2]$ is the sample in $V$, then $U[j_1+d_1-t \dd j_2+d_2+t]$ does not contain a position in a locked fragment, since we defined the sample as an (exceptional) locked fragment.
\end{observation}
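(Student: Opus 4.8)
The plan is to argue by contraposition: I assume that some position $i$ of a locked fragment of $U$ lies in the range $[j_1+d_1-t \dd j_2+d_2+t]$, and derive from this a $t$-overlap offset belonging to $[d_1\dd d_2]$, which contradicts $[d_1\dd d_2]\in\DD(t)$ (recall that every integer in an interval of $\DD(t)$ is a $t$-non-overlap offset).

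First I would record the two facts that make the argument go through. By \cref{def:sfr} the sample $V[j_1\dd j_2]$ has been declared an (exceptional) locked fragment of $V$, so $[j_1\dd j_2]\subseteq\locked(V)$; and since $i\in\locked(U)$, the whole window $[i-t\dd i+t]$ is contained in $\Ext_t(\locked(U))$. Next comes a one-line interval computation: by \cref{obs:intervals} the set of differences $p-x$ over $p\in[i-t\dd i+t]$ and $x\in[j_1\dd j_2]$ equals $[i-t-j_2\dd i+t-j_1]$, a nonempty interval since $j_1\le j_2$. The hypothesis $j_1+d_1-t\le i\le j_2+d_2+t$ rewrites as the pair of inequalities $i-t-j_2\le d_2$ and $d_1\le i+t-j_1$, which is exactly the condition for the two nonempty intervals $[i-t-j_2\dd i+t-j_1]$ and $[d_1\dd d_2]$ to intersect. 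Picking any $\Delta$ in the intersection, together with witnesses $p\in[i-t\dd i+t]\subseteq\Ext_t(\locked(U))$ and $x\in[j_1\dd j_2]\subseteq\locked(V)$ satisfying $p-x=\Delta$, \cref{def:ov_off} (taking the zero shift from $\{-m,0,m\}$ on both sides) certifies that $\Delta$ is a $t$-overlap offset, while at the same time $\Delta\in[d_1\dd d_2]$ --- the sought contradiction.

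I do not expect a genuine obstacle here: the whole proof is unwinding the definitions of $\Ext_t$, of a $t$-overlap offset, and of $\DD(t)$, together with one elementary interval-intersection check. The only point that actually needs attention --- and the reason the observation holds at all --- is that the sample must be counted among the locked positions of $V$ at the moment the overlap and non-overlap offsets are defined; this is precisely why \cref{def:sfr} promotes the sample to an exceptional locked fragment, as the statement of the observation already remarks.
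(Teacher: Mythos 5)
Your proof is correct and, modulo the level of detail, takes the same route the paper intends: the justification the paper embeds inline (``since we defined the sample as an exceptional locked fragment'') is precisely the observation that the sample lies in $\locked(V)$, so any position of $\locked(U)$ falling in $[j_1+d_1-t \dd j_2+d_2+t]$ would certify a $t$-overlap offset inside $[d_1 \dd d_2]$, contradicting $[d_1\dd d_2]\in\DD(t)$. Your explicit interval-difference computation just spells this out; the appeal to \cref{obs:intervals} is slightly imprecise (that observation is stated for $\Ext_t(I\ominus J)$ rather than bare $I\ominus J$), but the underlying arithmetic is elementary and correct.
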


\begin{definition} For an interval $\D=[d_1\dd d_2]$,  denote  \[\scope(\D)=\Ext_k([j_1\dd j_2]\oplus \D),
\ \ \CritPos(\D)\,=\, \Occ_0(Q^{k+1},\,  U[\scope(\D)]).\]
The positions in $\bigcup_{\,\D \in \DD(\q)}\, \CritPos(\D)$
are called \emph{critical positions}; see \cref{fig:przerobmnie}.
\end{definition}

The main idea of the proof of the next lemma is as follows: in an app-match for an offset from $\D$, at least one copy of $Q$ from the sample must match a copy of $Q$ in $\scope(\D)$ exactly. For $\D \in \DD(\q)$, $\scope(\D)$ is a substring of $Q^\infty$. This implies that the whole sample matches a fragment of $\scope(\D)$ exactly, which is how critical positions were defined.

We prove \cref{lem:critical} with the aid of the following well-known fact.

\begin{fact}\label{fct:remove}
For any two strings $A$ and $B$ and letter $c$, we have $\ed(Ac,Bc)=\ed(A,B)=\ed(cA,cB)$.
\end{fact}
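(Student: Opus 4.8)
The plan is to prove the two displayed equalities separately. The second, $\ed(cA,cB)=\ed(A,B)$, follows from the first, $\ed(Ac,Bc)=\ed(A,B)$, by the invariance of edit distance under reversing both arguments (reversing $Ac$ and $Bc$ turns the suffix problem into the prefix problem), so it suffices to treat the suffix case, which I would do by establishing the two opposing inequalities.

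For the inequality $\ed(Ac,Bc)\le\ed(A,B)$ I would take an optimal edit sequence transforming $A$ into $B$, apply exactly the same operations to the prefix $A$ of $Ac$, and leave the trailing letter $c$ untouched; this transforms $Ac$ into $Bc$ using $\ed(A,B)$ operations. Equivalently, one may invoke the standard dynamic-programming recurrence $\ed(Ac,Bc)=\min\{\ed(A,B),\ \ed(Ac,B)+1,\ \ed(A,Bc)+1\}$, which is valid precisely because the last letters of $Ac$ and $Bc$ coincide; the first term of the minimum already gives the bound.

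For the reverse inequality $\ed(A,B)\le\ed(Ac,Bc)$ I would start from the same recurrence and bound each of its three terms from below by $\ed(A,B)$. The first term is literally $\ed(A,B)$. For the second, the triangle inequality (edit distance is a metric) together with $\ed(A,Ac)=1$ gives $\ed(A,B)\le\ed(A,Ac)+\ed(Ac,B)=1+\ed(Ac,B)$; symmetrically, $\ed(B,Bc)=1$ gives $\ed(A,B)\le\ed(A,Bc)+1$ for the third term. Hence the minimum is at least $\ed(A,B)$, which yields $\ed(Ac,Bc)\ge\ed(A,B)$ and, combined with the previous paragraph, the desired equality. I do not anticipate any real obstacle here; the only point needing a moment of care is to fix one formulation of edit distance (optimal alignments, the DP recurrence, or the metric axioms) and use it consistently, and to justify the recurrence with matching final letters — both of which are entirely routine.
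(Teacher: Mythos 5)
Your argument is correct, and it is a complete and standard proof of this identity: the reverse-reversal reduction to the suffix case, the one-sided bound from extending an optimal alignment by a matched $c$, and the DP-plus-triangle-inequality derivation of the other bound are all sound (the recurrence $\ed(Ac,Bc)=\min\{\ed(A,B),\,\ed(Ac,B)+1,\,\ed(A,Bc)+1\}$ is exactly the standard Wagner–Fischer recurrence with a zero substitution cost because the final letters match). The paper itself does not prove this fact — it is invoked as ``well-known'' with no argument — so there is no authorial proof to compare against; your write-up supplies precisely the kind of routine justification the paper takes for granted.
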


\begin{lemma}\label{lem:critical}
For each position $p$ for which there is a $\q$-non-overlap app-match $(p,x)$, we have
$ p \in \bigcup\, \{\, \anchored_k(P,U,i)\,:\,
 i\ \mbox{is a critical position} \,\}$.
\end{lemma}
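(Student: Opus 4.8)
The plan is to show that any $\q$-non-overlap app-match $(p,x)$ must align the sample $V[j_1\dd j_2]$ exactly with a copy of $Q^{k+1}$ inside $\scope(\D)$, for the interval $\D\in\DD(\q)$ containing the offset $\Delta=p-x$; the starting position of that copy will then be a critical position that anchors the occurrence $U[p\dd p']$.

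First I would fix the $\q$-non-overlap app-match $(p,x)$, so $p\in\Occ_k(V^{(x)},U)$ and $\Delta:=p-x$ is a $\q$-non-overlap offset, hence $\Delta\in\D$ for some $\D=[d_1\dd d_2]\in\DD(\q)$. Recall that by our rotation of $P$ (setting $y=(j_1+\alpha)\bmod m$) the sample starts exactly at position $j_1$ of $V$ relative to $\alpha$; the key point is that the sample is entirely inside $W=V[|V|-m\dd m)$, so it lies inside $V^{(x)}$ for every valid $x\le|V|-m$. Consider an optimal alignment realizing $V^{(x)}=_k U[p\dd p']$. Since $\Delta$ is a $\q$-non-overlap offset, by \cref{obs:special_locked_disj} the window $U[j_1+d_1-\q\dd j_2+d_2+\q]$ (which contains the region to which the sample can be mapped under any $\le k$-edit alignment, since the alignment moves a position by at most $k\le\q$ and shifts the sample's endpoints by at most the offset range) contains no position of a locked fragment of $U$; hence by \cref{lem:locked_deomcp}(a) that window — and in particular $\scope(\D)=\Ext_k([j_1\dd j_2]\oplus\D)$ — is a fragment of $Q^\infty$.

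Next I would argue that the sample matches exactly. The sample is $Q^{k+1}$, consisting of $k+1$ disjoint copies of $Q$. Under the optimal alignment with at most $k$ edits, at least one of these $k+1$ copies of $Q$ is matched to a fragment of $U$ with no edit inside it (pigeonhole on the $\le k$ edit operations). That fragment of $U$ lies inside the window above, hence inside a power of $Q$, and it is an exact occurrence of $Q$; since $Q$ is primitive this forces the alignment of that copy to be "in phase" with the $Q$-periodicity of $U$ in this region. Now peeling off the matched copies of $Q$ to the left and right using \cref{fct:remove} (edit distance is unchanged by removing common prefixes/suffixes one letter at a time), the remaining copies of $Q$ in the sample are adjacent powers of $Q$ on both sides aligned against the same power $Q^\infty$, so they too must match exactly with zero edits. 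Thus the entire sample $V[j_1\dd j_2]$ is aligned, edit-free, with a fragment $U[i\dd i+(k+1)q)$ that equals $Q^{k+1}$ and lies inside $\scope(\D)$; therefore $i\in\Occ_0(Q^{k+1},U[\scope(\D)])=\CritPos(\D)$, i.e.\ $i$ is a critical position. Finally, the optimal alignment of $V^{(x)}$ with $U[p\dd p']$ restricted to the part of $V^{(x)}$ before position $j_1$ and the part after position $j_2$ has total cost $\le k$, and after our rotation $P=\rot^y(P)$ splits exactly at the sample boundary, so this exhibits $p$ as anchored at $i$: $\ed(U[p\dd i),\text{(prefix part of }P))+\ed(U[i\dd p'],\text{(suffix part of }P))\le k$, giving $p\in\anchored_k(P,U,i)$ as required.

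I expect the main obstacle to be the bookkeeping that pins down precisely which copy of $Q$ in the sample is matched edit-free and that the window $\scope(\D)$ (defined with the slack $\Ext_k$ and $\q=2(k+6)(q+3)$) is wide enough to contain the image of the whole sample under any optimal alignment, together with the "peeling" argument via \cref{fct:remove} that upgrades "one copy matches exactly" to "the whole sample matches exactly"; the constant $\q$ has been chosen exactly so that the slack absorbs the $\le k$ edits and the endpoint shifts by offsets in $\D$, so the inequality chains should go through, but care is needed to make sure no locked fragment of $U$ can intrude and break the $Q^\infty$ structure in the relevant window.
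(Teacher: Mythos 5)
Your proposal is correct and follows essentially the same route as the paper: it fixes an optimal alignment, uses Observation~\ref{obs:special_locked_disj} to argue that $U[\scope(\D)]$ is a substring of $Q^\infty$, applies the pigeonhole argument to find one copy of $Q$ in the sample that is aligned edit-free, and then uses \cref{fct:remove} to shift the alignment boundaries so that the entire sample $Q^{k+1}$ is aligned with a $Q^{k+1}$ in $\scope(\D)$, whose starting position is the desired critical anchor. The one phrasing to be careful about is the claim that the remaining copies of $Q$ in the sample ``too must match exactly with zero edits'': in the \emph{given} optimal alignment they need not; rather, \cref{fct:remove} lets you produce an \emph{alternative} cost-preserving decomposition in which they do (which is all the anchoring condition requires), and this is exactly how the paper phrases the ``peeling'' step.
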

\begin{proof}
    Consider a $\hat{q}$-non-overlap app-match $(p,x)$, where $p-x$ belongs to an interval $\D = [d_1 \dd d_2] \in \DD(\q)$.
    
    Due to \cref{obs:special_locked_disj}, no position in $U[I]$, where $I = [j_1+d_1 - \q \dd j_2+d_2 + \q]$, belongs to a locked fragment.
    Note that $\scope(\D) = [j_1+d_1-k \dd j_2+d_2+k] \subseteq I$.
    Hence, $U[\scope(\D)]$ is a fragment of $U$ that is disjoint from all locked fragments and is thus equal to a substring of $Q^\infty$.

    Let $U[p \dd p')$ be a fragment of $U$ that is at edit distance at most $k$ from $V^{(x)}$.
    Further, let us fix an alignment of $V^{(x)}$ and $U[p \dd p')$ of cost $\ed(V^{(x)}, U[p \dd p'))$.
    
    Since $V[j_1 \dd j_2] = Q^{k+1}$, this alignment aligns at least one of the $k+1$ copies of $Q$ in the sample exactly with a copy of $Q$ in $U$.
    
    Let this copy of $Q$ be $V[j' \dd j'+q)$, where $j'=j_1+t\cdot q,\, t \in [0 \dd k]$, and suppose that it is aligned exactly with $U[z' \dd z'+q)$. We then have 
    \[\ed(V^{(x)}, U[p \dd p'))\; =\; \ed(V[x \dd  j'), U[p \dd z')) + 
                                   \ed(V[j'+q \dd  x+m), U[z' + q \dd p')).
    \]

\noindent Let $a = z' - t\cdot q$ and $z = z'+(k-t) \cdot q$. We observe that 
$[a \dd z) \subseteq \scope(\D)$, since \[z'\ \in\ [-k \dd k] \oplus ((p - x) + j')\ \subseteq\ [j'+d_1 -k \dd j'+d_2+k],\] 
which implies $a \in [j_1+d_1 -k \dd j_1+d_2+k]$ and $z \in [j_2+d_1 -k \dd j_2+d_2+k]$.
This means that $U[a \dd z)=Q^{k+1}$, so $a \in \CritPos(\D)$.

Repeatedly using \cref{fct:remove} for the first and last summands of the first summation below, we have:
\begin{multline*}
    \ed(V[x \dd j_1),U[p \dd a))\; +\; \ed(V[j_1 \dd j_2], U[a \dd z))\; + \;
                                 \ed(V(j_2 \dd x+m), U[z \dd p'))\;= \\
     \vspace*{2mm}\ed(V[x \dd j'),U[p \dd z')) + \ed(V[j'+q \dd x+m), U[z'+q \dd p'))
     =  \ed(V^{(x)}, U[p \dd p')),
    \end{multline*}
    since $\ed(V[j_1 \dd j_2], U[a \dd z))=0$.
    Position $a$ is a critical position and this concludes the proof of this lemma.
\end{proof}

The lemma says that it would be enough to consider 
$\anchored_k(P,T,i)$ for all critical positions $i$. Unfortunately,  the total number
of critical positions can be too large; however, they are grouped into $\Oh(k^2)$ arithmetic progressions and it is enough to consider 
the first and the last position in each such progression. 

In the decision version we use \cref{algo2dec}. A proof of the following \cref{lem:correctness} with several auxiliary lemmas is presented in \cref{sec:decision}. A generalization of \cref{algo2dec} to the reporting variant of the problem is presented in \cref{sec:algo2rep}. The subsections end with proofs of the decision and reporting version of \cref{thm:main_pillar}, respectively.

\begin{algorithm}[h!]

Compute decompositions of $U$ and $V$ into locked fragments and the sample\;

\vspace*{0.1cm}
Compute $\DD(\q)$\;

\vspace*{0.1cm}
\ForEach{interval of non-overlap offsets $\D \in \DD(\q)$}
{

\smallskip
$i_1 := \min\,  \CritPos(\D)$;\ \ $i_2 := \max\,  \CritPos(\D)$\;
 
\smallskip
\lIf{$\anyanchored_k(P,U,i_1) \ne none$}{\Return{$\anyanchored_k(P,U,i_1)$}}\smallskip
\lIf{$\anyanchored_k(P,U,i_2) \ne none$}{\Return{$\anyanchored_k(P,U,i_2)$}}
}
\smallskip
{\Return{$none$};
}
\caption{\bf Non-overlap case:  decision version}\label{algo2dec}
\vspace*{0.1cm}
\end{algorithm}

\begin{figure}[t]
\centering
\resizebox{\textwidth}{!}{\begin{tikzpicture}[scale=0.9]

    \def\lenH{0.2}
    \tikzstyle{dot}=[inner sep=0.06cm, circle, draw, fill=black]
    \tikzstyle{dot2}=[inner sep=0.06cm, circle, draw, fill=green]
    
    \pgfmathsetmacro{\prevX}{0}
    \foreach \w [count=\i] in {0.3, 1, 1, 1, 1, 1, 1, 1, 1, 1, 1, 1, 1} {
      \pgfmathsetmacro{\currX}{\prevX + \w};
      \coordinate (l\i) at (\prevX,0);
      \coordinate (ll\i) at (\prevX,\lenH);
      \coordinate (r\i) at (\currX,0);
      \coordinate (rr\i) at (\currX,\lenH);
      \draw (\prevX, 0) rectangle (\currX, \lenH);
      \global\let\prevX=\currX
    }
    \node[left] at ($(l1)!0.5!(ll1)$) {$U$};
    \node[dot2] at ($(l3)+(0,0)$) (i1) {};
    \node[dot2] at ($(l4)+(0,0)$) {};
    \node[dot2] at ($(l5)+(0,0)$) {};
    \node[dot2] at ($(l6)+(0,0)$) {};
    \node[dot2] at ($(l7)+(0,0)$) {};
    \node[dot2] at ($(l8)+(0,0)$) {};
    \node[below right=-0.1cm] at (i1) {\small $i_1$};
    \node[dot2] at ($(l9)+(0,0)$) (i2) {};
    \node[below] at (i2) {\small $i_2$};
    \node[above=0.1cm] at ($(ll6)!0.5!(rr6)$) {$Q$};

    \node[dot] at ($(l2)+(0.2,0)$) (jj1) {};
    \node[dot] at ($(l12)+(0.7,0)$) (jj2) {};
    \node[above=0.4cm] at (jj1) {\small $j_1+d_1-k$};
    \node[above=0.4cm] at (jj2) {\small $j_2+d_2-k$};

    \foreach \i/\c in {2/black,3/red,4/red,5/red,6/black,7/black,8/black,9/red,10/red,11/red,12/black,13/black} {
      \coordinate (m1) at ($(l\i)!0.3!(r\i)$);
      \coordinate (m2) at ($(l\i)!0.7!(r\i)$);
      \draw[\c] ($(l\i)+(0,\lenH)$) .. controls ($(m1)+(0,2*\lenH)$) and ($(m2)+(0,2*\lenH)$) .. ($(r\i)+(0,\lenH)$);
    }

    \begin{scope}[xshift=1cm,yshift=-1.5cm]
    \node[left] at (0,0.5*\lenH) {$V$};
    \draw (0, 0) rectangle (3, \lenH);
    \draw (3, 0) rectangle (4, \lenH);
    \draw (4, 0) rectangle (5, \lenH);
    \draw (5, 0) rectangle (6, \lenH);
    \draw (6, 0) rectangle (9, \lenH);
    \coordinate (s1) at (3,0);
    \coordinate (ss1) at (3,\lenH);
    \coordinate (s2) at (6,0);
    \coordinate (ss2) at (6,\lenH);
    \node[dot] at (s1) {};
    \node[below] at (s1) {\small $j_1$};
    \node[dot] at (s2) {};
    \node[below] at (s2) {\small $j_2$};
    \foreach \i in {0,...,2} {
      \coordinate (m0) at ($(3+\i,\lenH)$);
      \coordinate (m1) at ($(3+\i+0.3,2*\lenH)$);
      \coordinate (m2) at ($(3+\i+0.7,2*\lenH)$);
      \draw[red] (m0) .. controls (m1) and (m2) .. +(1,0);
    }
    \node [above] at (4.5,2*\lenH) {\small sample $V[j_1\dd j_2]$};
    \end{scope}

    \draw[-latex] (ss1)--(jj1);
    \draw[-latex] (ss2)--(jj2);

\end{tikzpicture}}
\vspace*{-0.5cm}
\caption{Illustration of basic parameters in the algorithm: $\D=[d_1\dd d_2],\ I = \Ext_k([j_1\dd j_2]\oplus \D)$.  We have $I\cap \locked(U)=\emptyset$.
$\CritPos(\D)$ consists of  critical positions 
shown as green circles.}\label{fig:przerobmnie}
\end{figure}
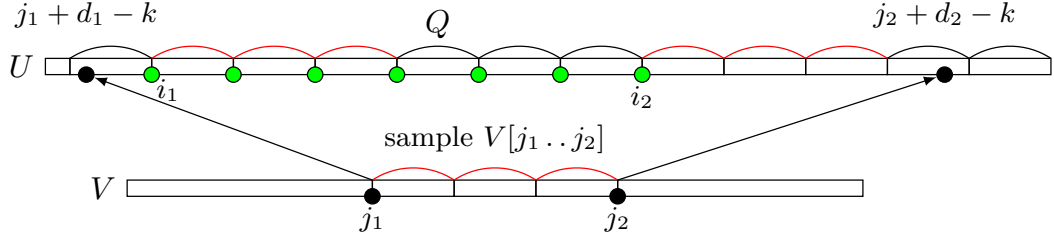

\begin{lemma}\label{lem:correctness}
Assume that $\lambda_k \le \frac{m}{2}$.
\cref{algo2dec} works in $\Oh(k^4)$ time in the \pillar model and returns a circular $k$-edit occurrence of $P$ in $U$ if any $\q$-non-overlap app-match exists.
\end{lemma}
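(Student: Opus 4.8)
The plan is to establish two things separately: correctness (the algorithm returns a valid circular $k$-edit occurrence exactly when a $\q$-non-overlap app-match exists) and the $\Oh(k^4)$ running time. Correctness in the ``if'' direction — whenever the algorithm returns a position, it returns an element of $\cycoc_k(P,U)$ — is immediate, since every value returned is of the form $\anyanchored_k(P,U,i)$ for a critical position $i$, and any anchored occurrence is a circular $k$-edit occurrence by \cref{def:anchored}. The substantive direction is: if some $\q$-non-overlap app-match $(p,x)$ exists, then the algorithm returns \emph{some} occurrence (not necessarily $p$ itself). By \cref{lem:critical}, $p \in \anchored_k(P,U,i)$ for some critical position $i$, and $i \in \CritPos(\D)$ for the interval $\D \in \DD(\q)$ containing $p-x$. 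So it would suffice to call $\anyanchored_k(P,U,i)$ for \emph{every} critical position $i \in \CritPos(\D)$; the point of the algorithm is that it only tests $i_1 = \min\CritPos(\D)$ and $i_2 = \max\CritPos(\D)$. The heart of the proof — and the main obstacle — is therefore to show that testing only the two extreme critical positions of each $\D$ suffices: if $\anchored_k(P,U,i) \ne \emptyset$ for some critical $i$ strictly between $i_1$ and $i_2$, then already $\anchored_k(P,U,i_1) \ne \emptyset$ or $\anchored_k(P,U,i_2) \ne \emptyset$.

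For that core claim I would argue as follows. Fix $\D = [d_1 \dd d_2] \in \DD(\q)$; by \cref{obs:special_locked_disj} and the definition of $\scope(\D)$, the fragment $U[\scope(\D)]$ is a contiguous run of $Q$'s disjoint from all locked fragments of $U$, and the critical positions $\CritPos(\D) = \Occ_0(Q^{k+1}, U[\scope(\D)])$ form an arithmetic progression with difference $q$ inside this run. The key structural fact is that, because $V[j_1 \dd j_2]$ is the sample (an exact $Q^{k+1}$ flanked by further exact copies of $Q$, and treated as a locked fragment with $p-x$ a \emph{non-overlap} offset), sliding an app-match by one full period $q$ — moving from critical position $i$ to $i+q$ or $i-q$ — leaves the edit distance unchanged, provided we stay within the run $U[\scope(\D)]$ and the locked fragments of $U$ and $V$ outside the sample are not crossed. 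Concretely: if $(p,x)$ is a $\q$-non-overlap app-match anchored at a critical position $i = a$ in the interior of $\CritPos(\D)$, then by \cref{fct:remove} one can ``peel off'' or ``prepend'' a copy of $Q$ on the sample side and correspondingly shift the alignment on the $U$ side, producing an app-match anchored at $a \pm q$ of the same cost $\le k$; iterating drives the anchor to $i_1$ or $i_2$. This is exactly the ``crucial observation'' flagged informally in the introduction (``as we slide a length-$m$ fragment of $V$ over $U$, $|Q|$ positions at a time, \dots the edit distance remains unchanged''); the formal version will need the bound $\q = 2(k+6)(q+3)$ to guarantee enough room, i.e., that the shifted alignment still lies inside $\scope(\D)$ and outside the other locked fragments — this is where the stated constant is consumed, and the companion lemmas deferred to \cref{sec:decision} presumably do the bookkeeping. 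I would cite those auxiliary lemmas here and invoke \cref{lem:critical} to finish.

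For the running time: computing the decompositions of $U$ and $V$ into locked fragments costs $\Oh(k^2)$ in the \pillar model by \cref{lem:locked_deomcp}; selecting the sample and computing $\DD(\q)$ costs $\Oh(k^2 \log\log k)$ by \cref{lem:nov-off}; and $|\DD(\q)| = \Oh(k^2)$. For each $\D \in \DD(\q)$ we must extract $i_1 = \min\CritPos(\D)$ and $i_2 = \max\CritPos(\D)$: since $\scope(\D)$ is a substring of $Q^\infty$ of length $\Oh(k + \q) = \Oh(kq)$, the positions where $Q^{k+1}$ occurs form an arithmetic progression whose first and last terms are determined in $\Oh(1)$ time from the offset of $\scope(\D)$ modulo $q$ and its length (an $\IPM$-style computation, or plain modular arithmetic after one $\Extract$/$\Access$). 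Then each iteration performs at most two calls to $\anyanchored_k(P,U,\cdot)$ on a single position, each costing $\Oh(k^2 \log^3 k)$ by \cref{lem:decision-anchored}. Hmm — this would give $\Oh(k^4\log^3 k)$, not $\Oh(k^4)$; here I would note that \cref{lem:decision-anchored} is stated for intervals of up to $k$ positions, but a \emph{single} position requires only one run of the $\anchored_k$ computation of \cref{lem:report-anchored}, which costs $\Oh(k^2)$ without the $\log^3 k$ overhead (the $\log^3 k$ in \cref{lem:decision-anchored} comes from binary-searching over the interval, which is unnecessary for a singleton). Since the algorithm only ever evaluates $\anyanchored_k$ at individual critical positions $i_1, i_2$, each such evaluation costs $\Oh(k^2)$, and with $\Oh(k^2)$ intervals $\D$ we obtain $\Oh(k^4)$ total, dominating the preprocessing. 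This completes the proof. \qed
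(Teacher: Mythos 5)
Your proposal correctly identifies the overall strategy (it suffices to test $i_1$ and $i_2$, because one can shift an app-match anchored at an interior critical position by one period $q$ at a time towards the ends of $\CritPos(\D)$), and your running-time analysis is sound and even slightly more careful than the paper's: you correctly observe that $\anyanchored_k$ is only called on singletons, so each call is an $\Oh(k^2)$ call to the $\anchored_k$ machinery of \cref{lem:report-anchored} rather than the $\Oh(k^2\log^3 k)$ of \cref{lem:decision-anchored}, giving $\Oh(k^4)$ over $\Oh(k^2)$ intervals of $\DD(\q)$. Correctness in the easy direction is also handled properly.

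However, the central step — that $\anchored_k(P,U,i)\neq\emptyset$ for some interior critical $i$ already implies $\anchored_k(P,U,i_1)\neq\emptyset$ or $\anchored_k(P,U,i_2)\neq\emptyset$ — is asserted but not proved, and the mental picture you give is too simple to make it go through. You describe a single uniform ``peel off a copy of $Q$ on the sample side and shift the alignment on the $U$ side'' operation, producing the same app-match anchored at $i\pm q$. That is only one of two genuinely different shift operations, and neither of them is always available. The paper's proof must distinguish whether the endpoint of the interval in $\R_k(P,U,i)$ is pinned by a locked fragment in $U$ or in $V$ (\cref{fct:max}): if it is left-$U$-locked the occurrence $p$ cannot be slid left, so one must instead shift the $V$-side alignment (\cref{lem: non-overlap pointwise2}\,(b)), leaving $p$ fixed while the anchor moves to $i+q$; dually if it is right-$U$-locked; and only when the interval is $V$-locked on both sides can one shift $p$ and the anchor in lockstep (\cref{lem: non-overlap pointwise2}\,(a)), which is the case your sketch describes. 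These three cases lead to different induction directions (towards $i_2$, towards $i_1$, or in either direction), and without the case split the inductive ``drive the anchor to $i_1$ or $i_2$'' argument does not type-check. Moreover, the argument must be carried out on whole intervals of $\anchored_k(P,U,i)$, not single positions — one must propagate the triads $(I,J,L)$ with the appropriate $\lcut_q/\rcut_q$ at the boundary (\cref{lem:left_right}) — and this interval-level bookkeeping is where the constants in $\q=2(k+6)(q+3)$ are actually spent, not merely to ``stay inside $\scope(\D)$.'' Your proposal explicitly defers all of this to ``the companion lemmas'' and ``presumably do the bookkeeping,'' which is the substance of the proof rather than a citation.
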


\subsection{Proof of \texorpdfstring{\cref{lem:correctness}}{Lemma 36}} \label{sec:decision}
For a fragment $F=U[I]$ ($F=V[I]$, respectively), we denote by $\locked(F)$ the set $I\cap \locked(U)$ ($I\cap \locked(V)$,
respectively).
\begin{definition}
Two fragments $F_1, F_2$ (both of $U$ or both of $V$) are called \emph{locked-equivalent} if $\locked(F_1)=\locked(F_2)$ and there are no locked positions in a prefix and a suffix of length $(k+4)q$ in $F_1$ and in $F_2$; see~\cref{fig:locked_eq}.
\end{definition}

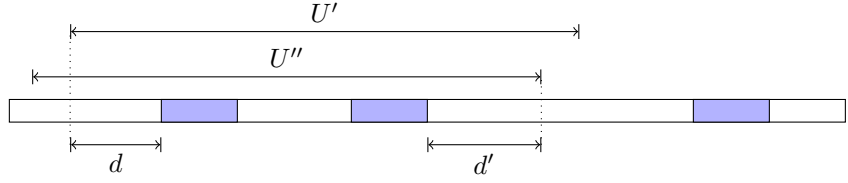
\begin{figure}[htpb]
\centering
\begin{tikzpicture}

\coordinate (left) at (0, 0);
\coordinate (right) at (11, 0.3);

\coordinate (a1) at (2, 0);
\coordinate (b1) at (3, 0.3);
\coordinate (a2) at (4.5, 0);
\coordinate (b2) at (5.5, 0.3);
\coordinate (a3) at (9, 0);
\coordinate (b3) at (10, 0.3);
\coordinate (d) at (0.8, 0);
\coordinate (dd) at (7.0, 0);

\draw (left) rectangle (right);
\foreach \i in {1,...,3} {
  \draw[draw=black,fill=blue!30!white] (a\i) rectangle (b\i);
}

\draw[|<->|] ($(d)+(0,-0.3cm)$)--($(a1)+(0,-0.3cm)$) node[midway,below] {$d$};
\draw[|<->|] ($(dd)+(0,-0.3cm)$)--($(b2 |- a2)+(0,-0.3cm)$) node[midway,below] {$d'$};

\coordinate (ddR) at ($(dd)+(0,+0.6cm)$);
\coordinate (ddL) at ($(ddR)+(-6.7cm,0)$);

\coordinate (dL) at ($(d)+(0,+1.2cm)$);
\coordinate (dR) at ($(dL)+(6.7cm,0)$);

\draw[|<->|] (ddL)--(ddR) node[midway,above] {$U''$};
\draw[|<->|] (dL)--(dR) node[midway,above] {$U'$};

\draw[dotted] ($(d)+(0,-0.3cm)$)--(dL);
\draw[dotted] ($(dd)+(0,-0.3cm)$)--(ddR);

\end{tikzpicture}
\caption{The gray boxes correspond to locked fragments, while $d,d'\ge (k+4)q$. The fragments $U'$ and $U''$ are locked-equivalent.  }\label{fig:locked_eq}
\end{figure}

We extend \cref{def:anchored} and say that a circular $k$-edit occurrence $T[p\dd p']$ of $P$ is \emph{$x$-anchored} at position $i$ if 
$\ed(T[p\dd i), P[x\dd m)) + \ed(T[i\dd p'], P[0\dd x)) \leq k.$
For a fragment $Y=X[i\dd j]$ and integer $y$, we denote $\shift(Y,y)=X[i+y\dd j+y]$.

By $\Frag(S)$ we denote the set of all fragments of string $S$.
For a fragment $X[i\dd j]$, we denote $\First(X[i \dd j])=i$. 
We recall that $\q=2(k+6)(q+3)$.
The following lemma states simple properties of $\q$-non-overlap offsets.

\begin{lemma}\label{lem:after_locked}
    Let $W \in \Frag(U)$ and $Z \in \Frag(V)$ and assume that $\ed(W,Z) \le k$ and $\First(W)-\First(Z) \in \DD(t)$. Assume that in an optimal alignment between $W$ and $Z$, position $i$ in $W$ is aligned with position $j$ in $Z$.
    
    If position $i$ in $W$ is in a locked fragment (from $U$) and $i \in [t \dd |W|-t)$, then positions $[j-t+k \dd j+t-k]$ in $Z$ are not in a locked fragment (from $V$).

    If position $j$ in $Z$ is in a locked fragment (from $V$) and $j \in [t \dd |Z|-t)$, then positions $[i-t+k\dd i+t-k]$ in $W$ are not in a locked fragment (from $U$).
\end{lemma}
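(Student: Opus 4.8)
The plan is to prove both implications by contradiction, converting a hypothetical locked position of one string lying in the forbidden window around the partner of a locked position of the other string into a witness that the offset $\Delta:=\First(W)-\First(Z)$ is a $t$-\emph{overlap} offset in the sense of \cref{def:ov_off}; this contradicts the hypothesis $\First(W)-\First(Z)\in\DD(t)$, which just says that $\Delta$ is a $t$-non-overlap offset.

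The only metric fact I would use is the following. Since the fixed alignment between $W$ and $Z$ is optimal, its cost equals $\ed(W,Z)\le k$, so it performs at most $k$ insertions and deletions in total; hence, if position $i$ of $W$ is aligned with position $j$ of $Z$, the displacement $i-j$ equals the number of deletions minus the number of insertions among the operations preceding that aligned pair, and therefore $|i-j|\le k$.

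For the first implication I would argue as follows. Suppose for contradiction that some $j'\in[j-t+k\dd j+t-k]$ satisfies $\First(Z)+j'\in\locked(V)$, and set $p_0:=\First(W)+i$, so $p_0\in\locked(U)$ by hypothesis. Then
\[ p_0-(\First(Z)+j')=\Delta+(i-j'),\qquad |i-j'|\le|i-j|+|j-j'|\le k+(t-k)=t. \]
Consequently $p:=p_0-(i-j')=(\First(Z)+j')+\Delta$ lies in $\Ext_t(\locked(U))$, while $x:=\First(Z)+j'\in\locked(V)$; since $p-x=\Delta$ with $p\in\Ext_t(\locked(U))\subseteq\Ext_t(\locked(U))\oplus\{-m,0,m\}$ and $x\in\locked(V)\subseteq\locked(V)\oplus\{-m,0,m\}$, \cref{def:ov_off} makes $\Delta$ a $t$-overlap offset, the desired contradiction. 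The hypothesis $i\in[t\dd|W|-t)$ enters only to guarantee that $(i,j)$ is a genuine aligned pair around which this window is well defined; it plays no further role.

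The second implication is symmetric. If some $i'\in[i-t+k\dd i+t-k]$ had $\First(W)+i'\in\locked(U)$, then with $x_0:=\First(Z)+j\in\locked(V)$ one computes $(\First(W)+i')-x_0=\Delta+(i'-j)$ and $|i'-j|\le|i'-i|+|i-j|\le(t-k)+k=t$, so $p:=(\First(W)+i')-(i'-j)=x_0+\Delta\in\Ext_t(\locked(U))$ and $x:=x_0\in\locked(V)$ once more exhibit $\Delta$ as a $t$-overlap offset. The single point needing attention, which I expect to be the only place where one can slip, is that \cref{def:ov_off} applies $\Ext_t$ to $\locked(U)$ but not to $\locked(V)$; this is why in both implications the entire displacement must be absorbed on the $U$-side (leaving the chosen $V$-position inside $\locked(V)$ itself), and why the two slacks ($\pm k$ from the optimal alignment and $\pm(t-k)$ from the statement) are calibrated to sum to exactly $t$. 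Apart from this, the proof is routine bookkeeping.
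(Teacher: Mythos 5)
Your proof is correct and takes essentially the same approach as the paper: the paper argues directly that positions $[i-t\dd i+t]$ of $Z$ avoid locked fragments and then shows $[j-t+k\dd j+t-k]\subseteq[i-t\dd i+t]$ using $|i-j|\le k$, while you phrase the same arithmetic as a proof by contradiction, explicitly constructing the pair $(p,x)$ witnessing that $\Delta$ would be a $t$-overlap offset. The bookkeeping is identical, and your observation that the whole displacement must be charged to the $U$-side (since only $\locked(U)$ is wrapped in $\Ext_t$ in \cref{def:ov_off}) is indeed the right place to be careful.
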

\begin{proof}
Let $\Delta = \First(W)-\First(Z)$ and consider the first statement.
By the definition of $t$-non-overlap offsets, we have that all positions $x$ in $V$ such that $\First(W)+i - x \in [\Delta - t \dd \Delta +t]$ are disjoint from locked fragments.
These correspond to the positions $y$ of $Z$ that satisfy
\[\First(W)+i - (\First(Z)+y) \in [\Delta - t \dd \Delta +t] \Longleftrightarrow i-y \in [-t \dd t] \Longleftrightarrow y \in [i-t \dd i+t].\]
Now, since $j-i \in [-k \dd k]$, we have $i \in [j-k \dd j+k]$ and hence \[[j-t+k \dd j+t-k] \subseteq [i-t \dd i+t],\]
concluding the proof of the first statement.
The proof of the second statement is analogous.
\end{proof}

The next lemma heavily exploits properties of $\Theta(kq)$-non-overlap offsets.

\begin{lemma}\label{lem:greedy}
Consider a pair $X$ and $Y$ of fragments such that either $X\in \Frag(U)$ and $Y\in \Frag(V)$ or $X\in \Frag(V)$ and $Y\in \Frag(U)$.
Let 
$a=\First(X)$, $b=\First(X')$, $c=\First(Y)$, $X'=\shift(X,q)$, and $\Delta=1$ if $X \in \Frag(U)$, $\Delta=-1$ otherwise.
Suppose that $X,X'$ are locked-equivalent and that 
$(a-c)\cdot\Delta,(b-c)\cdot\Delta\in \DD((k+3)(q+1)).$

If $X=_k Y$ or $X'=_k Y$, then $\ed(X,Y)=\ed(X',Y)$.
\end{lemma}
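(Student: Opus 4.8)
We may assume $X\in\Frag(U)$ and $Y\in\Frag(V)$, so $\Delta=1$; the complementary case follows by the same argument after exchanging the roles of $U$ and $V$ (recall that both have locked decompositions with respect to $Q$). Write $a=\First(X)$ and $c=\First(Y)$, so $X=U[a\dd a+|X|)$, $X'=U[a+q\dd a+q+|X|)$, and the hypothesis says that both $a-c$ and $a+q-c$ belong to intervals of $\DD((k+3)(q+1))$. The first step is to normalize $X$ and $X'$. Since $X$ and $X'$ are locked-equivalent, their length-$(k+4)q$ prefixes and suffixes avoid all locked fragments of $U$, hence are (powers-with-remainder of) rotations of $Q$; a short Fine--Wilf argument then shows that $X$ and $X'$ share a common length-$(k+3)q$ prefix and a common length-$(k+3)q$ suffix, and that, after stripping the maximal common outer periodic parts,
\[X=Q_0^{\,e}\,C\,Q_s^{\,d},\qquad X'=Q_0^{\,e-1}\,C\,Q_s^{\,d+1},\]
where $e,d\ge k+4$, $Q_0,Q_s$ are rotations of $Q$, and $C=U[\,\cdot\dd\cdot\,]$ is a fragment of $U$ common to $X$ and $X'$ containing every locked fragment met by either. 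Thus $X$ and $X'$ differ only by transporting one period from the front $Q_0$-run to the back $Q_s$-run.

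By reversal symmetry (reversing $U$ and $V$ turns a shift by $+q$ into one by $-q$ and swaps $X$ and $X'$), it suffices to prove: if $\ed(X,Y)\le k$ then $\ed(X',Y)\le\ed(X,Y)$; applying this and its reversed form also shows $\ed(X,Y)\le k\Leftrightarrow\ed(X',Y)\le k$, so equality of the two distances follows and the ``or'' in the statement is absorbed. Fix an optimal alignment $\mathcal A$ of $X$ with $Y$, of cost $c^\star=\ed(X,Y)\le k$. Since $e\ge k+2$, the first $k+2$ copies of $Q_0$ in $X$ form $k+2$ length-$q$ blocks on which $\mathcal A$ uses at most $c^\star\le k$ edits in total, so by pigeonhole one of them — say the $\alpha_1$-th, with $\alpha_1\le k+1$ — is crossed with no edit and is thus matched letter-by-letter to a fragment $Y[\beta_1\dd\beta_1+q)$, which therefore equals $Q_0$. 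Symmetrically, one of the last $k+2$ copies of $Q_s$ in $X$, the $\alpha_2$-th from the end ($\alpha_2\le k+1$), is matched cleanly to a fragment $Y[\beta_2\dd\beta_2+q)=Q_s$, and $\beta_1+q\le\beta_2$ because $e+d$ is large enough that these two blocks of $X$ are disjoint. Cutting $Y=Y_A\cdot Q_0\cdot Y_B\cdot Q_s\cdot Y_C$ at these occurrences and $X=Q_0^{\alpha_1}\cdot Q_0\cdot R\cdot Q_s\cdot Q_s^{\alpha_2}$ accordingly (so $R=Q_0^{\,e-\alpha_1-1}CQ_s^{\,d-\alpha_2-1}$ is a fragment of $U$), the alignment $\mathcal A$ splits as $\ed(Q_0^{\alpha_1},Y_A)+\ed(R,Y_B)+\ed(Q_s^{\alpha_2},Y_C)=c^\star$. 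Since $X'=Q_0^{\alpha_1}\cdot Q_0\cdot\shift(R,q)\cdot Q_s\cdot Q_s^{\alpha_2}$ with $\shift(R,q)=Q_0^{\,e-\alpha_1-2}CQ_s^{\,d-\alpha_2}$, concatenating alignments yields $\ed(X',Y)\le\ed(Q_0^{\alpha_1},Y_A)+\ed(\shift(R,q),Y_B)+\ed(Q_s^{\alpha_2},Y_C)$, so it remains to prove $\ed(\shift(R,q),Y_B)\le\ed(R,Y_B)$.

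This last inequality is again an instance of the statement being proved, now for a strictly shorter string $R$: the clean cuts make $Y_B$ a fragment of $V$ whose offset with $R$ differs from $a-c$ only by a multiple of $q$ plus a shift of size $\le k$ (the cost of $\mathcal A$ on that part), so it is still a $\Theta(kq)$-non-overlap offset, and — choosing the clean blocks of $\mathcal A$ as close to the ends of the two runs as possible — $R$ retains enough periodic margin to host clean blocks again. One thus closes the argument by induction on $|X|$, peeling one period per step; the base case (when a periodic run becomes too short to contain a clean block) is handled directly via \cref{fct:remove} and \cref{clm:fixed}, which pin the relevant edit distances to a single rotation of $Q$ and make the two length-$q$ corrections — one at each end — cancel exactly.

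The main obstacle is precisely this cancellation. The crude bound ``delete a period at the front, insert one at the back'' loses $\Theta(q)$, which dwarfs the $\Theta(k)$ budget; so the proof must show that the change in edit distance caused by shortening the front $Q_0$-run is compensated exactly by the change caused by lengthening the back $Q_s$-run. Enforcing this balance is what the $(k+4)q$ locked-equivalence margins, \cref{lem:after_locked}, and the $\Theta(kq)$ non-overlap tolerance are all designed to do, and propagating the various constants (those in $\hat q$, in $\scope(\cdot)$, in the definition of locked-equivalence, and in $\DD(\cdot)$) carefully through the induction is the technical heart of the argument.
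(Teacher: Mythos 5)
Your proposal shares the paper's main ingredients — pigeonholing on the $k$-error budget to find exactly aligned copies of $Q$ near the ends, cutting the optimal alignment at those clean blocks, and using \cref{fct:remove} to cancel copies of $Q$ — but then diverges: where the paper does a \emph{single-pass decomposition} of the whole alignment, you try to \emph{recurse} on the middle pair $(R,Y_B)$. That recursion has a genuine gap.

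Concretely, you write $X=Q_0^{\alpha_1}\cdot Q_0\cdot R\cdot Q_s\cdot Q_s^{\alpha_2}$ and claim the inequality $\ed(\shift(R,q),Y_B)\le\ed(R,Y_B)$ is ``again an instance of the statement being proved''. But the inductive hypothesis requires $R$ and $\shift(R,q)$ to be locked-equivalent, i.e.\ $R$ must have a clean (locked-free) prefix and suffix of length $(k+4)q$. With $R=Q_0^{\,e-\alpha_1-1}\,C\,Q_s^{\,d-\alpha_2-1}$, the clean prefix of $R$ has length $(e-\alpha_1-1)q$. You only know $e\ge k+4$ and the pigeonhole only gives $\alpha_1\le k+1$ (if the alignment concentrates all $k$ edits on the leading blocks, there is no ``closer'' clean block to choose), so this length can be as small as $2q$, far below $(k+4)q$. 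Likewise the non-overlap offset drifts by up to $k$ at each peel, so the hypothesis $(a-c),(b-c)\in\DD((k+3)(q+1))$ also degrades and is not re-established. Your own closing remark — that ``propagating the various constants\dots carefully through the induction is the technical heart of the argument'' — is accurate, but this heart is exactly what is missing: the hypotheses are not preserved, so the induction as stated does not go through, and the invocation of \cref{fct:remove} and \cref{clm:fixed} for the ``base case'' is not enough to repair it.

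The paper avoids the recursion entirely. After finding a first exactly aligned copy of $Q$ via the clean $(k+4)q$-prefix, it walks through the alignment once, using the notion of a \emph{breakpoint locked fragment} (a locked fragment whose successor, in the merged starting-position order, comes from the other string). After each breakpoint locked fragment $L$, \cref{lem:after_locked} guarantees a $Q^{k+1}$ run in the same string immediately following $L$ (because the offset is $(k+3)(q+1)$-non-overlapping), so pigeonhole yields another exactly aligned copy of $Q$ \emph{before} the next locked material from the other string can appear. This produces, in one pass, decompositions $X=X_0QX_2Q\cdots QX_t$ and $Y=Y_0QY_2Q\cdots QY_t$ with $\ed(X,Y)=\sum_i\ed(X_i,Y_i)$ in which, for each even $i$, at most one of $X_i,Y_i$ contains locked material (hence the other is a power of $Q$). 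Since $X'=X_0X_2QX_4Q\cdots QX_{t-2}Q^2X_t$, one then bounds $\ed(X',Y)$ piecewise and removes the shifted copies of $Q$ with \cref{fct:remove}, giving $\ed(X',Y)\le\ed(X,Y)$ with no shrinking of margins and no induction. Adopting this single-pass, breakpoint-based partition is the fix your argument needs.
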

\begin{proof}
    We assume that $X \in \Frag(U)$ and $Y \in \Frag(V)$, i.e., that $\Delta=1$; the opposite case is analogous.
    Let us focus on the case when $X=_k Y$ and $\ed(X,Y) \leq \ed(X',Y)$; the other case is symmetric.
    Let us order the locked fragments in both $X$ and $Y$ with respect to their starting positions in those strings. We call a locked fragment a \emph{breakpoint locked fragment} if the subsequent locked fragment in the defined order originates from a different string.
    We next show that there exists a sequence $(X_0,Y_0), \cdots, (X_t,Y_t)$ of pairs of strings that satisfies the following:
    \begin{itemize}
    \item $X=X_0\cdots X_t$ and $Y=Y_0\cdots Y_t$,
    \item $\ed(X,Y) = \sum_{i=0}^t \ed(X_i,Y_i)$,
    \item $X_i = Y_i = Q$ for odd $i$,
    \item for each pair $(X_i,Y_i)$, only one of $X_i$ or $Y_i$ contains locked fragments (naturally inherited from $X$ and $Y$).
    \end{itemize}
    Let us greedily construct this sequence of pairs given an optimal alignment between $X$ and~$Y$.
    
    Recall that $X$ has a prefix of length $(k+4)q$ that contains no locked positions.
    At least one of the first $k+1$ implied copies of $Q$ must be aligned exactly with a copy of $Q$ in $Y$; we set the first such exactly aligned copies to be $X_1$ and $Y_1$.
    
    Then, we repeatedly consider the subsequent $j$-th breakpoint locked fragment (where~$j$  starts from $1$) $L=Z[z_1 \dd z_2)$, where $\{Z,W\} = \{X,Y\}$, if one exists.
    Since $(a-c) \in \DD((k+3)(q+1))$, \cref{lem:after_locked} implies
    $S_L := Z[z_2 \dd z_2 + (k+1)q) = Q^{k+1}.$
    
    Then, at least one of the $k+1$ copies of $Q$ in $S_L$ must be aligned exactly with a copy of~$Q$ in~$W$; we set these copies to be $X_{1+2j}$ and $Y_{1+2j}$.
    When there are no further breakpoints to be considered, we simply consider a final odd pair of copies of $Q$ that are aligned exactly, such that one of them is the last copy of $Q$ in $X$ that is aligned exactly---recall that $X$ has a suffix of length $(k+4)q$ that contains no locked positions.
    Finally, we ensure that $t$ is even by appending a pair of empty strings if necessary.

    We have $X=X_0QX_2Q \cdots X_{t-2}QX_t$ where $X_0$ and $X_t$ are substrings of $Q^\infty$, so $X'= X_0X_2QX_4Q\cdots QX_{t-2} Q^2X_t$. Therefore
    \begin{align*}
    \ed(X',Y) & \leq \ed(X_0, Y_0) + \ed(X_2Q, QY_2) + \ldots + \ed(X_{t-2}Q, QY_{t-2}) + \ed(QX_t, QY_t)\\
            & \stackrel{(\star)}{=} \ed(Y_0, X_0) + \ed(X_2, Y_2)  + \ldots + \ed(X_t, Y_t) \\
            &= \ed(X,Y),
    \end{align*}
    where equality $(\star)$ follows from the fact that, for each pair $(X_i,Y_i)$, one of $X_i$ and $Y_i$ is a power of $Q$, which allows us to apply \cref{fct:remove} to remove a pair of copies of $Q$ from $\ed(QX_t, QY_t)$ and from each pair of the form $(X_{2d}Q,QY_{2d})$. This concludes the proof of the lemma since we have assumed that $\ed(X,Y) \leq \ed(X',Y)$.
\end{proof}

Let $i_1 = \min\,  \CritPos(\D)$, $i_2 = \max\,  \CritPos(\D)$ as in \cref{algo2dec}.
The next lemma shows that in many cases, if $U[p \dd p']$ forms a $\q$-non-overlap app-match that is anchored at a critical position $i$ such that $i_1 < i < i_2$, then the same fragment or a fragment shifted by $q$ positions forms a $\q$-non-overlap app-match anchored at a critical position $i \pm q$.

We refer to \cref{def:anchored} for the meaning of 
\emph{$x$-anchored}.

\begin{lemma}\label{lem: non-overlap pointwise}
Let $V[j_1 \dd j_2]=Q^{k+1}$ be the sample, $\C=\CritPos(\D)$ where $\D \in \DD(\q/2)$, and $i \in \C$.
If $I=[p \dd p']$ and $U[I]$ is $x$-anchored at $i$, then for any $y\in \{q,-q\}$:
\vspace*{0.1cm}
\begin{enumerate}[(a)]
    \item\label{itA} If $U[I]$ and $U[I\oplus y]$ are locked-equivalent and $i+y\in \C$, then $U[I\oplus y]$ is $x$-anchored at $i+y$. \vspace*{0.1cm}
    \item\label{itB} If $V'=V^{(x)}$ and $\shift(V',y)$ are locked-equivalent and $i-y \in \C$, then $U[I]$ is $(x+y)$-anchored at $i-y$.
\end{enumerate}
\end{lemma}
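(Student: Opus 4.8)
The plan is to handle parts \eqref{itA} and \eqref{itB} by applying \cref{lem:greedy} to the two halves obtained by splitting the window $U[I]$ at the anchor~$i$. Recall that $U[I]$ being $x$-anchored at $i$ means $P$ was rotated so that the sample $V[j_1\dd j_2]$ sits at the position corresponding to $P[0\dd |V[j_1\dd j_2]|)$ (we performed exactly this rotation just before \cref{algo2dec}), and that $\ed(U[p\dd i), P[x\dd m)) + \ed(U[i\dd p'], P[0\dd x)) \le k$. Since $V^{(x)}$ is a rotation of $P$ and $V$ is a fragment of $P^2$, each of $P[x\dd m)$ and $P[0\dd x)$ equals a fragment of $V$; write $V^{(x)} = V[x\dd x+m)$ split at the position $j^\star$ inside $V$ that corresponds to the sample boundary. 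So the $x$-anchored condition decomposes as $\ed(U[p\dd i), V[x\dd j^\star)) + \ed(U[i\dd p'], V[j^\star \dd x+m)) \le k$, with each summand at most $k$ individually.

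For part~\eqref{itA}, I would take $X = U[I]$, $X' = U[I\oplus y] = \shift(X, y)$, and $Y = V^{(x)}$, intending to conclude $\ed(X',Y) = \ed(X,Y) \le k$, together with the fact that the anchor also shifts by~$y$. Concretely, apply \cref{lem:greedy} separately to the prefix pair $\bigl(U[p\dd i), V[x\dd j^\star)\bigr)$ and the suffix pair $\bigl(U[i\dd p'], V[j^\star\dd x+m)\bigr)$; in each case the role of $X$ is played by a fragment of $U$ (so $\Delta = 1$ in \cref{lem:greedy}), the shift is by $q$ (note $y\in\{q,-q\}$, so either $X' = \shift(X,q)$ or $X = \shift(X',q)$, and \cref{lem:greedy} is symmetric in $\{X,X'\}$), and the hypothesis $X =_k Y$ holds because each summand is $\le k$. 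The two offset-in-$\DD$ conditions required by \cref{lem:greedy} follow from $i, i+y \in \C = \CritPos(\D)$ with $\D\in\DD(\q/2) = \DD((k+6)(q+3))$: being critical positions forces the relevant offsets to lie in an interval of $\DD$ at level $(k+3)(q+1) < (k+6)(q+3)$ after accounting for the $\le k$ slack in the alignment (this is exactly the slack bookkeeping that motivated the constant $\q = 2(k+6)(q+3)$). The local locked-equivalence hypotheses needed by \cref{lem:greedy} for the prefix and suffix pieces are inherited from the global locked-equivalence of $U[I]$ and $U[I\oplus y]$ assumed in \eqref{itA}: shifting the whole window by $y$ leaves the set of locked positions it contains unchanged, and the sample was placed so that no locked positions lie within the critical-window prefixes/suffixes (\cref{obs:special_locked_disj}). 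Summing the two resulting equalities $\ed(\shift(U[p\dd i), y), V[x\dd j^\star)) = \ed(U[p\dd i), V[x\dd j^\star))$ and likewise for the suffix gives $\ed(U[I\oplus y], V^{(x)}) \le k$ realized with the anchor at $i+y$, which is precisely the claim that $U[I\oplus y]$ is $x$-anchored at $i+y$.

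For part~\eqref{itB}, the roles are reversed: we keep the window $U[I]$ fixed but shift the rotation index $x$ of $P$ by $y$, which corresponds to shifting the fragment $V^{(x)}$ of $V$ by $y$ and moving the anchor to $i-y$. Here the pieces to feed into \cref{lem:greedy} are $\bigl(U[p\dd i{-}y), \shift(V[x\dd j^\star), y)\bigr)$-type pairs, so now $X$ is a fragment of $V$ and $\Delta = -1$; the hypothesis "$\shift(V',y)$ and $V'$ are locked-equivalent" supplies the locked-equivalence condition, and $i, i-y\in\C$ supplies the $\DD$ conditions as before. I expect the main obstacle to be the careful verification that all four offset parameters ($(a-c)\Delta$ and $(b-c)\Delta$ for each of the two halves, in each of the two parts) genuinely land in an interval of $\DD((k+3)(q+1))$ given only $i, i\pm q\in\CritPos(\D)$ with $\D\in\DD(\q/2)$ — this is the place where the precise relationship between $\q$, the extension radius $k$ in $\scope$, and the $q$-spacing of critical positions must be made to fit, and where one has to be sure that shifting by $q$ does not push an offset out of its non-overlap interval. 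Once that arithmetic is pinned down, both parts follow by the same split-at-anchor-plus-\cref{lem:greedy} argument and summation.
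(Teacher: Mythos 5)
Your high-level plan — split the $x$-anchored condition at the anchor $i$ and apply \cref{lem:greedy} to each half, then sum — is exactly what the paper does, and for part~\eqref{itA} your application of \cref{lem:greedy} (with $\Delta=1$, shifting the $U$-side fragment by $q$ while keeping the $V$-side fixed) is correct and matches the paper essentially verbatim. The offset-in-$\DD$ bookkeeping you sketch is also the right calculation, up to the exact intermediate constants.

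There is, however, a genuine gap in your treatment of part~\eqref{itB}. You propose to feed \cref{lem:greedy} pairs of the form $\bigl(U[p\dd i-y),\ \shift(V[x\dd j_1),y)\bigr)$. Take $y=-q$: this is $\bigl(U[p\dd i+q),\ V[x-q\dd j_1-q)\bigr)$. But the condition ``$U[I]$ is $(x-q)$-anchored at $i+q$'' requires the first $V$-side piece to be $V[x-q\dd j_1)$, with the right endpoint still at $j_1$ (that endpoint is the fixed split of $P$ into prefix and suffix; it is not part of the window being shifted). Your fragment ends at $j_1-q$. A naive application of \cref{lem:greedy} to your pair compares $\ed(U[p\dd i+q),V[x-q\dd j_1-q))$ with $\ed(U[p\dd i+q),V[x\dd j_1))$, neither of which is the quantity you need, and neither of which equals the known $\ed(U[p\dd i),V[x\dd j_1))$. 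The paper closes this gap with an additional step you have not identified: before invoking \cref{lem:greedy}, it rebalances using \cref{fct:remove}, appending a copy of $Q$ to both halves of the first pair (and symmetrically chopping a prefix copy of $Q$ from the second pair), e.g.\ $\ed(U[p\dd i),V[x\dd j_1)) = \ed(U[p\dd i+q),V[x\dd j_1+q))$ since $U[i\dd i+q)=V[j_1\dd j_1+q)=Q$; only then does \cref{lem:greedy} yield the fragment $V[x-q\dd j_1)$ with the correct $j_1$ endpoint. This step also needs the side condition that the pieces being shortened have length at least $q$ (the paper verifies $p'-i\ge q$ and $x+m-j_1>q$), which you do not mention. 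Saying ``once that arithmetic is pinned down'' is not enough here: the rebalancing via \cref{fct:remove} is the key idea that makes part~\eqref{itB} go through, and it changes the structure of the argument rather than just its constants. A smaller point: the locked-equivalence of the half-pieces around the anchor is derived in the paper from \cref{lem:after_locked}, not directly from \cref{obs:special_locked_disj} as you suggest.
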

\begin{proof}
We give separate, though to a large extent similar, proofs of both points.

\proofsubparagraph{Point \eqref{itA}:} 
First let $y=-q$.
As $U'$ and $\shift(U',-q)$ are locked-equivalent, $U[p-q\dd p)$ and $U(p'-q\dd p']$ do not overlap any locked fragment in $U$.

We know that $i \in [j_1+d_1-k\dd j_1+d_2+k]$, where $\D=[d_1 \dd d_2] \in \DD(\q/2)$. 
Hence, $i-j_1 \in [d_1-k \dd d_2+k]$ is a $(k+3)(q+2)$-non-overlap offset. Therefore, $p-x$ is a $(k+3)(q+1)$-non-overlap offset, as $U[p\dd i)=_kV[x\dd j_1)$. Moreover, because $i-q-j_1$ is a $(k+3)(q+2)$-non-overlap offset, $p-q-x$ is a $(k+3)(q+1)$-non-overlap offset (as $|(p-x) - (i-j_1)| \le k$).

We have the following properties:
\begin{itemize}
    \item By \cref{lem:after_locked}, $U[p \dd i)$ ends with $Q^{k+5}$, so $U[p-q \dd i-q]$ ends with $Q^{k+4}$.
    \item Hence, $U[p \dd i)$ and $U[p-q \dd i-q)$ are locked-equivalent, as $U[p \dd p']$ and $U[p-q \dd p'-q]$ are locked-equivalent.
    \item Similarly we obtain that $U[i \dd p')$ and $U[i-q \dd p'-q)$ are locked-equivalent.
\end{itemize}

Consequently, we can apply \cref{lem:greedy} (with $\Delta=1$) to obtain the following:
\begin{align*}
\ed(U[p\dd i),V[x\dd j_1))&=\ed(U[p-q\dd i-q),V[x\dd j_1))\text{ and}\\
\ed(U[i\dd p'),V[j_1\dd x+m))&=\ed(U[i-q\dd p'-q),V[j_1\dd x+m)).
\end{align*}
Thus $U[p-q\dd p'-q]$ is $x$-anchored at $i-q$.
The proof that $U[p+q\dd p'+q]$ is $x$-anchored at $i+q$ is symmetric.

\proofsubparagraph{Point \eqref{itB}:} 
Let us again start with the case $y=-q$.

Let us denote $x'=x+m-1$.
By the assumption, $V[x-q\dd x)$ and $V(x'-q\dd x']$ do not overlap any locked fragment in $V$.

We know that $i-j_1$ is a $(k+3)(q+2)$-non-overlap offset. Hence, $p-x$ is a $(k+3)(q+1)$-non-overlap offset, as $U[p\dd i)=_k V[x\dd j_1)$. Moreover, because $i+q-j_1$ is a $(k+3)(q+2)$-non-overlap offset, $p+q-x$ is a $(k+3)(q+1)$-non-overlap offset.

We have the following properties:
\begin{itemize}
    \item Substring $V[x-q \dd j_1)$ ends with $Q^{k+4}$ and $V[x \dd j_1+q)$ ends with $Q^{k+5}$.
    \item Hence, $V[x \dd j_1+q)$ and $V[x-q \dd j_1)$ are locked-equivalent, as $V^{(x)}$ and $V^{(x-q)}$ are locked equivalent.
    \item Similarly we obtain that $V[j_1+q \dd x']$ and $V[j_1 \dd x'-q]$ are locked-equivalent.
\end{itemize}

Consequently, we can apply \cref{lem:greedy} (case $\Delta=-1$) to obtain the following. In the first equalities, we extend (shorten, respectively) the two substrings by a suffix (prefix, respectively) that is copy of $Q$ (cf.\ \cref{fct:remove}). Moreover, this does not change the edit distance of the two substrings in scope. This operation is possible because the length of the suffixes to be shortened, i.e., $p'-i$ and $x+m-j_1$, are at least $q$.
Indeed, we have $x+m-1-q \ge j_2+q(k+1)$ (the extended sample $Q^{3k+9}$ is necessarily a fragment of $V^{(x-q)}$) and $j_2 \ge j_1$, so $x+m-1-q \ge j_1 + k(q+1)$ and $x+m-j_1>q+k(q+1)$. Further, $p'-i \ge x+m-1-j_1-k>q+kq$.
\begin{align*}
\ed(U[p\dd i),V[x\dd j_1))&=\ed(U[p\dd i+q),V[x\dd j_1+q))\\
&=\ed(U[p\dd i+q),V[x-q\dd j_1))\text{ and}\\
\ed(U[i\dd p'),V[j_1\dd x'])&=\ed(U[i+q\dd p'),V[j_1+q\dd x'])\\
&=\ed(U[i+q\dd p'),V[j_1\dd x'-q]).
\end{align*}
Thus $U[p\dd p']$ is $(x-q)$-anchored at $i+q$.
The proof that $U[p\dd p']$ is $(x+q)$-anchored at $i-q$ is symmetric.
\end{proof}

\medskip
The sets $\anchored_k$ contain too little information for proving the correctness of the algorithm. It is important that for any of the $\Oh(k^2)$ intervals of  positions of app-matches $[p_l \dd p_r]$ returned by a call to $\anchored_k(P,U,i)$, there exist positions $[p'_l \dd p'_r]$ and values $[x_l \dd x_r]$ of cyclic rotations such that $U[p_l \dd p'_l]$ is $x_l$-anchored at $i$, $U[p_l+1 \dd p'_l+1]$ is $(x_l+1)$-anchored at $i$, etc.
Therefore we define
\[\anchored'_k(P,T,i)\,=\, \{\,(p,p',x)\,:\, T[p\dd p'] \ \mbox{is }x\mbox{-anchored at}\ i\,\}.\]
The notation lets us restate \cref{lem: non-overlap pointwise} equivalently as follows.

\begin{lemma}[Equivalent statement of \cref{lem: non-overlap pointwise}]\label{lem: non-overlap pointwise2}~\\
Let $V[j_1 \dd j_2]=Q^{k+1}$ be the sample and $C=\CritPos(\D)$ where $\D \in \DD(\q/2)$.
Assume that $(p,p',x) \in \anchored'_k(P,U,i)$ for some $i \in C$. For $y\in \{q,-q\}$ we have
\vspace*{0.2cm}
\begin{enumerate}[(a)]
    \item\label{itA2} If $U'=U[p \dd p']$ and $\shift(U',y)$ are locked-equivalent and $i+y\in C$,  then \[(p+y,p'+y,x) \in \anchored'_k(P,U,i+y).\]
    \item\label{itB2} If $V'=V^{(x)}$ and $\shift(V',y)$ are locked-equivalent and $i-y \in C$,
    then
\[(p,p',x+y)\in \anchored'_k(P,U,i-y).\]
\end{enumerate}
\end{lemma}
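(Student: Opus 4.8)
The plan is to observe that \cref{lem: non-overlap pointwise2} is just \cref{lem: non-overlap pointwise} transcribed into the $\anchored'_k$ notation, so the proof amounts to unfolding definitions and invoking the previous lemma. First I would record the dictionary: by definition $(p,p',x)\in\anchored'_k(P,U,i)$ holds iff $U[p\dd p']$ is $x$-anchored at $i$, i.e.\ $\ed(U[p\dd i),P[x\dd m))+\ed(U[i\dd p'],P[0\dd x))\le k$; and for $I=[p\dd p']$ we have $\shift(U',y)=U[p+y\dd p'+y]=U[I\oplus y]$ and $\shift(V^{(x)},y)=V^{(x+y)}$. Under this translation, the hypothesis ``$(p,p',x)\in\anchored'_k(P,U,i)$ for some $i\in C$'' becomes ``$I=[p\dd p']$ and $U[I]$ is $x$-anchored at $i$, with $i\in\CritPos(\D)$'', the two locked-equivalence hypotheses and the membership conditions $i\pm y\in C$ match verbatim, conclusion~(a) becomes ``$U[I\oplus y]$ is $x$-anchored at $i+y$'', and conclusion~(b) becomes ``$U[I]$ is $(x+y)$-anchored at $i-y$''. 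These are exactly the two conclusions of \cref{lem: non-overlap pointwise}, so the statement follows immediately from it.

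The one thing I would check before writing this down is that the proof of \cref{lem: non-overlap pointwise} actually establishes its conclusions at the level of a concrete fragment and rotation offset, not merely that $p$ lies in some $\anchored_k(P,U,\cdot)$ --- and it does: in its Point~(a) it exhibits the fragment $U[p+y\dd p'+y]$ and shows it is $x$-anchored at $i+y$, and in its Point~(b) it shows that the same fragment $U[p\dd p']$ is $(x+y)$-anchored at $i-y$. Hence passing to triples loses no information. I do not expect a real obstacle here: all the actual work --- and the only place where the $\Theta(kq)$-non-overlap structure and the locked-fragment machinery of \cref{lem:greedy} and \cref{lem:after_locked} enter --- has already been done in \cref{lem: non-overlap pointwise}. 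The only points needing attention are bookkeeping: that $\shift$ of a fragment coincides with the $\oplus y$ shift of its index interval, and that the cases $y=+q$ and $y=-q$ are handled symmetrically in both parts; these are routine.

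I would also note, for context, why the triple-level reformulation is worth stating separately: it is the form that gets iterated in the proof of \cref{lem:correctness}. Starting from a single triple $(p,p',x)\in\anchored'_k(P,U,i)$ produced by a $\q$-non-overlap app-match via \cref{lem:critical}, one repeatedly applies parts~(a) and~(b) to move the anchor $i$ in steps of $\pm q$ along the arithmetic progression $\CritPos(\D)$, $\D\in\DD(\q)$, until it reaches $\min\CritPos(\D)$ or $\max\CritPos(\D)$, the two positions probed by \cref{algo2dec}. Carrying the full triple $(p,p',x)$ rather than just the position $p$ is precisely what keeps this walk well-defined, since the locked-equivalence hypotheses of~(a) and~(b) refer to the occurrence window $U[p\dd p']$ and to the rotation offset $x$. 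Establishing that one of~(a),~(b) is always applicable at each step --- and that the walk does not run off the ends of $U$ or $V$ --- is, however, part of the proof of \cref{lem:correctness} and not of this lemma.
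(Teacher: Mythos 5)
Your proposal is correct and takes exactly the approach the paper intends: \cref{lem: non-overlap pointwise2} is stated in the paper as an ``Equivalent statement'' of \cref{lem: non-overlap pointwise} and is given no separate proof, the equivalence being a direct unfolding of the definition of $\anchored'_k$ and the identities $\shift(U[p\dd p'],y)=U[(p\dd p']\oplus y)]$ and $\shift(V^{(x)},y)=V^{(x+y)}$, exactly as you describe. Your extra check --- that the proof of \cref{lem: non-overlap pointwise} really does produce the specific shifted window $U[p+y\dd p'+y]$ in part~(a) and the same window $U[p\dd p']$ with rotation offset $x+y$ in part~(b), rather than mere membership of $p$ in some $\anchored_k$ set --- is the one substantive point and you verify it correctly.
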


For a triad $(I,J,L)$ of intervals of the same size, we denote the combined set of triples
\[\Triples(I,J,L)\;=\;\{(a+t, b+t,c+t)\,:\, 0\le t<|I|\},\ \mbox{where}\ (a,b,c)=
(\min(I),\,\min(J),\,\min(L)). \]
For example $\Triples([1 \dd 3],[5 \dd 7],[2 \dd 4])=\{(1,5,2),(2,6,3),(3,7,4)\}$.
(Treating $I,J,L$ as lists, this can be written in Python as set(zip($I,J,L$)).\,)
Just like \cref{lem: non-overlap pointwise} states a relation of single elements of the sets $\anchored_k$ for anchors at two consecutive critical positions, the next lemma shows what happens to intervals of positions in $\anchored_k$ (together with end-positions of app-matches and the rotations of $P$).

Denote by $\lcut_q(I),\rcut_q(I)$ the operations of removing from the interval $I$
its prefix/suffix of length $q$, possibly obtaining an empty interval. For example, $\lcut_2([2\dd 5])=[4\dd 5]$.

For every $p \in \anchored_k(P,U,i)$ that satisfies the assumption of \cref{lem: non-overlap pointwise}\eqref{itB} and $i_1 < i < i_2$, that lemma immediately shows that $p \in \anchored_k(P,U,i-q) \cap \anchored_k(P,U,i+q)$. Unfortunately, this 
assumption does not always hold. However, the following \cref{lem:left_right} shows that this is true for all but at most $q$ elements $p \in \anchored_k(P,U,i)$.

To prove \cref{lem:left_right}, roughly speaking, we compute a \emph{superposable partition} of intervals $I_1,I_2,I_3,I_3\oplus m,$
such that in each part, locked fragments can occur only in the parts originating from one of the strings $U$, $V$. As before, this is possible thanks to the fact that the offset is non-overlapping; here we use the fact that the definition of $t$-non-overlap offsets (\cref{def:ov_off}) covers the cases $(\Delta' \pm m) \oplus [-t \dd t]$. Finally, we apply the appropriate point of \cref{lem: non-overlap pointwise} to positions in each part in bulk.

\begin{lemma}\label{lem:left_right}
Let $\D \in \DD(\q), i_1=\min \CritPos(\D)$, $i_2=\max \CritPos(\D)$. Assume that for some $i \in \CritPos(\D)$ such that $i \ne i_1,i_2$, we have
$\Triples(I_1,I_2,I_3)\subseteq \anchored'_k(P,U,i)$,
where $|I_1|=|I_2|=|I_3|\ge q$.
Then:
\begin{align*}
\Triples(\lcut_q(I_1),\lcut_q(I_2),\rcut_q(I_3)) &\subseteq \anchored'_k(P,U,i+q),\\
\Triples(\rcut_q(I_1),\rcut_q(I_2),\lcut_q(I_3) &\subseteq \anchored'_k(P,U,i-q).
\end{align*}
\end{lemma}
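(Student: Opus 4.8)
Write $(a,b,c)=(\min I_1,\min I_2,\min I_3)$, so that the hypothesis reads $\tau_t:=(a+t,b+t,c+t)\in\anchored'_k(P,U,i)$ for all $t\in[0\dd|I_1|)$. Unfolding $\Triples$, $\lcut_q$ and $\rcut_q$, the two inclusions to be proved are equivalent to: for every $t\in[0\dd|I_1|-q)$,
\[(a+q+t,\ b+q+t,\ c+t)\in\anchored'_k(P,U,i+q)\quad\text{and}\quad(a+t,\ b+t,\ c+q+t)\in\anchored'_k(P,U,i-q).\]
For such a $t$ both $\tau_t$ and $\tau_{t+q}$ are among the hypotheses, and inspecting the statement of \cref{lem: non-overlap pointwise2} one sees that the first triple is produced either from $\tau_t$ via~\eqref{itA2} with $y=q$ or from $\tau_{t+q}$ via~\eqref{itB2} with $y=-q$, and the second either from $\tau_{t+q}$ via~\eqref{itA2} with $y=-q$ or from $\tau_t$ via~\eqref{itB2} with $y=q$. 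Both applications of~\eqref{itA2} need $U[a+t\dd b+t]$ and $U[a+t+q\dd b+t+q]$ to be locked-equivalent, both applications of~\eqref{itB2} need $V^{(c+t)}$ and $V^{(c+t+q)}$ to be locked-equivalent, and all four need $i\pm q$ to be critical positions. So the plan is to establish: (i) $i-q,i+q\in\CritPos(\D)$; and (ii) for every $t\in[0\dd|I_1|-q)$ at least one of the two displayed locked-equivalences holds. Then for each $t$ one takes the corresponding route and lets $t$ range.

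For (i): by \cref{obs:special_locked_disj} no position of $\scope(\D)$ lies in a locked fragment of $U$, hence $U[\scope(\D)]$ is a fragment of $Q^\infty$ by \cref{lem:locked_deomcp}(a); since $Q$ is primitive, the starting positions of the occurrences of $Q^{k+1}$ in $U[\scope(\D)]$ form a gapless arithmetic progression with difference $q$, and as $i$ lies strictly between $i_1=\min\CritPos(\D)$ and $i_2=\max\CritPos(\D)$, we get $i-q,i+q\in\CritPos(\D)$. To invoke \cref{lem: non-overlap pointwise2}, which is phrased for $\DD(\q/2)$, I would pass to the interval $\D'\in\DD(\q/2)$ containing $\D$ (it exists because every $\q$-non-overlap offset is $\q/2$-non-overlap); then $\scope(\D)\subseteq\scope(\D')$ gives $i,i\pm q\in\CritPos(\D)\subseteq\CritPos(\D')$. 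I would also record the elementary fact that $U[a+t\dd b+t]$ and $U[a+t+q\dd b+t+q]$ are locked-equivalent precisely when $\locked(U)$ avoids $[a+t\dd a+t+(k+5)q)$ and $(b+t-(k+4)q\dd b+t+q]$ — i.e.\ no locked position of $U$ lies within $\Theta(kq)$ of an endpoint of the window — and symmetrically for $V^{(c+t)}$, $V^{(c+t+q)}$ relative to $c+t$ and $c+t+m$.

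Statement (ii) is the heart of the argument. Fix $t$, suppose both locked-equivalences fail; then there is $u\in\locked(U)$ within $\Theta(kq)$ of $a+t$ or of $b+t$ and $v\in\locked(V)$ within $\Theta(kq)$ of $c+t$ or of $c+t+m$, and I would derive a contradiction with $\D\in\DD(\q)$. Because $\tau_t$ is $(c+t)$-anchored at $i$ we have $U[a+t\dd i)=_k V[c+t\dd j_1)$ and $U[i\dd b+t]=_k V[j_1\dd c+t+m)$; moreover $i$, being the start of an occurrence of $Q^{k+1}$ inside $\scope(\D)=[j_1+d_1-k\dd j_2+d_2+k]$ with $j_2=j_1+(k+1)q-1$, satisfies $i\in[j_1+d_1-k\dd j_1+d_2+k]$, where $\D=[d_1\dd d_2]$. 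Combining these yields $a-c\in\Ext_{2k}(\D)$ and then $b-c-m\in\Ext_{3k+1}(\D)$, using $|U[a+t\dd b+t]|\in[m-k\dd m+k]$. Consequently, whichever of $a+t,b+t$ the position $u$ is near and whichever of $c+t,c+t+m$ the position $v$ is near, $u-v$ lies within $2(k+5)q+\Theta(k)$ of one of $a-c,\,b-c,\,a-c-m,\,b-c-m$; taking the matching shifts $e_1,e_2\in\{-m,0,m\}$ in \cref{def:ov_off} and using $2(k+5)q+\Theta(k)<\q=2(k+6)(q+3)$, the nearest offset of $\D$ turns out to be a $\q$-overlap offset, contradicting $\D\in\DD(\q)$. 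I expect this four-way check — confirming that each way in which a boundary of the window in $U$ can meet a boundary of the window in $V$, including the crossed cases handled by the $\pm m$ shifts in \cref{def:ov_off}, stays below the $\q$ margin — to be the main obstacle; the rest is mechanical. Equivalently, one may organize (ii) as a superposable partition of $I_1,I_2,I_3,I_3\oplus m$ into $\Oh(k)$ parts, one per locked fragment, inside each of which all nearby locked fragments come from $U$ only or from $V$ only, and then apply the appropriate point of \cref{lem: non-overlap pointwise2} in bulk over each part.

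Finally, with (i) and (ii) in hand: for each $t\in[0\dd|I_1|-q)$, if $U[a+t\dd b+t]$ and $U[a+t+q\dd b+t+q]$ are locked-equivalent I apply \cref{lem: non-overlap pointwise2}\eqref{itA2} to $\tau_t$ with $y=q$ and to $\tau_{t+q}$ with $y=-q$, obtaining the two required triples; otherwise $V^{(c+t)}$ and $V^{(c+t+q)}$ are locked-equivalent by (ii), and I apply \cref{lem: non-overlap pointwise2}\eqref{itB2} to $\tau_{t+q}$ with $y=-q$ and to $\tau_t$ with $y=q$. All applications are legitimate since $i\pm q\in\CritPos(\D')$ and the relevant locked-equivalence holds. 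Letting $t$ range over $[0\dd|I_1|-q)$ then yields exactly $\Triples(\lcut_q(I_1),\lcut_q(I_2),\rcut_q(I_3))\subseteq\anchored'_k(P,U,i+q)$ and $\Triples(\rcut_q(I_1),\rcut_q(I_2),\lcut_q(I_3))\subseteq\anchored'_k(P,U,i-q)$.
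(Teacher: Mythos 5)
Your proof is correct and rests on exactly the same two pillars as the paper's: \cref{lem: non-overlap pointwise2} and the observation that a $\q$-non-overlap offset forces, for each position of the sliding window, the locked fragments near the window boundaries (in both copies, including the copy shifted by $m$, which is where the $\oplus\{-m,0,m\}$ shifts in \cref{def:ov_off} come in) to come from only one of $U$ or $V$. The difference is purely organizational. The paper makes the dichotomy global: it explicitly constructs a superposable partition of the three intervals (and $I_3\oplus m$) into $\Oh(k)$ segments with sentinel and breakpoint indices, inside each of which one side is entirely locked-free, and it then applies the pointwise lemma in bulk per segment; your own closing remark sketches exactly this alternative. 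You instead argue per index $t$: your step (ii) shows, by contradiction with $\D\in\DD(\q)$, that for every $t$ at least one of the two locked-equivalences holds, and you then pick whichever route is available. Both are sound and cover the same four boundary cases ($u$ near $a+t$ or $b+t$, $v$ near $c+t$ or $c+t+m$) with the same threshold arithmetic $2(k+5)q+\Theta(k)<\q$. One small point in your favour: you make explicit the passage from $\D\in\DD(\q)$ to a containing $\D'\in\DD(\q/2)$, together with $\CritPos(\D)\subseteq\CritPos(\D')$ and the AP structure of $\CritPos(\D)$ that gives $i\pm q\in\CritPos(\D)$, so that the hypotheses of \cref{lem: non-overlap pointwise2} are formally satisfied; the paper leaves that bookkeeping implicit.
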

\begin{proof}
Let us denote $x'_l=x_l+m-1$ and $x'_r=x_r+m-1$.
We select indices \[p_1,\ldots,p_d=p_r,\, p'_1,\ldots,p'_d=p'_r,\, x_1,\ldots,x_d=x_r,\,
x'_1,\ldots,x'_d,\ \mbox{where}\ x'_i=x_i+m-1\]
and sentinel indices 
$p_0=p_l+q,\, p'_0=p'_l+q,\, x_0=x_l+q,\, x'_0=x'_l+q$ such that:
\begin{enumerate}[(1)]
\item The indices are equally spaced within the intervals: $p_a-p_l=p'_a-p'_l=x_a-x_l$ for each $a \in [1 \dd d]$.
\item For each $a \in [1 \dd d-1]$, the substrings $U[p_a-q \dd p_a)$, $U[p'_a-q \dd p'_a)$, $V[x_a-q \dd x_a)$, $V[x'_a-q \dd x'_a)$ contain no positions from locked fragments.
\item\label{it:lock_equiv} For each $a \in [1 \dd d]$:
\begin{itemize}
    \item either none of the substrings $U[p_{a-1}-q \dd p_a]$, $U[p'_{a-1}-q \dd p'_a]$ contains a position from a locked fragment and each of the substrings $V[x_{a-1}-q \dd x_a]$, $V[x'_{a-1}-q \dd x'_a]$ contains no locked positions at its prefix and suffix of length $q(k+4)$,
    \item or none of the substrings $V[x_{a-1}-q \dd x_a]$, $V[x'_{a-1}-q \dd x'_a]$ contains a position from a locked fragment and each of the substrings $U[p_{a-1}-q \dd p_a]$, $U[p'_{a-1}-q \dd p'_a]$ contains no locked positions at its prefix and suffix of length $q(k+4)$.
\end{itemize}
\end{enumerate}

Such indices can always be selected thanks to the fact that $i-j_1 \in \DD(\q)$ (cf.\ \cref{lem:after_locked}).
Let us consider each $a \in [1 \dd d]$. Assume first that none of the substrings $U[p_{a-1}-q \dd p_a]$, $U[p'_{a-1}-q \dd p'_a]$ contains a position from a locked fragment. We know that 
\[\Triples([p_{a-1}-q\dd p_a-q],[p'_{a-1}-q\dd p'_a-q],[x_{a-1}-q\dd x_a-q]) \subseteq \anchored'_k(P,U,i).\]
\noindent Hence, by \cref{lem: non-overlap pointwise2}\eqref{itA2},
\begin{equation}\label{eq1}
\Triples([p_{a-1}\dd p_a],[p'_{a-1}\dd p'_a],[x_{a-1}-q\dd x_a-q]) \subseteq \anchored'_k(P,U,i+q).
\end{equation}
Let us note that \cref{lem: non-overlap pointwise2} can be applied (here and below in the proof) thanks to long fragments without locked positions that are guaranteed by point \eqref{it:lock_equiv}.

We also know that
$\Triples([p_{a-1}\dd p_a],[p'_{a-1}\dd p'_a],[x_{a-1}\dd x_a]) \subseteq \anchored'_k(P,U,i).$
\noindent Hence, by \cref{lem: non-overlap pointwise2}\eqref{itA2},
\begin{equation}\label{eq2}
\Triples([p_{a-1}-q\dd p_a-q],[p'_{a-1}-q\dd p'_a-q],[x_{a-1}\dd x_a]) \subseteq \anchored'_k(P,U,i-q).
\end{equation}

Assume now that none of fragments  $V[x_{a-1}-q \dd x_a]$, $V[x'_{a-1}-q \dd x'_a]$ contains a locked position. We know that 
$\Triples([p_{a-1}\dd p_a],[p'_{a-1}\dd p'_a],[x_{a-1}\dd x_a]) \subseteq \anchored'_k(P,U,i).$
\noindent By \cref{lem: non-overlap pointwise2}\eqref{itB2}, we obtain \eqref{eq1}.
We also know that
\[\Triples([p_{a-1}-q\dd p_a-q],[p'_{a-1}-q\dd p'_a-q],[x_{a-1}-q\dd x_a-q]) \subseteq \anchored'_k(P,U,i).\]
By \cref{lem: non-overlap pointwise2}\eqref{itB2}, we obtain \eqref{eq2}.
Taking a union over all $a \in [1 \dd d]$, we obtain the conclusion since $p_0$ was defined as $p_l+q$, and $p'_0$ as $p'_l+q$ and $x_0$ as $x_l+q$.
\end{proof}

We show that $\anchored'_k(P,T,i)$ can be represented by a set $\R_k(P,U,i)$ of triads (the set consists of $\Oh(k^2)$ triads) in the sense that 
\[\anchored'_k(P,T,i)\ =\, \bigcup\,\{\,\Triples(I,J,L)\,:\, (I,J,L)\in \R_k(P,T,i)\}.\]
The triad notation will be important in the proof of \cref{lem:correctness}.
The next fact readily follows from the construction of the set $\anchored_k$ (the proof of \cref{{lem:report-anchored}} presented in \cite{ESA22}).

\begin{fact}\label{fct:max}
For any index $i$, the set $\anchored'_k(P,U,i)$ is represented as a collection $\R_k(P,U,i)$ of triads, such that
for any triad $([p_l\dd p_r],[p'_l\dd p'_r],[x_l\dd x_r])$, we have:
\vspace*{1mm}\noindent
\begin{itemize}
\item $p_l=0$ or $x_l=0$ or $U[p_l-1] \ne V[x_l-1]$
\item $p_r'=|U|-1$ or $x_r+m=|V|$ or $U[p_r'+1] \ne V[x_r+m]$.
\end{itemize}
\vspace*{1mm}\noindent
In particular, if $p_l,x_l>0$, then $p_l-1$ is in $\locked(U)$ or $x_l-1$ is in $\locked(V)$. Symmetrically, if $p'_r+1<|U|$ and $x_r+m<|V|$, then $p'_r+1$ is in $\locked(U)$ or $x_r+m$ is in $\locked(V)$. 
\end{fact}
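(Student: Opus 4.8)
The plan is to revisit the two‑sided Landau--Vishkin algorithm behind \cref{lem:report-anchored} (as presented in~\cite{ESA22}, see also~\cite{DBLP:journals/corr/abs-2208-08915}) and observe that every triad it outputs is delimited exactly by the stopping points of its kangaroo jumps. Recall that, after the rotation of $P$ fixed in the proofs of \cref{lem:red,lem:correctness}, the anchor $i$ in $U$ plays the role of the sample position $j_1$ in $V$, and $U[p\dd p']$ is $x$‑anchored at $i$ precisely when $\ed(U[p\dd i),V[x\dd j_1))+\ed(U[i\dd p'],V[j_1\dd x+m))\le k$. Accordingly the algorithm performs a \emph{left search}, which for each diagonal $\delta$ and each budget $a\le k$ computes via iterated $\LCP_R$ (kangaroo) queries the furthest‑reaching pair $(p,x)$ with $p-x=\delta$ and $\ed(U[p\dd i),V[x\dd j_1))\le a$, and a symmetric \emph{right search} for $\ed(U[i\dd p'],V[j_1\dd x+m))\le b$ using $\LCP$ queries. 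The representation $\R_k(P,U,i)$ is then formed by pairing each left result of budget $a$ with each right result of budget $b$ whenever $a+b\le k$; since along a fixed pair of left/right diagonals the three coordinates $p,p',x$ move in lockstep, each pairing is recorded as a triad $([p_l\dd p_r],[p'_l\dd p'_r],[x_l\dd x_r])$, and $\anchored'_k(P,U,i)=\bigcup\{\Triples(I,J,L):(I,J,L)\in\R_k(P,U,i)\}$.

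With this description the two displayed bullets are immediate. The smallest element $(p_l,p'_l,x_l)$ of a triad is a furthest‑reaching cell of the left search: one cannot match one more letter to the left without exceeding the budget, so either the extension hit a string end ($p_l=0$ or $x_l=0$) or $U[p_l-1]\ne V[x_l-1]$. Symmetrically, the largest element $(p_r,p'_r,x_r)$ is a furthest‑reaching cell of the right search, whence $p'_r=|U|-1$, or $x_r+m=|V|$, or $U[p'_r+1]\ne V[x_r+m]$.

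For the ``in particular'' clause we bring in the \EPSM hypotheses. Away from the $\Oh(k)$ locked fragments, $U$ is an exact substring of $Q^\infty$ and $V$ is an exact substring of $(Q')^\infty=(\rot^r(Q))^\infty$. Suppose $p_l,x_l>0$ but neither $p_l-1\in\locked(U)$ nor $x_l-1\in\locked(V)$. The left search reaches $(p_l,x_l)$ through a matched run, so a common factor of length at least $q$ starting at $p_l$ in $U$ and at $x_l$ in $V$ lies inside exact $Q$‑runs; comparing full periods and using that the rotations of the primitive string $Q$ are pairwise distinct, this forces the $Q$‑phase at $p_l$ in $U$ and the $Q'$‑phase at $x_l$ in $V$ to be aligned, hence $U[p_l-1]=V[x_l-1]$, contradicting the first bullet. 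Therefore $p_l-1\in\locked(U)$ or $x_l-1\in\locked(V)$; the large‑end statement is symmetric.

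The routine parts are the kangaroo‑jump bookkeeping and the lockstep grouping into triads; the delicate point is the ``in particular'' clause. Making it rigorous requires (i) checking that the matched stretch reaching a triad endpoint really is long and $Q$‑periodic enough to compare full copies of $Q$ (so that primitivity of $Q$ can be invoked to pin down the phases), and (ii) handling the degenerate case where that matched run is empty, so the endpoint is produced by an edit operation rather than an $\LCP/\LCP_R$ extension; there $U[p_l-1]\ne V[x_l-1]$ still holds by the furthest‑reaching property, but the identity $U[p_l]=V[x_l]$ used above must be replaced by a direct appeal to the optimality of the underlying Landau--Vishkin alignment.
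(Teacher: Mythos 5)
The paper does not prove this Fact; it cites it as a consequence of the construction in \cite{ESA22}, so a direct comparison of proofs is not possible. Your reconstruction of the two displayed bullets from the kangaroo-jump/\textsf{LCP}-stopping structure of the two-sided Landau--Vishkin search is plausible and is the right kind of argument. The weak point is the ``in particular'' clause, and the gap there is not merely a matter of rigor; it is the crux of the claim.

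Concretely, your argument for that clause asserts that ``the left search reaches $(p_l,x_l)$ through a matched run, so a common factor of length at least $q$ starting at $p_l$ in $U$ and at $x_l$ in $V$ lies inside exact $Q$-runs.'' Neither half of this sentence follows. A kangaroo jump that stops at $(p_l,x_l)$ can be of any length, including zero (your degenerate case (ii)), so a common factor of length $\ge q$ is simply not guaranteed; and even when the jump is long, it need not lie entirely in unlocked territory. Without a factor of length $\ge q$ sitting inside exact $Q$-powers on both sides, the primitivity of $Q$ cannot be invoked to synchronize the two phases, and so one cannot conclude $U[p_l-1]=V[x_l-1]$ from ``both positions unlocked.'' Note that the two gaps in the locked decomposition of \cref{lem:locked_deomcp} that contain $p_l-1$ in $U$ and $x_l-1$ in $V$ each start at $Q[0]$, but at positions whose residues modulo $q$ depend on the cumulative lengths of the intervening locked fragments in each string; there is no a priori reason these residues are synchronized, so the phase-alignment you want is precisely what must be proved, not assumed. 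Your closing remark that one should ``directly appeal to the optimality of the underlying Landau--Vishkin alignment'' gestures at the right idea but does not supply it: the actual argument would have to propagate phase information along the whole alignment from $(i,j_1)$ down to $(p_l,x_l)$, accounting for how each edit and each traversal of a locked fragment can and cannot shift the $Q$-phase, rather than reading the phase off a single long matched block at the endpoint. As written, the proposal therefore has a genuine hole exactly where the Fact is nontrivial.

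A secondary issue, not acknowledged in the proposal, concerns the ``lockstep pairing'' of left and right searches. When one intersects a left-search interval with a right-search interval on a fixed pair of diagonals, the resulting interval's left endpoint could in principle be dictated by the right-search constraint (or a trivial length constraint on the right piece) rather than by a left \textsf{LCP} stopping point, in which case the first bullet would not be automatic. The construction of \cite{ESA22} presumably rules this out, but your description of $\R_k(P,U,i)$ as a blunt pairing of budgets $a$ and $b$ does not, so the two bullets are also asserted rather than derived.
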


We say that a triad $([p_l\dd p_r],[p'_l\dd p'_r],[x_l\dd x_r]) \in \R_k(P,U,i)$ is \emph{left-$U$-locked} if $p_l=0$ or any of the positions $p_l-1$, $p'_l-1$ is in a locked fragment. Let us note that in the case that $p_l=0$, $p_l \in \locked(U)$ by definition.

Similarly, a triad $([p_l\dd p_r],[p'_l\dd p'_r],[x_l\dd x_r]) \in \R_k(P,U,i)$ is \emph{right-$U$-locked} if $p'_r=|U|-1$ or any of the positions $p_r+1$, $p'_r+1$ is in a locked fragment.

Symmetrically, a triad is called left-$V$-locked if $x_l=0$ or any of the positions $x_l-1$, $x'_l-1$ is in a locked fragment and right-$V$-locked if $x'_r=|V|-1$ or any of the positions $x_r+1$, $x'_r+1$ is in a locked fragment.

By \cref{lem:left_right}, if $I \subseteq \anchored_k(P,U,i)$ for an interval $I$, then we have
$\lcut_q(I) \subseteq\anchored_k(P,U,i+q)$ and $\rcut_q(I)  \subseteq \anchored_k(P,U,i-q)$.
In the proof of \cref{lem:correctness}, we use \cref{lem: non-overlap pointwise} on positions in the first and last $q$ positions of $I$ to show that one of the following conditions hold:
\[(\star)\ I \subseteq \anchored_k(P,U,i\pm q)\ 
\text{or} \ (\star\star)\ I \ominus q \subseteq \anchored_k(P,U,i-q).\]
In case $(\star)$, by induction we show that $I \subseteq \anchored_k(P,U,i_1) \cup \anchored_k(P,U,i_2)$. In case $(\star\star)$, we show by induction that $J:=I \oplus (i_1-i) \subseteq \anchored_k(P,U,i_1)$. 
We are now ready to prove correctness of \cref{algo2dec}.

\begin{proof}[Proof of \cref{lem:correctness}]
The complexity of \cref{algo2dec} directly follows from \cref{lem:report-anchored} (computing $\anchored_k$), \cref{lem:locked_deomcp} (computing decompositions into locked fragments) and \cref{lem:nov-off} (computing $\DD(\q)$). 

By \cref{lem:critical}, for a given interval of offsets $\D \in \DD(\q)$, the desired result is
$\bigcup_{i\in \D} \anchored_k(P,U,i)$.

We need to show that this result can be reconstructed from $\anchored_k(P,U,i_1)$ and $\anchored_k(P,U,i_2)$, where $i_1=\min \D,\ i_2=\max \D$.

Let $(I,J,L)=([p_l\dd p_r],[p'_l\dd p'_r],[x_l\dd x_r]) \in \R_k(P,U,i)$ for some $i \in \CritPos(\D)$, $i \ne i_1,i_2$. Let us denote $x'_l=x_l+m-1$ and $x'_r=x_r+m-1$. 

Now we consider several cases on $p_l,x_l,p'_r,x'_r$ as listed in \cref{fct:max}.

\proofsubparagraph{Case 1.} If $(I,J,L)$ is left-$U$-locked, then by \cref{lem:after_locked}, the substrings $V[x_l\dd x_l+(k+5)q)$, $V(x'_l-(k+5)q \dd x'_l+q]$ exist and none of them contains a position from a locked fragment (we consider a non-overlap offset). By \cref{lem: non-overlap pointwise2}\eqref{itB2}, $[p_l\dd p_l+q)\cap[p_l\dd p_r] \subseteq \anchored_k(P,U,i+q)$. By \cref{lem:left_right}, $[p_l+q \dd p_r] \subseteq \anchored_k(P,U,i+q)$. Thus $[p_l\dd p_r] \subseteq \anchored_k(P,U,i+q)$. By induction on $i$, $[p_l\dd p_r] \subseteq \anchored_k(P,U,i_2)$.
(We can use induction as after each step $i \rightarrow i+q$, we are obviously still in Case 1.)

\proofsubparagraph{Case 2.}  Similarly, if $(I,J,L)$ is right-$U$-locked, then by \cref{lem:after_locked}, the substrings $V[x_r-q \dd x_r+(k+5)q)$, $V(x'_r-(k+5)q \dd x'_r]$ exist and none of them contains a position from a locked fragment. By \cref{lem: non-overlap pointwise2}\eqref{itB2}, 
$(p_r-q\dd p_r] \cap[p_l\dd p_r]\subseteq \anchored_k(P,U,i-q)$.
By \cref{lem:left_right}, $[p_l \dd p_r-q] \subseteq \anchored_k(P,U,i-q)$.
Thus $[p_l\dd p_r] \subseteq \anchored_k(P,U,i-q)$. By induction on $i$ (decreasingly), $[p_l\dd p_r] \subseteq \anchored_k(P,U,i_1)$.

\proofsubparagraph{Case 3.} By \cref{fct:max} it is enough to consider now the case that $(I,J,L)$ is simultaneously left-$V$-locked and right-$V$-locked. Similarly to the above, (by \cref{lem:after_locked}) none of the four substrings 
\begin{align*}
    U(p_l-(k+5)q \dd p_l+q],\,&\, U(p'_l-(k+5)q \dd p'_l+q],\\
    U[p_r-q \dd p_r+(k+5)q),\,&\, U[p'_r-q \dd p'_r+(k+5)q)
\end{align*} 
contains a position from a locked fragment.
By \cref{lem: non-overlap pointwise2}\eqref{itA2}, we have
\begin{align*}
 \Triples((p_r+q-\delta\dd p_r+q], (p_r'+q-\delta\dd p_r'+q], (x_r-\delta\dd x_r]) &\subseteq \anchored'_k(P,U,i+q),\\
 \Triples([p_l-q\dd p_l-q+\delta),[p_l'-q\dd p_l'-q+\delta),[x_l\dd x_l+\delta)) &\subseteq \anchored'_k(P,U,i-q).
\end{align*}
where $\delta=\min(q,p_r-p_l+1)$.
Together with \cref{lem:left_right}, we obtain
\begin{align*}
\hspace*{0.3mm} \Triples([p_l+q\dd p_r+q],[p'_l+q\dd p'_r+q],[x_l\dd x_r]) &\subseteq \anchored'_k(P,U,i+q)\text{ and}\\
\Triples([p_l-q\dd p_r-q],[p'_l-q\dd p'_r-q],[x_l\dd x_r]) &\subseteq \anchored'_k(P,U,i-q).
\end{align*}

\noindent   By induction on $i$ (decreasing),
  \[\Triples([p_l+i_1-i\dd p_r+i_1-i],[p'_l+i_1-i\dd p'_r+i_1-i],[x_l\dd x_r]) \subseteq \anchored_k'(P,U,i_1).\]
  Hence, $\anyanchored'_k(P,U,i_1) \ne \mathit{none}$.
\end{proof}

\begin{proof}[Proof of \cref{thm:main_pillar}, decision version]
If $\lambda_k \le \frac{m}{2}$, \cref{lem:correctness} and \cref{corr:ov_dec} cover the decision version of \EPSM for $\q$-non-overlap offsets and $\q$-overlap offsets, respectively. Together with \cref{lem:usemesomewhere} used for the corner case that $\lambda_k > \frac{m}{2}$, they yield a solution to a decision version of \EPSM. The decision version from \cref{thm:main_pillar} is obtained through the reduction to \EPSM of \cref{lem:red}, as the time complexities of all the algorithms in the \pillar model are $\Oh(k^5 \log^3 k)$.
\end{proof}

\subsection{Reporting Version}\label{sec:algo2rep}
\cref{algo2} is a reporting version of \cref{algo2dec}.
\cref{algo2} outputs all $\q$-non-overlap app-matches as a collection of $\Oh(k^4)$ interval chains (some of which can be single intervals).

\begin{algorithm}[h!]

\vspace*{0.2cm}
\ForEach{interval of offsets $\D \in \DD(\q)$}
{

\smallskip
$i_1 := \min\,  \CritPos(\D)$;\ \  
$i_2 := \max\,  \CritPos(\D)$\;

\smallskip 
$\ZZ_1 :=  \anchored_k(P,U,i_1)$;

\smallskip $\ZZ_2 :=  \anchored_k(P,U,i_2)$\;

\smallskip
{\bf report} $\ZZ_1\cup \ZZ_2$\;

\smallskip
  \ForEach{interval $I=[p_l\dd p_r]$ \KwSty{in} $\ZZ_1$, with $p_l>0$ and $p_r+m+k\le |U|$\smallskip}{
    \If{$(\{p_l-1\}\cup I) \cap \locked(U)=\emptyset$\smallskip}{
        {\bf report} $\Chain(I,\,(i_2-i_1)/q,\,q)$\;
    }
  }

}
\caption{\bf Non-overlap case: reporting version}\label{algo2}
\vspace*{0.2cm}
\end{algorithm}

\begin{lemma}\label{lem:correctness2}
Assume that $\lambda_k \le \frac{m}{2}$.
\cref{algo2} works in $\Oh(k^4 \log \log k)$ time in the \pillar model and returns $\Oh(k^4)$ interval chains. For each $\q$-non-overlap app-match $(p,x)$, position $p$ is reported in one of the chains; moreover, only starting positions of circular $k$-edit occurrences of $P$ in $U$ are reported.
\end{lemma}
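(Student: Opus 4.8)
The statement has three components: the $\Oh(k^4\log\log k)$ running time together with the $\Oh(k^4)$ bound on the number of output interval chains, soundness (every reported position is a starting position of a circular $k$-edit occurrence of $P$ in $U$), and completeness (for every $\q$-non-overlap app-match $(p,x)$ the position $p$ is reported). For the complexity and output size I would first charge the one-time work: the locked-fragment decompositions of $U$ and $V$ together with the sample cost $\Oh(k^2)$ \pillar operations (\cref{lem:locked_deomcp}), while $\DD(\q)$ costs $\Oh(k^2\log\log k)$ operations and consists of $\Oh(k^2)$ intervals (\cref{lem:nov-off}). Inside the loop, fix $\D\in\DD(\q)$. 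Since $\D\in\DD(\q)$, \cref{obs:special_locked_disj} gives that $U[\scope(\D)]$ avoids every locked fragment, hence it is a substring of $Q^\infty$; consequently $\CritPos(\D)=\Occ_0(Q^{k+1},U[\scope(\D)])$ is an arithmetic progression with difference $q$, so its extremes $i_1,i_2$ are computable with $\Oh(1)$ \pillar operations (align $U[\scope(\D)]$ with $Q$ via $\LCP$). The two calls $\anchored_k(P,U,i_1)$ and $\anchored_k(P,U,i_2)$ cost $\Oh(k^2)$ and return $\Oh(k^2)$ intervals each (\cref{lem:report-anchored}); testing the guard for each of the $\Oh(k^2)$ intervals of $\ZZ_1$ against the $\Oh(k)$ locked fragments of $U$ (merging sorted lists, with one integer sort) costs $\Oh(k^2\log\log k)$. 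Summing over the $\Oh(k^2)$ choices of $\D$ yields $\Oh(k^4\log\log k)$ total. Each $\D$ contributes the $\Oh(k^2)$ intervals of $\ZZ_1\cup\ZZ_2$ (as length-$1$ chains) and one chain per interval of $\ZZ_1$, so $\Oh(k^4)$ interval chains overall.

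For soundness, the positions of $\ZZ_1\cup\ZZ_2=\anchored_k(P,U,i_1)\cup\anchored_k(P,U,i_2)$ lie in $\cycoc_k(P,U)$ by definition of $\anchored_k$. For a reported chain $\Chain(I,(i_2-i_1)/q,q)$ with $I=[p_l\dd p_r]$ a maximal interval of $\ZZ_1$ obeying the guard, I would prove by induction on $t\in[0\dd(i_2-i_1)/q]$ that $I\oplus tq\subseteq\anchored_k(P,U,i_1+tq)$; since $i_1+tq\in\CritPos(\D)$, every link is then in $\cycoc_k(P,U)$. The base case is $I\subseteq\ZZ_1$. For the step, fix $p\in I\oplus tq$ and a triple $(p,p',x)\in\anchored'_k(P,U,i_1+tq)$; the guard, the fact that the offset $p-x$ is $\q$-non-overlap, and \cref{lem:after_locked} together with \cref{obs:special_locked_disj} (which pad the occurrence on both sides with long exact powers of $Q$, beyond the window the guard controls directly) show that $U[p\dd p']$ and $\shift(U[p\dd p'],q)$ are locked-equivalent; \cref{lem: non-overlap pointwise2}\eqref{itA2} then gives $(p+q,p'+q,x)\in\anchored'_k(P,U,i_1+(t+1)q)$, completing the step.

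For completeness, \cref{lem:critical} reduces the task to showing that \cref{algo2} reports $\bigcup_{i\in\CritPos(\D)}\anchored_k(P,U,i)$ for each $\D\in\DD(\q)$. The extreme critical positions $i\in\{i_1,i_2\}$ are handled by the line ``report $\ZZ_1\cup\ZZ_2$''. For $i=i_1+tq$ with $0<t<(i_2-i_1)/q$ and $p\in\anchored_k(P,U,i)$, I would rerun the three-case analysis from the proof of \cref{lem:correctness} on the triad of $\R_k(P,U,i)$ containing $p$: if it is left-$U$-locked (Case~1) then $p\in\anchored_k(P,U,i_2)=\ZZ_2$; if it is right-$U$-locked (Case~2) then $p\in\anchored_k(P,U,i_1)=\ZZ_1$; otherwise (Case~3) it is both left-$V$-locked and right-$V$-locked, and the same descent as in \cref{lem:correctness} gives $p-tq\in\anchored_k(P,U,i_1)=\ZZ_1$, so $p-tq$ lies in some maximal interval $I_0$ of $\ZZ_1$. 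The crucial point is that $I_0$ passes the guard: the triad of $\R_k(P,U,i_1)$ containing $p-tq$ is again left-$V$- and right-$V$-locked, so by \cref{fct:max} its starting-position interval begins at a position $>0$ whose predecessor is not in $\locked(U)$, and \cref{lem:after_locked} (applied at the non-overlap offset $i_1-j_1$) forces that whole interval, together with the boundary slack $p_r+m+k\le|U|$, to avoid $\locked(U)$; these properties survive the merging of adjacent triads into $I_0$. Hence $\Chain(I_0,(i_2-i_1)/q,q)$ is reported, and it contains $(p-tq)\oplus tq\ni p$.

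The main obstacle is this Case~3 bookkeeping, which is the hardest part of \cref{lem:correctness} carried over to the reporting setting: one must pin down exactly which maximal intervals of $\ZZ_1$ receive chains, verify all three guard conditions for precisely those intervals (using the interplay between the maximality structure of \cref{fct:max} and the $\q$-non-overlap property), confirm via locked-equivalence and \cref{lem: non-overlap pointwise2} that each chain relifts its interval to every intermediate critical position, and absorb duplicates and merged triads in the representation of $\anchored_k$. No tools beyond \cref{lem:after_locked}, \cref{lem:left_right}, \cref{lem: non-overlap pointwise2} and \cref{fct:max} are needed, but this combination demands care; the rest of the argument is routine.
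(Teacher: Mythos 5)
The run‑time and output‑size analysis and the soundness sketch follow the paper's approach (with the same offline sorting trick for the guard), so I focus on the completeness argument, where there is a genuine gap.

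In your Case~3 for an intermediate critical position $i=i_1+tq$, you descend the triad $(I,J,L)\in\R_k(P,U,i)$ (left-$V$- and right-$V$-locked) down to $i_1$ and observe that $p-tq$ lies in some maximal interval $I_0$ of $\ZZ_1$. You then assert that ``the triad of $\R_k(P,U,i_1)$ containing $p-tq$ is again left-$V$- and right-$V$-locked'' and conclude that $I_0$ passes the guard. This step is not justified and is in general false. Case~3 of \cref{lem:correctness} only tells you that the \emph{shifted set} $\Triples([p_l-tq\dd p_r-tq],[p'_l-tq\dd p'_r-tq],[x_l\dd x_r])$ is a \emph{subset} of $\anchored'_k(P,U,i_1)$; it does not give you a triad of $\R_k(P,U,i_1)$. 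The maximal triad $(\bar I,\bar J,\bar L)$ of $\R_k(P,U,i_1)$ that actually contains $(p-tq,p'-tq,x)$ can strictly extend the shifted set, and by \cref{fct:max} its endpoints need only be locked \emph{somewhere}: the triad at $i_1$ can perfectly well be right-$U$-locked or left-$U$-locked. In that event $I_0=\bar I$ fails the guard, no chain is emitted from $I_0$, and your argument breaks down. The hand-wave ``these properties survive the merging of adjacent triads into $I_0$'' is exactly where this is swept under the rug.

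The paper resolves this by an explicit second case analysis on the triad $(\bar I,\bar J,\bar L)\in\R_k(P,U,i_1)$ (Cases 1', 2', 3' in the proof of \cref{lem:correctness2}). If $(\bar I,\bar J,\bar L)$ is both left-$V$- and right-$V$-locked, the guard passes and $\Chain(\bar I,\cdot,\cdot)\supseteq\Chain(\{p-tq\},\cdot,\cdot)$ is reported. If it is right-$U$-locked, one shows $\bar p_r\ge (p-tq)+(i_2-i_1)$ (otherwise pushing the right endpoint up some intermediate critical position would place, at the same anchor, a $U$-locked position within $q$ of the right edge of $\bar I$ and a $V$-locked position within $q$ of the right edge of $\bar L$, contradicting the $\q$-non-overlap property); hence $p=(p-tq)+tq\in\bar I\subseteq\ZZ_1$, and it is reported directly, without a chain. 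If it is left-$U$-locked and right-$V$-locked, the same argument as Case~1 of \cref{lem:correctness} pushes the triad upward to $i_2$, so $\Chain(\{p-tq\},\cdot,\cdot)\subseteq\anchored_k(P,U,i_2)=\ZZ_2$ and again $p$ is reported without a chain. Your proof omits the latter two sub-cases entirely, which is precisely the subtlety that forced the paper to decompose $\Chain(J,\cdot,\cdot)$ into $J$ plus single-point chains before doing the second case split. Without this, the ``crucial point'' that $I_0$ passes the guard remains an unproved (and generally incorrect) assertion, so the completeness direction is not established.
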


\begin{proof}
  The complexity of the algorithm is the same as of \cref{algo2dec} except for checking the condition in the if-statement. The condition can be checked offline for all intervals $[p_l \dd p_r]$ in $\ZZ_1$ at once. It suffices to sort the endpoints of locked fragments in $U$ ($\Oh(k)$ integers) together with positions $p_l-1$ and $p_r+m-k$ from all query intervals ($\Oh(k^2)$ integers). The sorting can be done in $\Oh(k^2 \log \log k)$ time~\cite{DBLP:journals/jal/Han04}. Afterwards, we can compute the predecessor and successor of each position $p_l-1$ and $p_r+m-k$ using a simple line sweep, in $\Oh(k^2)$ total time. Over all intervals of offsets $\D$ in $\DD(\q)$, this gives $\Oh(k^4 \log \log k)$ time.

  The output of the algorithm consists of $\Oh(k^4)$ intervals from the sets $\ZZ_1$ and $\ZZ_2$ and $\Oh(k^4)$ interval chains.

  The correctness proof is a continuation of the proof of \cref{lem:correctness}. 
  Cases 1 and 2 stay the same. In Case 3, we have shown that
  \[\Triples([p_l+i_1-i\dd p_r+i_1-i],[p'_l+i_1-i\dd p'_r+i_1-i],[x_l\dd x_r]) \subseteq \anchored_k'(P,U,i_1).\]
  Hence, $[p_l\dd p_r] \subseteq \Chain(J, (i_2-i_1)/q, q)$, where $J=[p_l+i_1-i\dd p_r+i_1-i]$.

    \medskip
    This shows that it is enough to report all positions of $\Chain(J, (i_2-i_1)/q, q)$ to report the interval $[p_l\dd p_r]$.
    Additionally, by induction on $i$ (increasing), we know that all of the positions of $\Chain(J, (i_2-i_1)/q, q)$ are valid solutions. It remains to show that all those positions will be returned by \cref{algo2} even if $J$ is not actually an interval returned by $\anchored_k(P,U,i_1)$.

    Notice that 
    \begin{align*}\Chain(J,(i_2-i_1)/q,q)&=J\cup \Chain([j_l\dd j_r],(i_2-i_1)/q,q)\\&=J\cup \bigcup_{p\in[j_l\dd j_r]}\Chain(\{p\},(i_2-i_1)/q,q),
    \end{align*}
    where $j_r=\max(J)$ and $j_l=\max(\min(J),j_r-q)$.
    
    Since $J$ is reported
    for $i_1$, it is enough to focus on the chain part, for which we know that $[j_l\dd j_r]$ does not intersect any locked fragment in $U$ (since we are in the right-$V$-locked case).

    Take any position $p\in[j_l\dd j_r]$. Since $(p,p',x)\in\anchored'_k(P,U,i_1)$, there must exist an element $(\bar{I},\bar{J},\bar{L})=([\bar{p}_l\dd \bar{p}_r],[\bar{p}'_l\dd \bar{p}'_r],[\bar{x}_l\dd \bar{x}_r])\in\R_k(P,U,i_1)$ such that $(p,p',x)\in \Triples(\bar{I},\bar{J},\bar{L})$.
    Now similarly to the proof of \cref{lem:correctness2} we consider three cases on $(\bar{I},\bar{J},\bar{L})$.

\proofsubparagraph{Case 1'.} If $(\bar{I},\bar{J},\bar{L})$ is both left-$V$-locked and right-$V$-locked, then \cref{algo2} will produce $\Chain(\bar{I},(i_2-i_1)/q,q)$, and hence $\Chain(\{p\},(i_2-i_1)/q,q)$ will be reported.
    
\proofsubparagraph{Case 2'.} If $(\bar{I},\bar{J},\bar{L})$ is right-$U$-locked, then we know that $\bar{p}_r\ge p+(i_2-i_1)$, as otherwise by induction on $i$, for some $i\in\CritPos(D)$ we would have that $(p+(i-i_1),p'+(i-i_1),x)\in \anchored'_k(P,U,i)$, such that $p+(i-i_1)$ is at most $q$ positions away from a locked fragment but $x$ is at most $q$ positions away from a locked fragment in $V$. This would contradict $i-j_1$ being a $(2q+k)$-non-overlap offset.
Hence, $\Chain(\{p\},(i_2-i_1)/q,q)\subseteq \bar{I}$, and thus is reported.

\proofsubparagraph{Case 3'.} If $(\bar{I},\bar{J},\bar{L})$ is right-$V$-locked and left-$U$-locked, then similarly to Case 1 in \cref{lem:correctness}, we can show that
\[\Triples([\bar{p}_l\dd \bar{p}_r+q],[\bar{p}'_l\dd \bar{p}'_r+q],[\bar{x}_l-q\dd \bar{x}_r])\subseteq \anchored'_k(P,U,i_1+q).\] By induction on $i$,
\[\Triples([\bar{p}_l\dd \bar{p}_r+(i_2-i_1)],[\bar{p}'_l\dd \bar{p}'_r +(i_2-i_1)],[\bar{x}_l-(i_2-i_1)\dd\bar{x}_r])\subseteq\anchored'_k(P,U,i_2),\]
and this set contains $\Chain(\{p\},(i_2-i_1)/q,q)$.
\end{proof}

\begin{proof}[Proof of \cref{thm:main_pillar}, decision version]
The reporting version of \cref{thm:main_pillar} follows from the reporting version of the overlap case (\cref{lem:case1}), the correctness and the complexity of \cref{algo2} (\cref{lem:correctness2}),
the usage of \cref{lem:usemesomewhere} for the corner case when $\lambda_k>\frac{m}{2}$, and the reduction to \EPSM (\cref{lem:red}).
\end{proof}

\begin{remark}
In both versions (decision, reporting), the bottleneck of the algorithm's running time is the overlap case, while the most technically demanding part is the non-overlap case.
\end{remark}

\section{\texorpdfstring{$k$-Edit}{k-Edit} CPM in Other Settings}\label{sec:results}
\cref{thm:main_pillar} is stated in the \pillar model. In the standard setting, all \pillar operations can be implemented in $\Oh(1)$ time after $\Oh(n)$ preprocessing \cite[Section 3]{DBLP:journals/corr/abs-2208-08915}; this yields \cref{thm:main}.

We now present our results for the internal, dynamic, fully compressed, and quantum settings.
In each case, in the reporting version of the problem, the output is represented as a union of $\Oh((|T|/|P|) \cdot k^6)$ interval chains.

With the same implementations of operations in the internal setting as in the standard setting, we obtain an efficient implementation. 

\begin{theorem}[Internal Setting]
Given two substrings $P$ and $T$ of a length-$n$ string $S$, reporting and decision versions of $k$-Edit CPM for $P$ and $T$ can be solved in $\Oh((|T|/|P|)k^6)$ time and $\Oh((|T|/|P|)k^5 \log^3 k)$ time, respectively, after $\Oh(n)$ preprocessing on $S$.
\end{theorem}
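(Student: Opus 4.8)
The plan is to obtain the statement as a corollary of \cref{thm:main_pillar}: implement the \pillar model over $S$ with $\Oh(1)$-time operations after $\Oh(n)$-time preprocessing, and reduce the general case to the regime handled by \cref{thm:main_pillar} via a windowing argument, exactly as in the standard setting that yields \cref{thm:main}. First I would invoke the standard internal pattern matching toolbox on $S$: after $\Oh(n)$-time preprocessing, every \pillar operation ($\Extract$, $\Access$, $\Length$, $\LCP$, $\LCP_R$, $\IPM$) whose arguments are fragments of $S$ is answered in $\Oh(1)$ time \cite[Section~3]{DBLP:journals/corr/abs-2208-08915}; nothing in this implementation changes when $P$ and $T$ are themselves given as fragments of $S$.

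Next I would verify that every string touched by the algorithm behind \cref{thm:main_pillar} is either a fragment of $S$ or an $\Oh(1)$-fold concatenation of fragments of $S$: on the text side one only uses fragments of $T$ and the almost-period $Q$ (a short rotation of a fragment of the text, hence a concatenation of two fragments of $S$), while on the pattern side one uses fragments of $P$ together with the string $V$ of \EPSM, which is a fragment of $P_2P_1P_2$ and therefore---as $P$ is a fragment of $S$---a concatenation of at most two fragments of $S$ (and so is each $V^{(x)}$). A \pillar operation on an $\Oh(1)$-fold concatenation of fragments of $S$ reduces to $\Oh(1)$ \pillar operations on fragments of $S$: for $\Extract$, $\Access$, $\Length$, $\LCP$, $\LCP_R$ one simply splits at the constantly many seams, and for $\IPM$ one queries each constituent piece and combines the resulting arithmetic progressions (using that the occurrence set in question is itself a single arithmetic progression). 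Hence the string $P^2$ may be added to the \pillar collection at no asymptotic cost, and the algorithm of \cref{thm:main_pillar} runs verbatim with each of its $\Oh(k^6)$ (resp.\ $\Oh(k^5\log^3 k)$) operations costing $\Oh(1)$ time.

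Finally, for arbitrary $|T|$, I would cover $T$ by $\Oh(|T|/|P|)$ fragments of $S$, each of length $\Theta(|P|)$ and overlapping the next one in more than $|P|+k$ positions. Every circular $k$-edit occurrence of $P$ in $T$ spans at most $|P|+k$ positions, hence lies inside one of these windows; running the algorithm of \cref{thm:main_pillar} with pattern $P$ and each window as the text, and outputting the union of the resulting interval-chain representations, solves $k$-Edit CPM for $P$ and $T$ in $\Oh((|T|/|P|)\,k^6)$ time for the reporting version and $\Oh((|T|/|P|)\,k^5\log^3 k)$ time for the decision version, after $\Oh(n)$ preprocessing of $S$. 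The only mildly delicate point I anticipate is the $\IPM$ reduction in the second step; the remaining steps are routine bookkeeping, and the windowing is identical to the reduction already used within the proof of \cref{thm:main}.
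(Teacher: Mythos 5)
Your proposal is correct and follows the same approach the paper takes (the paper's proof is a single sentence stating that the standard-setting \pillar implementation carries over verbatim to the internal setting); you simply spell out the two implicit ingredients—that every string the algorithm touches is an $\Oh(1)$-fold concatenation of fragments of $S$, so the cited implementation from~\cite[Section~3]{DBLP:journals/corr/abs-2208-08915} applies, and that the windowing used to derive \cref{thm:main} from \cref{thm:main_pillar} works unchanged here. Your flagged concern about $\IPM$ on concatenations is precisely the subtlety the cited implementation is built to handle, so nothing further is needed.
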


Let $\mathcal{X}$ be a growing collection of non-empty persistent strings; it is initially empty, and then undergoes updates by means of the following operations:
\begin{itemize}
    \item $\texttt{Makestring}(U)$: Insert a non-empty string $U$ to $\mathcal{X}$
    \item $\texttt{Concat}(U,V)$: Insert string $UV$ to $\mathcal{X}$, for $U,V\in \mathcal{X}$
    \item $\texttt{Split}(U,i)$: Insert $U[0\dd i)$ and $U[i\dd |U|)$ to $\mathcal{X}$, for $U\in\mathcal{X}$ and $i\in[0\dd |U|)$.
\end{itemize}

By $N$ we denote an upper bound on the total length of all strings in $\mathcal{X}$ throughout all updates executed by an algorithm.
A collection $\mathcal{X}$ of non-empty persistent strings of total length $N$ can be dynamically maintained with operations $\mathtt{Makestring}(U)$,  $\mathtt{Concat}(U,V)$, $\mathtt{Split}(U,i)$ requiring time $\Oh(\log N+|U|)$, $\Oh(\log N)$ and $\Oh(\log N)$, respectively, so that \pillar operations can be performed in time $\Oh(\log^2 N)$.
All stated time complexities hold with probability $1-1/N^{\Omega(1)}$; see \cite{DBLP:conf/soda/GawrychowskiKKL18,FOCS20}. Moreover, Kempa and Kociumaka~\cite[Section 8 in the arXiv version]{DBLP:conf/stoc/KempaK22} presented an alternative deterministic implementation, which supports operations $\texttt{Makestring}(U)$, $\texttt{Concat}(U,V)$, $\texttt{Split}(U,i)$ in $\Oh(|U|\log^{\Oh(1)}\log N)$, $\Oh(\log|UV|\log^{\Oh(1)}\log N)$, and $\Oh(\log|U|\log^{\Oh(1)}\log N)$ time, respectively, so that \pillar operations can be performed in time $\Oh(\log N \log^{\Oh(1)}\log N)$. With these implementations, we obtain the following result.

\begin{theorem}[Dynamic Setting]\label{the:CPMdynamic}
A collection $\mathcal{X}$ of non-empty persistent strings of total length $N$ can be dynamically maintained with operations $\mathtt{Makestring}(U)$, $\mathtt{Concat}(U,V)$, $\mathtt{Split}(U,i)$ requiring time $\Oh(\log N+|U|)$, $\Oh(\log N)$ and $\Oh(\log N)$, respectively, so that, given two strings $P,T\in \mathcal{X}$ and an integer threshold $k>0$,
we can solve $k$-Edit CPM in $\Oh((|T|/|P|)\cdot k^{6}\log^2 N)$ time for the reporting variant and $\Oh((|T|/|P|)\cdot k^{5} \log^3 k\log^2 N)$ time for the decision variant.
All stated time complexities hold with probability $1-1/N^{\Omega(1)}$.
Randomization can be avoided at the cost of a $\log^{\Oh(1)} \log N$ multiplicative factor in all the update times, with $k$-Edit CPM queries answered in $\Oh((|T|/|P|)\cdot k^{6}\log N\log^{\Oh(1)}\log N)$ time (reporting version) or $\Oh((|T|/|P|)\cdot k^{5} \log^3 k\log N\log^{\Oh(1)}\log N)$ time (decision version).
\end{theorem}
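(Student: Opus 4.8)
The plan is to derive \cref{the:CPMdynamic} as a black-box combination of the \pillar-model algorithm of \cref{thm:main_pillar} with known dynamic-strings data structures; no new combinatorics is needed. The maintenance part of the statement -- the running times of $\mathtt{Makestring}$, $\mathtt{Concat}$, $\mathtt{Split}$ and the resulting $\Oh(\log^2 N)$ (randomized, \cite{DBLP:conf/soda/GawrychowskiKKL18,FOCS20}) or $\Oh(\log N\,\log^{\Oh(1)}\log N)$ (deterministic, \cite{DBLP:conf/stoc/KempaK22}) cost per \pillar operation -- is quoted verbatim from those works. So the whole task reduces to (i) reducing $k$-Edit CPM on $P,T$ to $\Oh(|T|/|P|)$ subproblems meeting the precondition of \cref{thm:main_pillar}, and (ii) verifying that every string the algorithm of \cref{thm:main_pillar} operates on is a fragment of a string in $\mathcal{X}$.

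For (i) I would follow the proof of \cref{thm:main}: cover $T$ by $\Oh(|T|/|P|)$ overlapping windows of length $\Theta(|P|)$ -- to each of which \cref{thm:main_pillar} applies and such that every circular $k$-edit occurrence of $P$ in $T$ appears within a single window -- each being a fragment of $T\in\mathcal{X}$ and hence a legitimate \pillar-model input. I then invoke \cref{thm:main_pillar} on each (window, $P$) pair, translate the reported starting positions back to $T$ by adding the window offset, and take the union. For the reporting version this yields $\Oh((|T|/|P|)k^6)$ interval chains (duplicates coming from overlapping windows are harmless, as we only need a representation of $\cycoc_k(P,T)$); for the decision version, the first answer that is not $\mathit{none}$ is returned.

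For (ii), note that the reduction of \cref{lem:red} manipulates, besides fragments of $P$ and of $T$, only fragments of $P^2$ (the string $V=P^2[\alpha\dd\beta]$ and $P_2P_1P_2$, which is a suffix of $P^2$). A single $\mathtt{Concat}(P,P)$ performed at the start of a query, costing $\Oh(\log N)$ (resp.\ $\Oh(\log N\,\log^{\Oh(1)}\log N)$), inserts $P^2$ into $\mathcal{X}$ and makes all of these fragments of a maintained string; since this at most doubles the contribution of $P$ to the total length, the parameter $N$ is affected only by a constant factor. Cyclic rotations appearing in the algorithm -- the replacement of $P$ by $\rot^y(P)$ in \cref{sec:case234} and the rotation $\rot^r(Q)$ in \EPSM -- are handled, as already observed there, by a wrapper supporting all required \pillar primitives on a rotation with $\Oh(1)$ overhead, the point being that the algorithm never issues an $\IPM$ query on a rotated string.

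Putting it together, \cref{thm:main_pillar} performs $\Oh(k^6)$ (reporting) or $\Oh(k^5\log^3 k)$ (decision) \pillar operations per window, hence $\Oh((|T|/|P|)k^6)$ resp.\ $\Oh((|T|/|P|)k^5\log^3 k)$ in total; multiplying by the per-operation cost of the chosen data structure and absorbing the single $\mathtt{Concat}$ gives the four claimed running times, the probability bound of the randomized version being inherited from \cite{DBLP:conf/soda/GawrychowskiKKL18}. The one point that genuinely needs care is (ii): one must confirm that, after substituting rotations and the doubled pattern, all $\IPM$ calls are still to fragments of non-rotated strings of $\mathcal{X}$ (namely $T$ and $P^2$), since $\IPM$ is the sole \pillar primitive that cannot be simulated on a cyclic rotation with $\Oh(1)$ overhead; the windowing and the arithmetic are routine.
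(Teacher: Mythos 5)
Your proposal is correct and takes essentially the same route as the paper: plug the known dynamic-strings implementations of the \pillar primitives (randomized from~\cite{DBLP:conf/soda/GawrychowskiKKL18,FOCS20}, deterministic from~\cite{DBLP:conf/stoc/KempaK22}) into \cref{thm:main_pillar} and multiply the \pillar-operation count by the per-operation cost. You go slightly beyond the paper's terse derivation by explicitly flagging that the reduction manipulates fragments of $P^2$ (handled by one extra $\mathtt{Concat}$) and that rotations must avoid $\IPM$ calls -- both legitimate details the paper leaves implicit -- but the argument is the same.
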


\newcommand{\gen}{\mathit{gen}}

A straight line program (SLP) is a context-free grammar $G$ that consists of a set $\Sigma$ of terminals and a set $N_G = \{A_1,\dots,A_n\}$ of non-terminals such that each $A_i \in N_G$ is associated with a unique production rule
$A_i \rightarrow f_G(A_i) \in (\Sigma \cup \{A_j : j < i\})^*$. We can assume without loss of generality that each production rule is of the form $A \rightarrow BC$ for some symbols $B$ and~$C$ (that is, the given SLP is in Chomsky normal form).
Every symbol $A \in S_G:=N_G \cup\Sigma$ generates a unique string, which we denote by $\gen(A) \in \Sigma^*$. The string $\gen(A)$ can be obtained from $A$ by repeatedly replacing each non-terminal with its production.
We say that~$G$ generates $\gen(G) := \gen(A_n)$.

In the fully compressed setting, given a collection of straight-line programs (SLPs) of total size $n$ generating strings
of total length $N$, each \pillar operation can be performed in $\Oh(\log^2 N \log \log N)$ time after an $\Oh(n \log N)$-time preprocessing \cite[Section 3]{DBLP:journals/corr/abs-2208-08915}. If we applied \cref{thm:main} directly in the fully compressed setting, we would obtain $\Omega(N/M)$ time, where $N$ and $M$ are the uncompressed lengths of the text and the pattern, respectively. Instead, we can adapt an analogous procedure provided in~\cite[Section 7.2]{FOCS20}
for (non-circular) pattern matching with edits to obtain the following result.

\begin{theorem}[Fully Compressed Setting]
Let $G_T$ denote a straight-line program of size~$n$ generating a string $T$, let $G_P$ denote a
straight-line program of size $m$ generating a string~$P$, let $k > 0$ denote an integer threshold, and set $N := |T|$ and $M := |P|$.
We can solve $k$-Edit CPM in $\Oh(m \log N + n k^{6} \log^2 N \log \log N)$ time (counting version) or $\Oh(m \log N + n k^{5} \log^3 k \log^2 N \log \log N)$ time (decision version). A representation of the occurrences in the form of interval chains can be returned in $\cO((N/M)\cdot k^6)$ extra time.
\end{theorem}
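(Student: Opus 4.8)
The plan is to adapt the machinery of~\cite[Section 7.2]{FOCS20}, which handles \emph{non-circular} $k$-edit pattern matching on straight-line programs, by plugging in the \pillar-model routine behind \cref{thm:main_pillar} in place of its non-circular counterpart. Throughout we may assume $k < M$ (otherwise every position of $T$ is a circular $k$-edit occurrence and the output is the single interval $[0\dd N)$), and that every symbol of $G_T$ is reachable from $A_n$, so that $\gen(A)$ is a substring of $T$ for every symbol $A$ of $G_T$.

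First I would preprocess the two SLPs in $\Oh((n+m)\log N)$ time with the data structure of~\cite[Section 3]{DBLP:journals/corr/abs-2208-08915}, so that every \pillar operation on fragments of $T$ and of $P$ costs $\Oh(\log^2 N\log\log N)$. Next, for each production $A\to BC$ of $G_T$ let $\ell_A := |\gen(B)|$ be the position of the induced ``cut'' inside $\gen(A)$, and let the \emph{context} $c_A$ be the fragment of $\gen(A)$ (hence of $T$) spanning the at most $M+k$ positions on either side of $\ell_A$; thus $|c_A| = \Oh(M)$. The combinatorial core of the argument is that every circular $k$-edit occurrence $T[p\dd p']$ of $P$ has length at most $m+k\le M+k$, and if $A$ is the shallowest node of the parse tree of $T$ whose derived fragment contains $T[p\dd p']$ then $\ell_A$ lies in $(p\dd p']$; consequently $T[p\dd p']$ is contained in the corresponding occurrence of $c_A$ in $T$ and, read inside $c_A$, it is a circular $k$-edit occurrence of $P$ that straddles the centre of $c_A$.

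Hence it suffices to run the algorithm of \cref{thm:main_pillar} with text $c_A$ and pattern $P$ for each of the $\Oh(n)$ productions, and keep only the output positions that can be extended into an occurrence straddling the centre cut (following~\cite[Section 7.2]{FOCS20}, this is a filtering of each interval of reported positions to a sub-interval; over-reporting within a context is harmless because $c_A$ is a substring of $T$, while the filtering guarantees that every genuine occurrence of $P$ in $T$ is ``owned'' by exactly one context). For the decision version, $\Oh(n)$ runs of the decision variant settle the query, each using $\Oh(k^5\log^3 k)$ \pillar operations. For the reporting version, each run contributes $\Oh(k^6)$ interval chains, which we store relative to the start of $\gen(A)$; this yields a compressed representation of $\cycoc_k(P,T)$ of size $\Oh(n k^6)$. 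Multiplying the $\Oh(nk^6)$ (resp.\ $\Oh(nk^5\log^3 k)$) \pillar operations by the $\Oh(\log^2 N\log\log N)$ cost per operation, and adding the $\Oh((n+m)\log N)$ preprocessing (whose $\Oh(n\log N)$ part is absorbed), gives the stated running times. To actually list all circular $k$-edit occurrences of $P$ in $T$ as interval chains, I would then traverse $G_T$ top-down as in~\cite[Section 7.2]{FOCS20}, emitting at each non-terminal its stored chains shifted by the current offset; this reports every occurrence once, in time $\cO(n + |\cycoc_k(P,T)|) = \cO((N/M)\cdot k^6)$ by \cref{thm:main}.

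The step I expect to be the main obstacle is making the reduction to $\Oh(n)$ \pillar-model subproblems fully rigorous in the circular setting: verifying the shallowest-ancestor claim, checking that windowing by $\pm(M+k)$ around each cut loses no occurrence and that the per-context filtering is both sound and exhaustive, and checking that the stored per-context outputs glue into a correct, suitably non-redundant compressed representation of $\cycoc_k(P,T)$ that can be expanded into interval chains over $T$ within the claimed extra time. Once these bookkeeping points are settled, the remainder is a routine substitution of \cref{thm:main_pillar} into the framework of~\cite[Section 7.2]{FOCS20}.
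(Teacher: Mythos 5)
Your proposal follows the paper's approach: preprocess the SLPs for $\pillar$ operations in $\Oh((n+m)\log N)$ time, then adapt the SLP-decomposition of~\cite[Section~7.2]{FOCS20} by running the $\pillar$-model circular matcher (\cref{thm:main_pillar}) once per production on the length-$\Oh(M)$ context around each cut, charging each genuine occurrence to the shallowest parse-tree node whose derived fragment contains it. The per-production cost, combined with the $\Oh(\log^2 N \log\log N)$ cost per $\pillar$ operation, yields the stated running times. This is precisely what the paper does (though the paper only gestures at the adaptation rather than spelling it out).

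One place where your accounting is off, though the conclusion survives: in the final expansion you write ``$\cO(n + |\cycoc_k(P,T)|) = \cO((N/M)\cdot k^6)$.'' This does not type-check --- $|\cycoc_k(P,T)|$ counts \emph{positions}, which can be $\Theta(N)$, whereas $\cO((N/M)\cdot k^6)$ bounds the number of \emph{interval chains} in the representation. The correct argument is structural: only parse-tree nodes $A$ with $|\gen(A)|\geq M-k$ can emit chains, these nodes form a subtree of the parse tree closed under taking ancestors, its leaves correspond to disjoint fragments of $T$ of length $\Omega(M)$ so there are $\Oh(N/M)$ of them, and hence $\Oh(N/M)$ nodes in total, each contributing $\Oh(k^6)$ chains. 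The traversal therefore visits $\Oh(N/M)$ parse-tree nodes (not $\Oh(n)$ SLP symbols) and emits $\Oh((N/M)k^6)$ chains. Relatedly, note the theorem's first time bound is labelled ``counting version,'' not ``reporting version'': counting is obtained by multiplying each production's per-context count by the multiplicity of that symbol in the parse tree (computable bottom-up in $\Oh(n)$ time), and only the $\cO((N/M)\cdot k^6)$ extra step produces the chain representation over $T$. With these adjustments your argument lines up with the intended proof.
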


We say an algorithm on an input of size $n$ succeeds \emph{with high probability} if the success probability can be made at least $1-1/n^c$ for any desired constant $c>1$.

In what follows, we assume the input strings can be accessed in a quantum query model~\cite{AMB04,DBLP:journals/tcs/BuhrmanW02}.
We are interested in the time complexity of our quantum algorithms~\cite{BBCplus}.

\begin{observation}[{\cite[Observation 2.3]{DBLP:conf/soda/JinN23}}]
For any two strings $S,T$ of length at most $n$,
$\LCP(S, T)$ or $\LCP_R(S, T)$ can be computed in $\ctO(\sqrt{n})$ time in the quantum model with high probability.
\end{observation}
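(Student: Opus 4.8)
The plan is to reduce the computation of $\LCP(S,T)$ to locating the \emph{first mismatch} between $S$ and $T$, that is, the smallest index $i \in [0\dd\min(|S|,|T|))$ with $S[i] \ne T[i]$; if no such index exists, then $\LCP(S,T)=\min(|S|,|T|)$. Since both strings are accessible in the quantum query model, the predicate $f(i) = [\,S[i]\ne T[i]\,]$ can be evaluated with $\Oh(1)$ quantum queries, and hence a single Grover search decides in $\ctO(\sqrt{r})$ time whether a given interval of $r$ positions contains a mismatch (the $\ctO(\cdot)$ absorbing the $\Oh(\log n)$ repetitions needed to amplify the constant success probability of Grover's algorithm to high probability).

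To actually find the \emph{first} such position, I would use the standard quantum ``minimum marked element'' routine: first an exponential (galloping) search over the blocks $[0\dd 1),[1\dd 2),[2\dd 4),\ldots,[2^{j-1}\dd 2^j),\ldots$, running the Grover detector on each block until the first block $B$ that contains a mismatch is found; then a binary search inside $B$, at each step running the Grover detector on the current left half to decide which half to recurse into. Because all block sizes and all binary-search interval sizes processed are bounded by the answer $\ell\le n$ and form (two) geometric progressions, the running times $\ctO(\sqrt{2^{j-1}})$ sum to a geometric series dominated by its largest term $\ctO(\sqrt{\ell})=\ctO(\sqrt{n})$. A union bound over the $\Oh(\log n)$ Grover calls shows that the whole procedure succeeds with high probability.

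For $\LCP_R(S,T)$ (the longest common suffix) the same algorithm is applied to the reversals $\tilde S$ and $\tilde T$; since the $i$-th letter of $\tilde S$ is just $S[|S|-1-i]$, quantum query access to $\tilde S$ and $\tilde T$ follows from that to $S$ and $T$ via $\Oh(1)$ index arithmetic, so the cost is again $\ctO(\sqrt{n})$.

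The main obstacle I expect is the probabilistic bookkeeping rather than any combinatorial difficulty: Grover's search only \emph{detects} the presence of a marked element with constant probability, so one has to (i) amplify each detection call to success probability $1-1/\mathrm{poly}(n)$, which costs the logarithmic factors hidden in $\ctO(\cdot)$, and (ii) argue, via a union bound over the logarithmically many calls, that the exponential-then-binary search never follows a wrong branch, all while keeping the geometric-series argument intact so that the total time stays $\ctO(\sqrt{n})$.
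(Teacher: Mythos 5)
Your proof is correct and uses the standard folklore approach (exponential/galloping search followed by binary search, each step a Grover detection for a mismatch, with error amplification), which is essentially the same argument as in the cited source~\cite{DBLP:conf/soda/JinN23}; the paper itself just cites the observation without reproving it.
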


Hariharan and Vinay~\cite{DBLP:journals/jda/HariharanV03} gave a near-optimal quantum algorithm for the decision version of exact PM. We formalize this next.

\begin{theorem}[\cite{DBLP:journals/jda/HariharanV03}]\label{the:quantumPM}
The decision version of PM can be solved in $\ctO(\sqrt{n})$ time in the quantum model with high probability.
If the answer is YES, then the algorithm returns a witness occurrence.
\end{theorem}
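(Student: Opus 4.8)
Since this is the standard quantum pattern-matching bound, the plan is to follow the Hariharan--Vinay approach: a window decomposition resolved by Grover search, combined with the $\ctO(\sqrt{n})$-time $\LCP$/$\LCP_R$ primitive from the observation above and a deterministic-sampling pre-filter. First I would reduce to the case $|T| \le 2|P|$: cover $T$ by $\Oh(n/m)$ windows of length $2m$ with step $m$, so that every occurrence of $P$ in $T$ is contained in one window, and run a Grover search over the windows whose oracle, on a given window, invokes the per-window procedure described next. If that procedure runs in $\ctO(\sqrt m)$ quantum time and returns a witness, the Grover layer costs $\ctO(\sqrt{n/m}\cdot \sqrt m) = \ctO(\sqrt n)$ and returns a witness occurrence; running amplitude amplification on top of a bounded-error oracle costs only an $\Oh(\log m)$ overhead.

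\textbf{Inside a window.} Let $T'$ be a window, $|T'| \le 2m$. First determine whether $P$ has a period at most $m/2$, and if so extract its primitive root $Q$; this is itself a pattern-matching question on a pattern of at most half the length (roughly, ``where does $P[0\dd\lceil m/2\rceil)$ reoccur near the start of $P$?''), so a recursive call settles it, giving the recurrence $T(m) = \ctO(\sqrt m) + T(m/2) = \ctO(\sqrt m)$. If $P$ is aperiodic (smallest period $> m/2$), any two occurrences of $P$ in $T'$ are more than $m/2$ apart, so there are $\Oh(1)$ of them; the difficulty is that a direct Grover search over the $\Theta(m)$ candidate start positions, each verified by an $\ctO(\sqrt m)$-time $\LCP$ query, would cost $\ctO(m)$. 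So I would first install a \emph{deterministic sample} of $P$ (in the sense of Vishkin): a set of $\Oh(\log m)$ pattern positions whose match at a text position is testable with $\Oh(\log m)$ character accesses and such that any two surviving text positions are $\ge m/2$ apart. Partitioning $T'$ into $\Oh(1)$ sub-intervals of length $< m/2$, each holds at most one survivor, which a Grover search with the cheap $\Oh(\log m)$-cost test locates in $\ctO(\sqrt m)$ time, after which a single $\LCP$ query verifies it against $P$. If instead $P$ is periodic with primitive root $Q$ of length $q \le m/2$, the occurrences of $P$ in $T'$ fall into $\Oh(m/q)$ arithmetic progressions of difference $q$, one per maximal $q$-periodic run of $T'$; I would reduce the search to pattern matching with the strictly shorter pattern $Q$ (handled by the recursion), supplemented by $\LCP$-based run-length tests, so that exhibiting a single long-enough, correctly phased $q$-periodic run of $T'$ yields a $P$-occurrence.

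\textbf{Wrap-up and main obstacle.} Every ingredient is bounded-error (Grover search, amplitude amplification); composing the $\Oh(\log m)$ recursion levels with the outer search multiplies running time and failure probability by only $\mathrm{poly}\log$ factors, which is absorbed into $\ctO(\cdot)$ and into the ``with high probability'' guarantee, and a witness is threaded back through the recursion and the Grover layers. I expect the main obstacle to be exactly the step glossed over above: constructing the deterministic sample (and carrying out the analogous bookkeeping in the periodic case) within the $\ctO(\sqrt m)$ budget, since the classical construction of Vishkin's sample reads all of $P$ and hence takes $\Omega(m)$ time. Making it quantum-efficient---building the sample incrementally via Grover searches that examine only $\Oh(\sqrt m)$ pattern positions in superposition, using aperiodicity to certify that each step eliminates a constant fraction of the surviving shifts---is precisely the technical heart of the cited algorithm, so for the purposes of this paper the statement is invoked as a black box from~\cite{DBLP:journals/jda/HariharanV03}.
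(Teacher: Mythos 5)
The paper does not prove this theorem; it is stated as a citation to Hariharan and Vinay \cite{DBLP:journals/jda/HariharanV03} and used as a black box. Your closing sentence correctly identifies this, and for the purposes of this paper that is the appropriate level of treatment.

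Your sketch of the underlying Hariharan--Vinay algorithm is a reasonable reconstruction of its general shape: the length-$2m$ window cover with a Grover layer over $\Oh(n/m)$ windows giving the $\ctO(\sqrt{n/m})\cdot\ctO(\sqrt m)=\ctO(\sqrt n)$ product, the periodicity dichotomy inside a window, and the use of a sparse filter (deterministic sample) in the aperiodic case all appear in, or closely resemble, ingredients of that paper. However, as you yourself flag, the step of constructing such a filter within an $\ctO(\sqrt m)$ quantum budget is a genuine gap: Vishkin's classical deterministic-sample construction reads $\Omega(m)$ characters, and building it incrementally via Grover while certifying a constant-fraction elimination of surviving shifts per round is precisely the nontrivial technical content of the cited work. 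The periodic branch of your sketch is also hand-waved (``supplemented by $\LCP$-based run-length tests'') without an argument that the recursion stays within budget or a clean accounting of how many runs must be examined. Since you do not supply either piece, your attempt is an informed summary of what a proof of the theorem must accomplish rather than a proof; that is adequate here because the paper itself treats the result as an external citation, but it would not suffice if the theorem had to be established from first principles.
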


By employing~\cref{the:quantumPM} and binary search to find the period of $S$~\cite{DBLP:conf/soda/KociumakaRRW15} and thus its full list of occurrences expressed as an arithmetic progression in $T$, we obtain the following.

\begin{observation}
For any two strings $S,T$ of length at most $n$, with $|T| \le 2|S|$,
$\IPM(S, T)$ can be computed in $\ctO(\sqrt{n})$ time in the quantum model with high probability.
\end{observation}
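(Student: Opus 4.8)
The plan is to combine the quantum exact-matching algorithm of~\cref{the:quantumPM} with the quantum $\LCP$ primitive from the preceding observation, relying on the classical structural fact (see e.g.~\cite{DBLP:conf/soda/KociumakaRRW15}) that when $|T|\le 2|S|$ the set of occurrences of $S$ in $T$ is empty, a single position, or an arithmetic progression whose common difference is the shortest period $p$ of $S$. This is precisely the output format assumed by the $\IPM$ operation of the \pillar model, so it suffices to determine the leftmost occurrence $i_0$, the period $p$, and the number of terms.

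First I would decide by a single call to~\cref{the:quantumPM} whether $S$ occurs in $T$ at all; if not, return the empty progression. Otherwise, to find the leftmost occurrence $i_0$, I would binary search over prefix lengths $\ell$, testing with~\cref{the:quantumPM} whether $S$ occurs in $T[0\dd \ell)$; this pins down $i_0$ with $\Oh(\log n)$ calls. To find the shortest period $p$ of $S$, I would binary search over $\ell\in[1\dd |S|]$, using that $p\le \ell$ if and only if $S$ occurs in $(SS)[1\dd |S|+\ell)$ (an occurrence of $S$ in $SS$ at a position in $[1\dd\ell]$ is exactly a period of $S$ of size at most $\ell$, and the predicate is monotone in $\ell$); each test is a call to~\cref{the:quantumPM} on a string of length below $2|S|$, so $\Oh(\log|S|)$ calls determine $p$. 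Alternatively, $p$ can be read off as the gap between the two leftmost occurrences of $S$ in $T$, found by a second prefix binary search, avoiding the detour through $SS$.

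It then remains to count the occurrences. If $i_0+p+|S|>|T|$, there is exactly one occurrence and we output $\{i_0\}$. Otherwise, I would compute $\ell=\LCP\big(T[i_0\dd |T|),\,T[i_0+p\dd |T|)\big)$ with one call to the quantum $\LCP$ primitive; then $T[i_0\dd i_0+p+\ell)$ is the maximal window starting at $i_0$ that has period $p$. Since $S$ has period $p$ and occurs at $i_0$ one checks $\ell\ge |S|-p$, that for every $0\le j\le t:=\big\lfloor (\ell+p-|S|)/p\big\rfloor$ the fragment $T[i_0+jp\dd i_0+jp+|S|)$ lies inside this periodic window and hence equals $S$, and that no occurrence can start at $i_0+jp$ for $j>t$ (consecutive occurrences being at distance $p\le|S|$, overlapping occurrences would chain their period-$p$ periodicity past the window, a contradiction). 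Hence the occurrences are exactly $i_0,i_0+p,\dots,i_0+tp$, which I output as the arithmetic progression.

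For the running time, the algorithm makes $\Oh(\log n)$ quantum exact-matching queries and $\Oh(1)$ quantum $\LCP$ queries, each costing $\ctO(\sqrt n)$ since every argument string is a substring of $T$ or of $SS$ and thus has length $\Oh(n)$; the total is $\ctO(\sqrt n)$. Boosting the per-query failure probability to $n^{-\Omega(1)}$ and taking a union bound over the $\Oh(\log n)$ queries keeps the overall success probability high. I do not expect a real obstacle here: the only points needing care are that~\cref{the:quantumPM} may return an arbitrary rather than the leftmost witness (handled by the binary searches) and the off-by-one bookkeeping in the formula for $t$, both routine once the arithmetic-progression structure of the occurrence set is invoked.
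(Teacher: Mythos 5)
Your overall plan (one call to quantum exact matching to decide existence, $\Oh(\log n)$ binary-search calls to locate occurrences and the period, one $\LCP$ call, then read off the arithmetic progression) is exactly what the paper's one-sentence proof gestures at, so the approach is the same. However, the first of your two proposed ways to obtain $p$ is incorrect: the smallest $i\in[1\dd|S|]$ at which $S$ occurs in $SS$ is the length of the \emph{primitive root} of $S$ (i.e.\ $|W|$ when $S=W^k$ with $W$ primitive, and $|S|$ when $S$ is primitive), not $\per(S)$. For a primitive $S$ with a short period, such as $S=\texttt{abcab}$ with $\per(S)=3$, the $SS$ test returns $|S|=5$; plugging $p=5$ into your counting step on $T=\texttt{abcabcab}$ gives $\ell=0$ and misses the occurrence at position~$3$. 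Your fallback --- taking $p$ to be the gap between the two leftmost occurrences found by a second prefix binary search --- is the one that actually works, and is in fact preferable to ``the period of $S$'' verbatim: with $|T|\le 2|S|$ and exactly two occurrences, the gap between them need not equal $\per(S)$ (e.g.\ $T=\texttt{abcababcab}$, $S=\texttt{abcab}$ has occurrences $\{0,5\}$ while $\per(S)=3$), yet the gap is precisely the common difference of the output progression and does induce a period of $S$ whenever the two copies overlap, which is what your $\LCP$-window argument needs. So: commit to the gap-based computation of $p$, drop the $SS$ detour, and the argument is sound and matches the paper's.
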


All other \pillar operations are performed trivially in 
$\cO(1)$ quantum time. Thus while all \pillar operations can be implemented in $\Oh(1)$ time after $\Oh(n)$-time preprocessing in the standard setting by a classic algorithm, in the quantum setting, all \pillar operations can be implemented in $\ctO(\sqrt{m})$ quantum time \emph{with no preprocessing}, as we always deal with strings of length $\cO(m)$. We obtain the following results.

\begin{theorem}[Quantum Setting]\label{thm:quantum}
The reporting version of the $k$-Edit CPM problem can be solved in $\ctO((n/\sqrt{m})k^6)$ time in the quantum model with high probability.
The decision version of the $k$-Edit CPM problem can be solved in $\ctO((n/\sqrt{m})k^5)$ time in the quantum model with high probability.
\end{theorem}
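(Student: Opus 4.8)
The plan is to derive \cref{thm:quantum} as a corollary of \cref{thm:main_pillar} together with the quantum implementation of the \pillar operations established just above, in exact analogy with how \cref{thm:main} follows from \cref{thm:main_pillar} in the standard setting. Recall two ingredients: (i) in the quantum model, with \emph{no} preprocessing, every \pillar operation on strings of length $\cO(m)$ runs in $\ctO(\sqrt{m})$ time with high probability; and (ii) \cref{thm:main_pillar} solves $k$-Edit CPM on a text of length $\Theta(m)$ using $\cO(k^6)$ \pillar operations in the reporting case and $\cO(k^5\log^3 k)$ \pillar operations in the decision case.

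First I would reduce a general instance (text length $n$, pattern length $m$) to $\cO(n/m)$ instances on texts of length $\Theta(m)$, using the same windowing that underlies the passage from \cref{thm:main_pillar} to \cref{thm:main}. We may assume $k<m$, since otherwise every position of $T$ is a circular $k$-edit occurrence and the answer is trivial; hence every circular $k$-edit occurrence of $P$ in $T$ has length $\cO(m)$, and $T$ can be covered by $\cO(n/m)$ overlapping windows, each a fragment of $T$ of length $\Theta(m)$ satisfying the hypothesis of \cref{thm:main_pillar}, such that every such occurrence lies inside some window. For each window I would run the \pillar-model algorithm of \cref{thm:main_pillar}; the union of the interval-chain outputs over all windows (after discarding duplicates) is the required output, and for the decision version it suffices to stop at the first window that produces a witness. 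Charging $\ctO(\sqrt{m})$ quantum time to each \pillar call, the total running time is $\cO(n/m)\cdot\cO(k^6)\cdot\ctO(\sqrt{m})=\ctO((n/\sqrt{m})k^6)$ for the reporting version and $\cO(n/m)\cdot\cO(k^5\log^3 k)\cdot\ctO(\sqrt{m})=\ctO((n/\sqrt{m})k^5)$ for the decision version, the $\log^3 k$ factor being absorbed into $\ctO(\cdot)$; the time to write down the $\cO((n/m)k^6)$ interval chains is dominated by this bound.

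The only points that need a little care are the probabilistic bookkeeping and the correctness of the windowing reduction, neither of which is a genuinely new obstacle. For the former, the total number of quantum \pillar calls over all windows is $\mathrm{poly}(n)$, so boosting the success probability of each call to $1-n^{-c'}$ for a sufficiently large constant $c'$ and taking a union bound makes the entire computation succeed with high probability. For the latter, the claim that the windows capture every circular $k$-edit occurrence and that the union of interval-chain outputs (with duplicates removed) has the stated form is already verified in the deduction of \cref{thm:main}, so I would simply invoke that argument verbatim.
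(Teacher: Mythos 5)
Your proposal is correct and mirrors the paper's (implicit) argument exactly: both derive the quantum bounds by plugging the $\ctO(\sqrt{m})$ cost per quantum \pillar operation (established just before the theorem) into \cref{thm:main_pillar}, applied to $\cO(n/m)$ length-$\Theta(m)$ windows, with $\log^3 k = \Oh(\log^3 n)$ absorbed into $\ctO(\cdot)$. The probabilistic bookkeeping via boosting and a union bound over $\mathrm{poly}(n)$ calls is the standard move and is handled correctly.
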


\bibliographystyle{plainurl}
\bibliography{references}

\end{document}